\renewcommand{\backref}[1]{}
\renewcommand{\backrefalt}[4]{%
\ifcase #1
\or $^{#2}$%
\else $^{#2}$%
\fi}
\newtheorem{theorem}{Theorem}[section]
\newtheorem*{remark}{Remark}
\newtheorem{proposition}[theorem]{Proposition}
\newtheorem{lemma}[theorem]{Lemma}
\newtheorem{corollary}[theorem]{Corollary}
\newtheorem{definition}[theorem]{Definition}
  \newtheorem*{rep@theorem}{\rep@title}
  \newcommand{\newreptheorem}[2]{%
  \newenvironment{rep#1}[1]{%
  \def\rep@title{#2 \ref{##1}}%
  \begin{rep@theorem}}%
  {\end{rep@theorem}}}
\DeclareMathOperator{\poly}{poly}
\newcommand{\C}{\ensuremath{\mathbb{C}}}
\newcommand{\N}{\ensuremath{\mathbb{N}}}
\newcommand{\R}{\ensuremath{\mathbb{R}}}
\newcommand{\mc}[1]{\ensuremath{\mathcal{#1}}\xspace}
\newcommand{\ket}[1]{|#1\rangle}
\newcommand{\id}{\mathbb{1}}
\newcommand{\Y}{\mathcal{Y}}
\newcommand{\X}{\mathcal{X}}
\newcommand{\HH}{\mathcal{H}}
\newcommand{\LH}{L(\mathcal H)}
\newcommand{\KK}{\mathcal{K}}
\newcommand{\eps}{\varepsilon} 
\renewcommand{\epsilon}{\varepsilon} 
\DeclareMathOperator{\cb}{cb}
\DeclareMathOperator{\im}{Im}
\newcommand{\ldc}{L(\C^d)}
\newcommand{\cbdeg}{\text{cb-deg}}
\newcommand{\bmdeg}{\text{bm-deg}}
\renewcommand{\epsilon}{\varepsilon}
\newcommand{\pmset}[1]{\{-1,1\}^{#1}} 
\newcommand{\bset}[1]{\{0,1\}^{#1}} 
\newcommand{\st}{:\,} 
\newcommand{\ie}{{i.e.}}
\renewcommand{\Pr}{\mbox{\rm Pr}}
\DeclareMathOperator{\diag}{diag}
\DeclareMathOperator{\Diag}{Diag}
\newcommand{\Exp}{\mathbf{E}}
\DeclareMathOperator{\Span}{Span}
\newcommand{\beq}{\begin{equation}}
\newcommand{\eeq}{\end{equation}}
\newcommand{\beqn}{\begin{equation*}}
\newcommand{\eeqn}{\end{equation*}}
\newcommand{\beqr}{\begin{eqnarray}}
\newcommand{\eeqr}{\end{eqnarray}}
\newcommand{\beqrn}{\begin{eqnarray*}}
\newcommand{\eeqrn}{\end{eqnarray*}}
\newcommand{\bmline}{\begin{multline}}
\newcommand{\emline}{\end{multline}}
\newcommand{\bmlinen}{\begin{multline*}}
\newcommand{\emlinen}{\end{multline*}}
\begin{document}
\title{Quantum Query Algorithms are Completely Bounded Forms}
\author{
Srinivasan Arunachalam\thanks{QuSoft, CWI and University of Amsterdam, the Netherlands. Supported by ERC Consolidator Grant QPROGRESS. E-mail: \texttt{arunacha@cwi.nl}}
\and
Jop Bri\"et\thanks{CWI, QuSoft. 
Supported by a VENI grant and the Gravitation-grant NETWORKS-024.002.003 from the Netherlands Organisation for Scientific Research~(NWO). E-mail: \texttt{j.briet@cwi.nl}}\\
\and
Carlos Palazuelos\thanks{Facultad de C.C. Matematicas, UCM. Instituto de Ciencias Matematicas, Madrid Spain. 
Supported by the Ramon y Cajal program (RYC-2012-10449), the Spanish MINECO MTM2014-54240-P, Comunidad de Madrid (QUITEMAD+ Project S2013/ICE-2801) and ICMAT Severo Ochoa Grant No. SEV-2015-0554. E-mail: \texttt{carlospalazuelos@mat.ucm.es}}
}
\date{}
\maketitle

\begin{abstract}
 We prove a characterization of $t$-query quantum algorithms in terms of the unit ball of a space of degree-$2t$ polynomials. 
Based on this, we obtain a refined notion of approximate polynomial degree that equals the quantum query complexity, answering a question of Aaronson et al.\ (CCC'16). 
Our proof is based on a fundamental result of Christensen and Sinclair~(\emph{J.\ Funct.\ Anal.}, 1987) that generalizes the well-known Stinespring representation for quantum channels to multilinear forms.
Using our characterization, we show that many polynomials of degree four are far from those coming from two-query quantum algorithms.
We also give a simple and short proof of one of the results of Aaronson et al.\ showing an equivalence between one-query quantum algorithms and bounded quadratic polynomials.

\emph{Revision note: A mistake was found in the proof of the second result on degree-4 polynomials far from 2-query quantum algorithms.
An explanation of the issue, a corrected proof and stronger examples are presented in work of Escudero Guti\'errez and the second author.}
\end{abstract}


\section{Introduction}

In the black-box model of quantum computation one is given access to a unitary operation, usually referred to as an oracle, that allows one to probe the bits of an unknown binary string ${x\in\pmset{n}}$ in superposition. Promised that~$x$ lies in a subset $D\subseteq \pmset{n}$, the goal in this model is to learn some property of $x$ given by a Boolean function $f:D\to \pmset{}$, when only given access to $x$ through the oracle. An application of the oracle is usually referred to as a \emph{query}.
The bounded-error quantum query complexity of~$f$, denoted~$Q_{\eps}(f)$, is the minimal number of queries a quantum algorithm must make on the worst-case input $x\in D$ to compute~$f(x)$ with probability at least~$1-\eps$, where~$\eps \in (0,1/2)$ is usually some fixed but arbitrary positive constant.

Many of the best-known quantum algorithms are naturally captured by this model.
A few examples of partial functions whose quantum query complexity is exponentially smaller than their classical counterpart (the decision-tree complexity) are period finding~\cite{Shor:1997}, Simon's problem~\cite{simon:power} and Forrelation~\cite{Aaronson:2015}. Famous problems related to total functions that admit polynomial quantum speed-ups include unstructured search~\cite{Grover:1996}, element distinctness~\cite{ambainis:qwalk} and NAND-tree evaluation~\cite{Farhi:2008}. 
It is well-known that for all total functions, the quantum and classical query complexities are polynomially related~\cite{Beals:2001}; see Ambainis et al.~\cite{ambainis:pointerfunction} and Aaronson et al.~\cite{aaronson:cheatsheet} for recent progress on the largest possible separations.

Despite the simplicity of the query model, determining the quantum query complexity of a given function $f$ appears to be highly non-trivial.
Several methods were introduced to tackle this problem. For constructing quantum query algorithms, there are general methods based on quantum walks~\cite{ambainis:qwalk,mnrs:quantumwalk}, span programs~\cite{reichardt:span} and learning graphs~\cite{belovs:learninggraphs}.
For proving lower bounds there are two main methods, known as the \emph{polynomial method}~\cite{Beals:2001} and the \emph{adversary method}~\cite{ambainis:positiveadv}.
The latter was eventually generalized to the ``negative weight'' adversary method~\cite{Hoyerleespalek:negative} and was shown to \emph{characterize} quantum query complexity~\cite{Hoyerleespalek:negative,reichardt:span,reichardt:querycompose,Lee:quantumstateconversion}, but proving  lower bounds using this method appears to be hard in general. 
This paper will focus on the polynomial~method.

\subsection{The polynomial method} 
The polynomial method is based on a connection between quantum query algorithms and polynomials discovered by Beals et al.~\cite{Beals:2001}.
They observed that for every $t$-query quantum algorithm~$\mathcal A$ that on input $x\in\pmset{n}$ returns a sign~$\mathcal A(x)$, there exists a degree-$(2t)$ polynomial~$p$ such that $p(x) = \Exp[\mathcal A(x)]$ for every~$x$, where the expectation is  over the randomness of the output (note that this is the difference of the acceptance and rejection probabilities of the algorithm). Let $D\subseteq \pmset{n}$ and $f:D\rightarrow \pmset{}$ be a (possibly partial) Boolean function. From the observation it follows that if $\mathcal A$ computes $f$ with probability at least $1-\eps$, then $p$ satisfies $|p(x) - f(x)| \leq  2\eps$ for every $x\in D$. The polynomial method thus converts the problem of lower bounding quantum query complexity to the problem of proving lower bounds on the minimum degree of a polynomial~$p$ such that $|p(x) - f(x)| \leq  2\eps$ holds for inputs~$x\in D$. The minimal degree of such a polynomial is called the \emph{approximate (polynomial) degree} and is denoted by~$\deg_{\eps}(f)$.

Notable applications of this approach showed optimality for Grover's search algorithm~\cite{Beals:2001}\footnote{The first quantum lower bound for the search problem was proven by Bennett et al.~\cite{bbbv:str&weak} using the so-called hybrid method. Beals et al.~\cite{Beals:2001} reproved their result using the polynomial method.}
and the above-mentioned algorithms for collision-finding and element distinctness~\cite{Aaronson:2004}.  In a recent work, Bun et al.~\cite{bundualpoly:2017} use the polynomial method to resolve the quantum query complexity of several other well-studied Boolean functions.
\medskip

\paragraph{Converses to the polynomial method}
A natural question is whether the polynomial method admits a converse. 
If so, this would imply a succinct characterization of quantum algorithms in terms of basic mathematical objects. 
However, Ambainis~\cite{Ambainis:2006} answered this question in the negative, showing that for infinitely many~$n$,  there is a total function~$f$ with~$\deg_{1/3}(f) \leq n^\alpha$ and $Q_{1/3}(f)\geq  n^{\beta}$ for some positive constants $\beta > \alpha$ (recently larger separations were obtained for total functions by Aaronson et al.~\cite{aaronson:cheatsheet}).\footnote{An open problem of Aaronson~\cite{aaronson:slides} asks whether for partial Boolean functions there exists an \emph{exponential} separation between $\deg_\eps(f)$ and $Q_\eps(f)$.}
The approximate degree thus turns out to be an imprecise measure for quantum query complexity in general.
These negative results would still leave room for the following two~possibilities:
\vspace{2 pt}
\begin{enumerate}
	\item There is a (simple) refinement of approximate polynomial degree that approximates $Q_\eps(f)$ up to a constant factor.
	\item Constant-degree polynomials characterize constant-query quantum algorithms.
\end{enumerate}
\vspace{2 pt}
These avenues were recently explored by Aaronson et al.~\cite{Aaronson:2015, Aaronson:2016}. 
The first work strengthened the polynomial method by observing that quantum algorithms give rise to polynomials with a so-called \emph{block-multilinear} structure.
Based on this observation, they introduced a refined degree measure, $\bmdeg_{\eps}(f)$ which lies between $\deg_\eps(f)$ and $2Q_\eps(f)$, prompting the immediate question of how well that approximates~$Q_\eps(f)$. 
The subsequent work  showed, among other things, that for infinitely many $n$, there is a function~$f$ with $\bmdeg_{1/3}(f)=O(\sqrt{n})$ and $Q_{1/3}(f)= \Omega(n)$, thereby also ruling out the possibility that this degree measure validates possibility~1. 
The natural next question then asks if there is another refined notion of polynomial degree that approximates quantum query complexity~\cite[Open problem~3]{Aaronson:2016}.

In the direction of the second avenue, \cite{Aaronson:2016} showed a surprising  converse to the polynomial method for quadratic polynomials.
Say that a polynomial $p\in \R[x_1,\dots,x_n]$ is \emph{bounded} if it satisfies $p(x)\in [-1,1]$ for all $x\in \pmset{n}$.

\begin{theorem}[Aaronson et al.]\label{thm:AAIKS}
	There exists an absolute constant~$C\in (0,1]$ such that the following holds.
	For every bounded quadratic polynomial~$p$, there exists a one-query quantum algorithm that, on input~$x\in\pmset{n}$, returns a sign with expectation $Cp(x)$.
\end{theorem}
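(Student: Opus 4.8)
The plan is to combine the characterization of one-query quantum algorithms with two external inputs: Grothendieck's inequality and the Christensen--Sinclair factorization of completely bounded bilinear forms. Recall the characterization in this case. Absorbing the algorithm's unitaries into the state and the measurement, a one-query algorithm applies the phase oracle $O_x\colon|i\rangle\mapsto x_i|i\rangle$ on a query register of dimension $n+1$ (with the convention $x_0=1$) once to a fixed unit state $|\psi\rangle=\sum_{i=0}^n|i\rangle\otimes|\psi_i\rangle$ and then measures a two-outcome observable, so its expected $\pm1$-output on $x$ is $\langle O_x\psi|N|O_x\psi\rangle=\sum_{i,j}x_ix_j\,\Phi_{ij}$, where $N$ is a Hermitian contraction and $\Phi_{ij}:=(\langle i|\otimes\langle\psi_i|)\,N\,(|j\rangle\otimes|\psi_j\rangle)$. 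With $\tilde x:=(1,x_1,\dots,x_n)$ this is $\Phi(\tilde x,\tilde x)$ for the bilinear form $\Phi$ on $\ell_\infty^{n+1}$ with matrix $(\Phi_{ij})$. Conversely, any bilinear form $\Phi$ on $\ell_\infty^{n+1}$ with completely bounded norm $\|\Phi\|_{\mathrm{cb}}\le1$ is realized this way: by Christensen--Sinclair it factors as $\Phi_{ij}=\langle P_i\xi|V|Q_j\eta\rangle$ for projection-valued measures $(P_i)_{i=0}^n$, $(Q_j)_{j=0}^n$, unit vectors $\xi,\eta$ and a contraction $V$; taking $|\psi_i\rangle=\tfrac1{\sqrt2}(|0\rangle\otimes P_i\xi+|1\rangle\otimes Q_i\eta)$ and $N=I\otimes(|0\rangle\langle1|\otimes V+|1\rangle\langle0|\otimes V^{*})$, which is a Hermitian contraction, a direct computation gives expected output $\mathrm{Re}\,\Phi(\tilde x,\tilde x)=\Phi(\tilde x,\tilde x)$ since $p$ is real-valued. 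It thus suffices to produce an absolute constant $C$ and a bilinear form $\Phi$ on $\ell_\infty^{n+1}$ with $\Phi(\tilde x,\tilde x)=Cp(x)$ for all $x\in\pmset n$ and $\|\Phi\|_{\mathrm{cb}}\le1$. (This is the $t=1$ instance of the general characterization.)

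First I pass from $p$ to a bilinear form of small ordinary norm. Write $p=p_0+q$ with $p_0=\Exp_x p(x)$, so $|p_0|\le1$, while $q$ has vanishing constant term and $|q(x)|\le2$ on $\pmset n$. Let $B$ be the real symmetric $(n+1)\times(n+1)$ matrix with zero diagonal, $B_{0i}=B_{i0}=\tfrac12\hat q(\{i\})$ and $B_{ij}=\tfrac12\hat q(\{i,j\})$ for $1\le i<j\le n$, so $\tilde x^{\top}B\tilde x=q(x)$, and set $A=B+p_0|0\rangle\langle0|$, so $\tilde x^{\top}A\tilde x=p(x)$ and $\Psi(s,t):=s^{\top}At$ is a bilinear form on $\ell_\infty^{n+1}$ with $\Psi(\tilde x,\tilde x)=p(x)$. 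I claim $\|\Psi\|:=\sup_{\|s\|_\infty,\|t\|_\infty\le1}|\Psi(s,t)|\le5$. Since $\Psi(s,t)=\Psi_B(s,t)+p_0s_0t_0$ with $|p_0s_0t_0|\le1$, it is enough to show $\|\Psi_B\|\le4$ for $\Psi_B(s,t)=s^{\top}Bt$. By bilinearity the supremum is attained over $\pmset{n+1}\times\pmset{n+1}$, and for such $s,t$ the identity $4\Psi_B(s,t)=\Psi_B(s+t,s+t)-\Psi_B(s-t,s-t)$ holds; because $B$ has zero diagonal, $\Psi_B(s+t,s+t)=4\Psi_B(z,z)$ where $z_i=s_i$ on the coordinates with $s_i=t_i$ and $z_i=0$ elsewhere, and averaging the zeroed coordinates of $z$ uniformly over $\{-1,1\}$ (using the zero diagonal to discard the resulting square terms) shows $|\Psi_B(z,z)|\le\sup_x|q(x)|\le2$, and likewise for $s-t$; hence $|\Psi_B(s,t)|\le2\sup_x|q(x)|\le4$.

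The decisive step is Grothendieck's inequality, which upgrades this to a bound on the completely bounded norm: $\|\Psi\|_{\mathrm{cb}}\le K_{\mathrm G}\,\|\Psi\|$, $K_{\mathrm G}$ being Grothendieck's constant. Concretely, the Grothendieck factorization yields probability vectors $\mu,\nu$ with $|\Psi(s,t)|\le K_{\mathrm G}\|\Psi\|\big(\sum_i\mu_is_i^2\big)^{1/2}\big(\sum_j\nu_jt_j^2\big)^{1/2}$ for all $s,t$, so the matrix $\widehat\Psi_{ij}=A_{ij}/\sqrt{\mu_i\nu_j}$ has operator norm at most $K_{\mathrm G}\|\Psi\|\le 5K_{\mathrm G}$ and $A_{ij}=\langle\xi_i|\widehat\Psi|\eta_j\rangle$ with $\xi_i=\sqrt{\mu_i}\,e_i$ and $\eta_j=\sqrt{\nu_j}\,e_j$ (put $\xi_i:=0$ when $\mu_i=0$, which forces the $i$-th row of $A$ to vanish); as $\sum_i\|\xi_i\|^2=\sum_j\|\eta_j\|^2=1$, this exhibits $\Phi:=\tfrac1{5K_{\mathrm G}}\Psi$ as a bilinear form on $\ell_\infty^{n+1}$ with $\|\Phi\|_{\mathrm{cb}}\le1$ and $\Phi(\tilde x,\tilde x)=\tfrac1{5K_{\mathrm G}}p(x)$. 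By the characterization above there is a one-query algorithm whose expected output on input $x$ is $\tfrac1{5K_{\mathrm G}}p(x)$, so $C=\tfrac1{5K_{\mathrm G}}\in(0,1]$ works.

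The main obstacle is the Grothendieck step: the hypothesis $|p(x)|\le1$ on $\pmset n$ only constrains the bilinear form $\Psi$ on its diagonal $\{(\tilde x,\tilde x)\}$, and deducing that a fixed multiple of $\Psi$ is completely bounded genuinely uses Grothendieck's inequality rather than any formal manipulation — this is where the real content sits. Everything else is either the structural characterization of one-query algorithms or routine: the reduction to a zero-diagonal matrix (so that polarization does not produce uncontrolled diagonal terms), the polarization estimate, the Stinespring-type reassembly of an algorithm from the Christensen--Sinclair factors, and the splitting off of the constant term.
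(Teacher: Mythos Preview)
Your overall strategy is sound and matches one of the two routes the paper itself describes: reduce to a quadratic form with bounded $\ell_\infty\!\to\!\ell_1$ norm via polarization (your decoupling argument is correct and is essentially the paper's Lemma~\ref{lem:decoupling}), then use Grothendieck's inequality to bound the completely bounded norm of the associated bilinear form, and finally invoke the characterization of one-query algorithms. But your explicit converse construction---turning a cb-bounded bilinear form into an algorithm---is broken. With $N=I\otimes(|0\rangle\langle1|\otimes V+|1\rangle\langle0|\otimes V^*)$ the identity factor acts on the query register, so for $|\psi\rangle=\sum_i|i\rangle\otimes|\psi_i\rangle$ one computes
\[
\langle O_x\psi\,|\,N\,|\,O_x\psi\rangle
=\sum_{i,j}x_ix_j\,\langle i|j\rangle\,\langle\psi_i|\big(|0\rangle\langle1|\otimes V+|1\rangle\langle0|\otimes V^*\big)|\psi_j\rangle
=\sum_i\mathrm{Re}\,\Phi_{ii},
\]
which is a constant independent of~$x$, not $\Phi(\tilde x,\tilde x)$. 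The phases $x_i$ live only on the query register, and an observable that acts trivially there cannot see them.

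The fix is exactly what the paper's short proof does, and it avoids Christensen--Sinclair altogether. The Grothendieck factorization you already obtained hands you unit vectors $u,v\in\R^{n+1}$ (namely $u_i=\sqrt{\mu_i}$, $v_j=\sqrt{\nu_j}$) and a contraction $B=\widehat\Psi/(5K_G)$, all living \emph{in the query register itself}, with
\[
\frac{p(x)}{5K_G}=u^{\mathsf T}\Diag(\tilde x)\,B\,\Diag(\tilde x)\,v.
\]
A Hadamard test on a single control qubit attached to the query register---prepare $\tfrac{1}{\sqrt2}(|0\rangle|u\rangle+|1\rangle|v\rangle)$, query, apply controlled-$B$, Hadamard the control, measure---returns this expectation directly. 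Your route via the full characterization is also legitimate, but then you must correctly quote or prove the converse direction (the paper's Lemma~\ref{lem:cb-alg-tensor}), whose algorithm does couple the query register to the observable.
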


This implies that possibility~2 holds true for quadratic polynomials.
It also leads to the problem of finding a similar converse for higher-degree polynomials, asking for instance whether two-query quantum algorithms are equivalent to quartic polynomials~\cite[Open problem~1]{Aaronson:2016}.

\subsection{Our results}

This paper addresses the above-mentioned two problems.
Our first result is a new notion of  polynomial degree
that gives a tight characterization of quantum query complexity (Definition~\ref{def:cbdeg} and Corollary~\ref{cor:cbdeg} below), giving an answer to~\cite[Open problem~3]{Aaronson:2016}. 
Using this characterization, we show that there is no generalization of Theorem~\ref{thm:AAIKS} to higher-degree polynomials, in the sense that there is no absolute constant $C\in (0,1]$ for which the analogous statement holds true.
This gives a partial answer to~\cite[Open problem 1]{Aaronson:2016}, ruling out a strong kind of equivalence.
Finally, we give a simplified shorter proof of Theorem~\ref{thm:AAIKS}.
Below we explain our results in more~detail. 
\medskip

\paragraph{Quantum algorithms are completely bounded forms}
For the rest of the discussion, all polynomials will be assumed to be bounded, real and $(2n)$-variate if not specified otherwise. We refer to a homogeneous polynomial as a \emph{form}.
For $\alpha\in \{0,1,2,\dots\}^{2n}$ and $x\in \R^{2n}$, we write $|\alpha| = \alpha_1 + \cdots + \alpha_{2n}$ and $x^\alpha = x_1^{\alpha_1}\cdots x_{2n}^{\alpha_{2n}}$.
Then, any form~$p$ of degree~$t$ can be written as 
\beq\label{eq:form}
p(x) =\sum_{\alpha\in \{0,1,\dots,t\}^{2n}:\, |\alpha| = t} c_\alpha x^\alpha,
\eeq
where~$c_\alpha$ are some real coefficients.
Our new notion of polynomial degree is based on a characterization of quantum query algorithms in terms of forms satisfying a certain norm constraint.
The norm we assign to a form as in~\eqref{eq:form} is given by a norm of the unique symmetric $t$-tensor $T_p\in\R^{{2n}\times\cdots\times {2n}}$ such that~$p$ can be written as 
\beq\label{eq:pTdecomp}
p(x)
=
\sum_{i_1,\dots,i_t=1}^{2n} (T_p)_{i_1,\dots,i_t}x_{i_1}\cdots x_{i_t}.
\eeq 
Explicitly, this tensor is given by
\beq\label{eq:Tpdef}
(T_p)_{i_1,\dots,i_t} = \frac{c_{e_{i_1} + \cdots + e_{i_t}}}{\tau(i_1,\dots,i_t)},
\eeq 
where $e_i$ is the $i$th standard basis vector for~$\R^{2n}$ and $\tau(i_1,\dots,i_t)$ is the number of distinct permutations of the sequence $(i_1,\dots,i_t)$.
The relevant norm of~$T_p$ is in turn given in terms of an infimum over decompositions of the form $T_p = \sum_{\sigma\in S_t} T^{\sigma}\circ\sigma,$ where the sum is over permutations of~$\{1,\dots,t\}$, each~$T^{\sigma}$ is a $t$-tensor, and $T^\sigma\circ \sigma$ is the permuted version of~$T^{\sigma}$ given by
$$
(T^{\sigma}\circ \sigma)_{i_1,\dots,i_t} = T^{\sigma}_{i_{\sigma(1)},\dots, i_{\sigma(t)}}.
$$
Note that the notation $T^{\sigma}$ does not refer to an action of~$S_t$ on the set of tensors.
Moreover, since~$T^\sigma$ is arbitrary we could have just absorbed the permutation  in the decomposition of~$T_p$; the reason why we didn't will become clear in a moment.
Finally, the actual norm is based on the \emph{completely bounded norm} of each of the~$T^{\sigma}$. 
Given a $t$-tensor~$T\in\R^{{2n}\times \cdots\times {2n}}$, its completely bounded norm~$\|T\|_{\cb}$ is given by the supremum over positive integers~$k$ and collections of $k\times k$ unitary matrices $U_1(i),\dots,U_t(i)$, for $i\in[{2n}]$, of the operator norm
\beq \label{eq:defnTcb}
\Big\|\sum_{i_1,\dots,i_{t}=1}^{2n}T_{i_1,\dots,i_{t}} U_1(i_1)\cdots U_{t}(i_{t})\Big\|.
\eeq

\begin{definition}[Completely bounded norm of a form]\label{defn:pcb}
	Let~$p$ be a form of degree~$t$ and let~$T_p$ be the symmetric $t$-tensor as in~\eqref{eq:Tpdef}.
	Then, the \emph{completely bounded norm} of~$p$ is defined by
	\beq \label{eq:defnpcb}
	\|p\|_{\cb}
	=
	\inf\Big\{\:\sum_{\sigma\in S_t}\|T^{\sigma}\|_{\cb}
	\:\st\:
	T_p = \sum_{\sigma\in S_t}T^{\sigma}\circ\sigma\:
	\Big\}.
	\eeq
\end{definition}

Standard compactness arguments show that both the completely bounded norm of tensors and of polynomials are attained.
Let us point out that~$\|T\|_{\cb}$ does not always equal $\|T\circ\sigma\|_{\cb}$ for a non-trivial permutation. 
For this reason, the completely bounded norm of a polynomial can be significantly smaller than that of its associated symmetric tensor: 
for $n$-variate cubic forms their ratio can be as large as~$\Omega(\sqrt{n})$.
Let us also mention that for ease of exposition, we are abusing the term ``completely bounded norm''.
Such norms originate from operator space theory and make sense only in reference to underlying operator spaces, which we have tacitly fixed in the above discussion.
The norm in~\eqref{eq:defnpcb} was originally introduced in the general context of tensor products of operator spaces in \cite{OiPi:1999}. In that framework, the definition considered here corresponds to a particular operator space based on~$\ell_1^n$, but we shall not use this fact here. 

Our characterization of quantum query algorithms is as follows.

\begin{theorem}[Characterization of quantum algorithms]\label{thm:main}
	Let $\beta:\pmset{n}\to [-1,1]$ and let~$t$ be a positive integer.
	Then, the following are equivalent.
	\begin{enumerate}
		\item There exists a form~$p$ of degree~$2t$ such that $\|p\|_{\cb} \leq 1$ and $p((x, {\bf 1})) = \beta(x)$ for every $x\in \pmset{n}$, where ${\bf 1}\in\R^n$ is the all-ones vector.\footnote{In a follow-up work, Gribling and Laurent~\cite{GL:2019} observed that ${\bf 1}\in \R^n$ can in fact be replaced by a single $1$.}
		\item There exists a $t$-query quantum algorithm that, on input $x\in\pmset{n}$, returns a  sign with expected value~$\beta(x)$.
	\end{enumerate}
\end{theorem}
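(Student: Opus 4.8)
\medskip
\noindent The plan is to prove the two implications separately. For $2\Rightarrow1$ I would extract the form from the algorithm by the argument of Beals et al.\ and then bound its completely bounded norm by hand, while for $1\Rightarrow2$ the engine is the Christensen--Sinclair representation theorem. For $2\Rightarrow1$, model a $t$-query algorithm on $\HH=\C^{2n}\otimes\KK$ by fixed unitaries $U_0,\dots,U_t$, a fixed initial state $\ket\psi$, and the phase oracle $O_x=\sum_{i=1}^{2n}\hat x_i(E_i\otimes\id_\KK)$, where $E_i=\ketbra{i}{i}$ and $\hat x=(x,\mathbf 1)$; this is the usual oracle of $x$ written in the $\ket\pm$ basis of the answer qubit, the $n$ coordinates frozen to $1$ meaning ``no query''. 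After replacing the post-processing observable by its Hermitian part, the expected sign is $\beta(x)=\braketbra{\psi_x}{M}{\psi_x}$ with $\ket{\psi_x}=U_tO_x\cdots O_xU_0\ket\psi$ and $\|M\|\le1$. Expanding the $t$ oracles shows that $\ket{\psi_x}$ has amplitudes homogeneous of degree $t$ in $\hat x$, so $\beta$ is the restriction to $(x,\mathbf 1)$ of a degree-$2t$ form $p$ whose (non-symmetric) defining tensor $\widetilde T$ has entries $\braketbra{\psi}{U_0^*(E_{i_1}\otimes\id)U_1^*\cdots U_t^*MU_t\cdots(E_{i_{2t}}\otimes\id)U_0}{\psi}$, and Hermiticity of $M$ forces $p$ to be real. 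The one nontrivial point is $\|p\|_{\cb}\le1$; since $T_p=\tfrac1{(2t)!}\sum_{\sigma\in S_{2t}}\widetilde T\circ\sigma$, Definition~\ref{defn:pcb} (taking $T^\sigma=\tfrac1{(2t)!}\widetilde T$ for every $\sigma$) reduces this to $\|\widetilde T\|_{\cb}\le1$. To prove the latter, given unitaries $W_1(i),\dots,W_{2t}(i)$ of a common size $d$ I would form the unitaries $\Omega_m=\sum_i(E_i\otimes\id_\KK)\otimes W_m(i)$ on $\HH\otimes\C^d$ and the operator $R$ obtained from the sandwich above by substituting $\Omega_m$ for the $m$-th projection and $\id_d$ into every other slot; a direct computation gives $(\bra\psi\otimes\id_d)R(\ket\psi\otimes\id_d)=\sum\widetilde T_{i_1,\dots,i_{2t}}W_1(i_1)\cdots W_{2t}(i_{2t})$ (up to reversing the last $t$ indices, which the symmetrisation washes out), while $\|R\|\le1$ because $R$ is a product of unitaries and one copy of $M\otimes\id_d$; compressing to the subspace $\ket\psi\otimes\C^d$ then finishes the bound.

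\smallskip
\noindent\emph{From forms to algorithms.} Assuming $\|p\|_{\cb}\le1$, I would first unwind Definition~\ref{defn:pcb} (the infimum is attained): $T_p=\sum_{\sigma\in S_{2t}}T^\sigma\circ\sigma$ with $\sum_\sigma\|T^\sigma\|_{\cb}\le1$, and we may take each $T^\sigma$ real. Then $\beta(x)=\sum_\sigma q_\sigma((x,\mathbf 1))$ with $q_\sigma$ the form with tensor $T^\sigma$, and the associated $2t$-linear form on the commutative C$^*$-algebra $\ell_\infty^{2n}$ has completely bounded norm equal to $\|T^\sigma\|_{\cb}$ --- this step identifies the unitary-substitution quantity~\eqref{eq:defnTcb} with the operator-space cb-norm, using that the unit ball of $M_d$ is the closed convex hull of its unitaries. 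The Christensen--Sinclair theorem then supplies unital $^*$-representations $\rho^\sigma_1,\dots,\rho^\sigma_{2t}$ of $\ell_\infty^{2n}$, contractions $V^\sigma_1,\dots,V^\sigma_{2t-1}$, and unit vectors $\xi_\sigma,\eta_\sigma$ with
\[
q_\sigma((x,\mathbf 1))=\|T^\sigma\|_{\cb}\,\braketbra{\xi_\sigma}{\rho^\sigma_1(\hat x)V^\sigma_1\rho^\sigma_2(\hat x)\cdots V^\sigma_{2t-1}\rho^\sigma_{2t}(\hat x)}{\eta_\sigma},\qquad \hat x=(x,\mathbf 1),
\]
and restricting to the cyclic subspace generated by $\eta_\sigma$ --- invariant and finite-dimensional because $\ell_\infty^{2n}$ is commutative with orthogonal minimal idempotents --- makes all the Hilbert spaces finite-dimensional.

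\smallskip
\noindent\emph{Turning the representation into a $t$-query algorithm.} This is the delicate part. A $^*$-representation of $\ell_\infty^{2n}$ is a resolution of the identity into $2n$ orthogonal projections, so padding all blocks to a common dimension lets me assume every $\rho^\sigma_j(\hat x)$ is the \emph{same} oracle $O_x$ on a fixed space. Splitting the length-$2t$ word in its middle writes $q_\sigma((x,\mathbf 1))=\|T^\sigma\|_{\cb}\braketbra{a^\sigma_x}{\mathcal{V}_\sigma}{b^\sigma_x}$ with $\mathcal{V}_\sigma=V^\sigma_t$ and $\ket{a^\sigma_x},\ket{b^\sigma_x}$ each a fixed vector acted on by $t$ alternating oracle calls and contractions. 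I would then implement each intermediate contraction $V$ reversibly --- adjoin a fresh $\ket0$-qubit and apply a fixed unitary with $\ket\phi\ket0\mapsto\ket{V\phi}\ket0+\ket{D_V\phi}\ket1$, a Halmos dilation with $D_V=(\id-V^*V)^{1/2}$ --- so that, subsequent oracles acting as the identity on these ancillas, each of $\ket{a^\sigma_x},\ket{b^\sigma_x}$ becomes on its ``all ancillas $\ket0$'' component the state produced by a genuine $t$-query unitary circuit from a fixed input. Finally, adjoining a two-valued ``branch'' register and an index register over $S_{2t}$ together with one dummy index carrying the slack $1-\sum_\sigma\|T^\sigma\|_{\cb}$, I would start in $\sum_\sigma\sqrt{\|T^\sigma\|_{\cb}}\ket\sigma\otimes\tfrac1{\sqrt2}(\ket0\ket{\xi_\sigma}+\ket1\ket{\eta_\sigma})\otimes\ket{0\cdots0}$ and run, with every fixed unitary controlled on $(\sigma,\mathrm{branch})$, the circuit driving branch $0$ to the dilation of $\ket{a^\sigma_x}$ and branch $1$ to that of $\ket{b^\sigma_x}$; one oracle call per step serves both branches, so the whole thing is a $t$-query algorithm. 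Its output state $\ket{v_x}$ then obeys $\braketbra{v_x}{M}{v_x}=\sum_\sigma\|T^\sigma\|_{\cb}\mathrm{Re}\braketbra{a^\sigma_x}{\mathcal{V}_\sigma}{b^\sigma_x}=\beta(x)$ for the Hermitian observable $M=\sum_\sigma\ketbra{\sigma}{\sigma}\otimes(\ket0\bra1\otimes\Pi\mathcal{V}_\sigma\Pi+\mathrm{h.c.})$, where $\Pi$ projects onto the ``all ancillas $\ket0$'' subspace; since $\|M\|\le1$, measuring $M$ with a randomised $\pm1$ readout is the desired algorithm.

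\smallskip
\noindent\emph{Expected main obstacle.} Everything of substance lies in $1\Rightarrow2$, and within it two points need real care: first, pinning down that the norm in~\eqref{eq:defnpcb}--\eqref{eq:defnTcb} really is a completely bounded multilinear norm on the C$^*$-algebra $\ell_\infty^{2n}$, so that Christensen--Sinclair applies, and that one may take the representation finite-dimensional; and second, the packaging itself --- homogenising the representations to a single oracle, realising the intermediate contractions reversibly via ancillas, folding the $2t$-fold word into a bilinear pairing of two $t$-query states, combining the $(2t)!$ pieces and the two folding branches into one circuit, and exhibiting a single admissible observable that ignores the ancillary garbage.
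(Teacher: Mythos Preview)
Your proposal is correct and uses the same engine as the paper (Christensen--Sinclair for $1\Rightarrow 2$, a direct norm estimate for $2\Rightarrow 1$, and a Hadamard-test/branch-register circuit to fold the length-$2t$ word into $t$ queries). The one substantive difference is in how you organise $1\Rightarrow 2$: you apply Christensen--Sinclair to \emph{each} summand $T^\sigma$ separately and then superpose the resulting $(2t)!$ circuits via an index register over $S_{2t}$ (plus a slack index). The paper avoids this entirely by first observing that $T:=\sum_{\sigma}T^\sigma$ satisfies $T(y,\dots,y)=T_p(y,\dots,y)=p(y)$ and, by the triangle inequality, $\|T\|_{\cb}\le\sum_\sigma\|T^\sigma\|_{\cb}\le 1$; it then applies Christensen--Sinclair \emph{once} to this single tensor and builds a single circuit. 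This buys a cleaner construction with no $S_{2t}$-register, no slack, and no need to homogenise $(2t)!$ different representation/contraction families to a common oracle. Your route works, but the paper's shortcut is worth internalising.

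Two minor points to tighten. First, your parenthetical ``up to reversing the last $t$ indices, which the symmetrisation washes out'' is unnecessary: with the ordering you wrote, the compression $(\bra\psi\otimes\id_d)R(\ket\psi\otimes\id_d)$ already equals $\sum\widetilde T_{i_1,\dots,i_{2t}}W_1(i_1)\cdots W_{2t}(i_{2t})$ on the nose. Second, your finite-dimensionality reduction (``restricting to the cyclic subspace generated by $\eta_\sigma$'') is stated only for the last factor; to make it rigorous you need to propagate the dimension bound through all $2t$ slots, as the paper does by setting $S_i=\mathrm{span}\{\pi_i(x_i)V_{i+1}\cdots\pi_{2t}(x_{2t})V_{2t+1}:x_i,\dots,x_{2t}\in\mathcal X\}$ and noting $\dim S_i\le (2n)^{2t}$. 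Neither issue is a real gap.
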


It may be observed that  the polynomial method is contained in the above statement, since any $(2n)$-variate form~$p$ defines an $n$-variate polynomial given by $q(x) = p((x,{\bf 1}))$.
The above theorem refines the polynomial method in the sense that quantum algorithms can only yield polynomials of the form $q(x) = p((x, {\bf 1}))$ where~$p$ has completely bounded norm at most one.

Our proof is based on a fundamental result of Christensen and Sinclair~\cite{christensen:cbrep} concerning multilinear forms on~$C^*$-algebras that generalizes the well-known Stinespring representation theorem for quantum channels (see also~\cite{paulsenandsmith:multilinear} and~\cite[Chapter~5]{Pisier:2003}).
As such, this result applies in a more general setting than what is strictly needed here.
Section~\ref{sec:prelims} contains some preliminary material that will allow us to state the result in its original form, in particular the general definition of completely bounded norms of multilinear forms on~$C^*$-algebras.
\medskip

\paragraph{Completely bounded approximate degree}
Theorem~\ref{thm:main} motivates the following new notion of approximate degree for partial Boolean functions.

\begin{definition}[Completely bounded approximate degree]\label{def:cbdeg} 
	For $D\subseteq \pmset{n}$, let $f:D\rightarrow \pmset{}$ be a (possibly partial) Boolean function and let $\eps\geq 0$.
	Then, the $\eps$-\emph{completely bounded approximate degree} of~$f$, denoted $\cbdeg_\eps(f)$, is the smallest positive integer~$t$ for which there exists a form $p$ of degree~$2t$ such that
	$\|p\|_{\cb} \leq 1$ as in Eq.~\eqref{eq:defnpcb} and we have $|p((x, {\bf 1}))-f(x)|\leq 2\varepsilon$ for every $x\in D$.
\end{definition}
As a corollary of Theorem~\ref{thm:main}, we get the following characterization of quantum query complexity.

\begin{corollary}\label{cor:cbdeg}
	For every $D\subseteq \pmset{n}$, $f:D\rightarrow \pmset{}$ and $\eps\geq 0$, we have
	$\cbdeg_\eps(f)=Q_\eps(f)$.
\end{corollary}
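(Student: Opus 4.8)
The plan is to derive Corollary~\ref{cor:cbdeg} as an essentially immediate consequence of Theorem~\ref{thm:main}, by matching the two definitions of approximate degree against the two equivalent conditions in that theorem. First I would fix $D\subseteq\pmset{n}$, $f:D\to\pmset{}$ and $\eps\ge 0$, and argue the two inequalities $\cbdeg_\eps(f)\le Q_\eps(f)$ and $Q_\eps(f)\le\cbdeg_\eps(f)$ separately.

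For the inequality $\cbdeg_\eps(f)\le Q_\eps(f)$: let $t=Q_\eps(f)$, so there is a $t$-query quantum algorithm that on input $x\in D$ outputs a sign whose expected value $\beta(x)$ satisfies $|\beta(x)-f(x)|\le 2\eps$ (this is just the definition of bounded-error quantum query complexity, rephrased in terms of the difference of acceptance and rejection probabilities). The algorithm is of course defined on all of $\pmset{n}$, not just on $D$, and produces some function $\beta:\pmset{n}\to[-1,1]$; applying the implication (2)$\Rightarrow$(1) of Theorem~\ref{thm:main} to this $\beta$ and this $t$ yields a form $p$ of degree $2t$ with $\|p\|_{\cb}\le 1$ and $p((x,{\bf 1}))=\beta(x)$ for all $x\in\pmset{n}$, in particular for $x\in D$. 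Hence $|p((x,{\bf 1}))-f(x)|\le 2\eps$ on $D$, so $p$ witnesses $\cbdeg_\eps(f)\le t=Q_\eps(f)$.

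For the reverse inequality $Q_\eps(f)\le\cbdeg_\eps(f)$: let $t=\cbdeg_\eps(f)$, so there is a form $p$ of degree $2t$ with $\|p\|_{\cb}\le 1$ and $|p((x,{\bf 1}))-f(x)|\le 2\eps$ for all $x\in D$. Define $\beta:\pmset{n}\to[-1,1]$ by $\beta(x)=p((x,{\bf 1}))$; this indeed lands in $[-1,1]$ because $\|p\|_{\cb}\le 1$ (taking $k=1$ and scalar unitaries $\pm 1$ in the definition~\eqref{eq:defnTcb} bounds $|p((x,{\bf 1}))|$ by $1$ — alternatively this is already implicit in the equivalence, since any such $p$ arises from a quantum algorithm). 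Now apply the implication (1)$\Rightarrow$(2) of Theorem~\ref{thm:main} to $\beta$ and $t$: there is a $t$-query quantum algorithm returning a sign with expected value $\beta(x)$ on input $x$. On $x\in D$ we have $|\Exp[\mathcal A(x)]-f(x)|=|\beta(x)-f(x)|\le 2\eps$, which (again rewriting in terms of acceptance/rejection probabilities) means $\mathcal A$ computes $f$ with error at most $\eps$. Thus $Q_\eps(f)\le t=\cbdeg_\eps(f)$, and combining the two directions gives equality.

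There is essentially no obstacle here: all the content is in Theorem~\ref{thm:main}. The only point requiring a line of care is the dictionary between ``expected sign value $\beta(x)$ with $|\beta(x)-f(x)|\le 2\eps$'' and ``computes $f(x)$ with probability at least $1-\eps$'' — namely, if the algorithm outputs $1$ with probability $p_1$ and $-1$ with probability $p_{-1}=1-p_1$, then $\Exp[\mathcal A(x)]=p_1-p_{-1}$, and for $f(x)=1$ the condition $|\Exp[\mathcal A(x)]-1|\le 2\eps$ is equivalent to $p_{-1}\le\eps$, i.e.\ to outputting the correct sign with probability $\ge 1-\eps$ (and symmetrically for $f(x)=-1$). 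This is exactly the correspondence already used implicitly in the discussion of the polynomial method in the introduction, so it can be invoked without further ado.
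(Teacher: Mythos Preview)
Your proposal is correct and follows essentially the same approach as the paper's own proof: both directions are obtained by directly invoking the two implications of Theorem~\ref{thm:main}, with the paper's argument being slightly terser (it omits the verification that $\beta(x)\in[-1,1]$ and the explicit dictionary between expected sign and error probability that you spell out).
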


We remark that the characterization of $Q_\eps(f)$ via the adversary method holds for all constant $\eps>0$, whereas our characterization  holds for every $\eps \geq 0$. In addition, in our characterization we do not lose constant factors (unlike in the adversary method characterization) which could possibly be useful to understand the quantum query complexity of ordered search~\cite{hoyer:orderedsearch,childs:orderedsearch}.
\vspace{2 pt}
\paragraph{Chebyshev polynomials}
The Chebyshev polynomials have been used in a number of places to find approximating polynomials for Boolean functions, most notably~\cite{nisan&szegedy:degree}.
These polynomials can be defined through the recursion $T_0(\alpha) = 1, T_1(\alpha) =\alpha, T_{k+1}(\alpha) = 2\alpha T_{k}(\alpha) - T_{k-1}(\alpha)$ for $k\in \N$.
Particularly useful are the $n$-variate degree-$k$ polynomials $p_k(x) = T_k\big((x_1 + \cdots + x_n)/n\big)$.
In a forthcoming work, we show using a straightforward argument based on the recursion formula that there exist degree-$k$ forms~$F_k$ on $\R^n$ such that $F_k(x)=p_k(x)$ for every $x\in \pmset{n}$ and $\|F_k\|_{\cb} \leq 1$ for every~$k$.
As a simple application, from Theorem~\ref{thm:main} and a result of~\cite{nisan&szegedy:degree}, one then easily obtains the fact that the $n$-bit {\sc OR} function, restricted to the set of strings with Hamming weight at most~1, has quantum query complexity~$O(\sqrt{n})$, as implied by Grover's algorithm~\cite{Grover:1996}. 

\vspace{2 pt}

\paragraph{Separations for higher-degree forms}
Theorem~\ref{thm:AAIKS} follows from our Theorem~\ref{thm:main} and the fact that for every bounded quadratic form $p(x) = x^TAx$, the matrix $A$  has completely bounded norm bounded from above by an absolute constant (independent of~$n$); this is discussed in more detail below. 
If the same were true for the tensors~$T_p$ corresponding to higher-degree forms~$p$ then Theorem~\ref{thm:main} would give higher-degree extensions of Theorem~\ref{thm:AAIKS}.
Unfortunately, this will turn out to be false for polynomials of degrees  greater than $3$. Bounded forms whose associated tensors have unbounded completely bounded norm appeared before in the work of Smith~\cite{smith1988completely}, who gave an explicit example with completely bounded norm~$\sqrt{\log n}$.
Since~$\|p\|_{\cb}$ involves an infimum over decompositions of~$T_p$, this does not yet imply a counterexample to higher-degree versions of Theorem~\ref{thm:AAIKS}.
However, such counterexamples are implied by recent work on Bell inequalities, multiplayer XOR games in particular.
It is not difficult to see that~$\|p\|_{\cb}$ is bounded from below by the so-called \emph{jointly completely bounded norm} of the tensor~$T_p$, a quantity that in quantum information theory is better known as the entangled bias of the XOR game whose (unnormalized) game tensor is given by~$T_p$.
One obtains this quantity by inserting tensor products between the unitaries appearing in~\eqref{eq:defnTcb}.
P\'erez-Garc\'{i}a et al.~\cite{Perez-Garcia:2008} and Vidick and the second author~\cite{briet:explicitlowerbound} gave examples of bounded cubic forms with unbounded jointly completely bounded norm.
Both constructions are non-explicit, the first giving a completely bounded norm of order~$\Omega((\log n)^{1/4})$ and the latter of order~$\widetilde \Omega(n^{1/4})$. 
Here, we explain how to get a larger separation by means of a much simpler (although still non-explicit) construction and show that a  bounded cubic form $p$ given by a suitably normalized random sign tensor has completely bounded norm~$\|p\|_{\cb}=\Omega(\sqrt{n})$ with high probability (Theorem~\ref{thm:rand_T}). The result presented here is not new, but it follows from the existence of commutative operator algebras which are not $Q$-algebras. Here, we present a self-contained proof which follows the same lines as in \cite[Theorem 18.16]{DJT:1995} and, in addition, we prove the result with high probability (rather than just the existence of such trilinear forms). 
We also explain how to obtain from this result quartic examples by embedding into 3-dimensional ``tensor slices'', which in turn imply counterexamples to a quartic versus two-query version of Theorem~\ref{thm:AAIKS}.

\begin{remark}
A mistake was found in the last step mentioned above, concerning the implication of counterexamples to a quartic version of Theorem~\ref{thm:AAIKS}.
A corrected proof (and stronger examples) can be found in~\cite{BEG:2022}.
\end{remark}
\medskip

\paragraph{Short proof of Theorem~\ref{thm:AAIKS}}
As shown in~\cite{Aaronson:2016}, Theorem~\ref{thm:AAIKS} is yet another surprising consequence of the ubiquitous Grothendieck inequality~\cite{Grothendieck:1953} (Theorem~\ref{thm:grothendieck} below), well known for its relevance to Bell inequalities~\cite{Tsirelson:1985, Cleve:2004} and combinatorial optimization~\cite{Alon:2006, Khot:2012}, not to mention its fundamental importance to Banach spaces~\cite{Pisier:2012}. 
An equivalent formulation of Grothendieck's inequality again recovers Theorem~\ref{thm:AAIKS} for quadratic forms $p(x) = x^{\mathsf T}Ax$ given by a matrix $A\in \R^{n\times n}$ satisfying a certain norm constraint $\|A\|_{\ell_\infty\to\ell_1} \leq 1$, which in particular implies that~$p$ is bounded (see Section~\ref{sec:prelims} for more on this norm).
Indeed, in that case Grothendieck's inequality implies that $\|A\|_{\cb} \leq K_G$ for some absolute constant $K_G\in (1,2)$ (independent of~$n$ and~$A$).
Normalizing by~$K_G^{-1}$, one obtains Theorem~\ref{thm:AAIKS} with $C = K_G^{-1}$ for such quadratic forms from Theorem~\ref{thm:main}.
The general version of Theorem~\ref{thm:AAIKS} for quadratic polynomials follows from this via a so-called decoupling argument (see Section~\ref{sec:AAIKS-proof}).
This arguably does not simplify the original proof of Theorem~\ref{thm:AAIKS}, as Theorem~\ref{thm:main} relies on deep results itself.
However, in Section~\ref{sec:AAIKS-proof} we give a short simplified proof, showing that Theorem~\ref{thm:AAIKS} follows almost directly from a ``factorization version'' of Grothendieck's inequality (Theorem~\ref{thm:Grothfact}) that follows from the more standard version (Theorem~\ref{thm:grothendieck}).
The factorization version was used in the original proof as well, but only as a lemma in a more intricate argument.
In computer science, this factorization version has already found applications in an algorithmic version of the Bourgain--Tzafriri Column Subset Theorem~\cite{Tropp:2009} and algorithms for community detection in the stochastic block model~\cite{Le:2015}. This appears to be its first occurrence in quantum~computing.

\subsection{Related work}
\label{sec:related}
Although there is no converse to the polynomial method for arbitrary polynomials, equivalences between quantum algorithms and polynomials have been studied before in certain models of computation. For example, we do know of such characterizations in the model of non-deterministic query complexity~\cite{wolf:ndetq}, unbounded-error query complexity~\cite{buhrman:smallbias, montanaro:unboundederror} and quantum query complexity in expectation~\cite{kaniewski:querycomplexityinexp}. We remark here that in all these settings, the quantum algorithms constructed from polynomials were \emph{non-adaptive} algorithms, i.e., the quantum algorithm begins with a quantum state, repeatedly applies the oracle some fixed number of times and then performs a projective measurement. Crucially, these algorithms do not contain interlacing unitaries that are present in the standard model of query complexity, hence are known to be a much weaker class of algorithms (see Montanaro~\cite{montanaro:nonadaptive} for more details). 

Our main result is yet another demonstration of the expressive power of $C^*$-algebras and operator space theory in quantum information theory; for a survey on applications of these areas to two-prover one-round games, see~\cite{PaVi:2016}.
The appearance of $Q$-algebras (mentioned in the above paragraph on separations) is also not a first in quantum information theory, see for instance~\cite{Perez-Garcia:2008, Briet:2012, Briet:2013}.

After the initial version of this work appeared it was shown by Gribling and Laurent that the completely bounded norm of a degree-$d$ polynomial can be computed by a semidefinite program (SDP) of size~$O(n^d)$~\cite{GL:2019}. An SDP formulation for quantum query complexity was already known using the negative-weight adversary method~\cite{reichardt:querycompose}, but as we mentioned after Corollary~\ref{cor:cbdeg}, the adversary method only characterizes 
bounded-error quantum query complexity. 
With our 
characterization, the result of Gribling and Laurent gives a hierarchy 
of SDPs even for exact quantum query complexity.
An SDP characterization of quantum query complexity was also given earlier by Barnum, Saks and Szegedy~\cite{barnum:quantumquerysdp}. 
This SDP uses matrix-variables of size~$|\mathcal D|$, which is~$2^n$ for total functions, and so can be much larger than that of Gribling and Laurent.

\subsection{Organization}
In Section~\ref{sec:prelims}, we give a brief introduction to normed vector spaces, $C^*$-algebras and define the model of quantum query complexity. In Section~\ref{sec:mainthmproof}, we prove our main theorem characterizing quantum query algorithms. In Section~\ref{sec:cubicseparations}, we explain the separation obtained for higher-degree forms. In Section~\ref{sec:AAIKS-proof}, we give a short proof of the main theorem in Aaronson et al.~\cite{Aaronson:2016}.

\section{Preliminaries}

Here we fix some basic notation and recall some basic definitions.
In addition, in order to be able to state and use our main tool (Theorem~\ref{thm:SC-factor} of Christensen and Sinclair), we recall some basic facts of~$C^*$-algebras and completely bounded norms.

\label{sec:prelims}
\paragraph{Notation} 
For a positive integer $t$ denote $[t] = \{1,\dots,t\}$. 
For $x\in \C^n$, let $\Diag(x)$ be the $n\times n$ diagonal matrix whose diagonal is~$x$.
Given a matrix $X\in \C^{n\times n}$, let $\diag(X)\in \C^n$ denote the vector corresponding to the diagonal of $X$.
For $x\in \bset{n}$, denote $(-1)^x = ((-1)^{x_1},\dots,(-1)^{x_n})$. Let $e_1, \ldots, e_{n} \in \C^n$ be the standard basis vectors and let $E_{ij}=e_ie_j^*$. For $i,j\in [n]$, let $\delta_{i,j}$ be the indicator for the event $[i=j]$.
Let~${\bf 1} = (1,\dots,1)$ and ${\bf 0} = (0,\dots,0)$ denote the $n$-dimensional all-ones (resp.~all-zeros)~vector.

\paragraph{Normed vector spaces} 
For parameter $p\in [1,\infty)$, the $p$-norm of a vector $x\in \R^n$ is defined by $\|x\|_{\ell_p} = (|x_1|^p + \cdots + |x_n|^p)^{1/p}$ and for $p = \infty$ by $\|x\|_{\ell_\infty}=\max\{|x_i|\st i\in[n]\}$.
Denote the $n$-dimensional Euclidean unit ball by $B_2^n = \{x\in \R^n\st \|x\|_{\ell_2}\leq 1\}$.
For a matrix $A\in \R^{n\times n}$, denote the standard operator norm by~$\|A\|$ and define
$$
\|A\|_{\ell_\infty\to\ell_1} = \sup\big\{\|Ax\|_{\ell_1} \st \|x\|_{\ell_\infty} \leq 1\big\}
.$$ 
By linear programming duality, observe that the right-hand side of equality above
can be written as
$$
\sup\big\{\|Ax\|_{\ell_1} \st \|x\|_{\ell_\infty} \leq 1\big\}= \sup_{x,y\in \pmset{n}}x^{\mathsf T}Ay.
$$
We denote the norm of a general normed vector space~$X$ by $\|\cdot\|_X$, if there is a danger of ambiguity.
Denote by $\id_X$ the identity map on~$X$ and by $\id_d$ the identity map on~$\C^d$. For normed vector spaces $X,Y$, let $L(X,Y)$ be the collection of all linear maps $T:X\rightarrow Y$. We will use the notation $L(X)$ as a shorthand for $L(X,X)$. 
The (operator) norm of a linear map $T\in L(X,Y)$ is given by $\|T\| = \sup\{\|T(x)\|_Y\st \|x\|_X\leq 1\}$.
Such a map is an \emph{isometry} if $\|T(x)\|_Y = \|x\|_X$ for every $x\in X$ and a \emph{contraction} if $\|T(x)\|_Y \leq \|x\|_X$ for every  $x\in X$.
Throughout we endow~$\C^d$ with the standard Euclidean norm. 
Note that the space~$L(\C^d)$ is naturally identified with the set of~$d\times d$ matrices, sometimes denoted~$M_d(\C)$, and we use the two notations interchangeably.
For  Hilbert spaces~$\HH,\KK$, we endow $\HH\otimes \KK$ with the norm given by the inner product $\langle f\otimes a,g\otimes b\rangle = \langle f,g\rangle_\HH\langle a,b\rangle_\KK$, making this space isometric to $\HH\oplus\cdots\oplus \HH$ ($d$ times). 
This can be extended linearly to the entire domain. 
Similarly, we endow $\LH\otimes L(\C^d)$ with the operator norm of the space $L(\HH\otimes \C^d)$ of linear operators on the Hilbert space $\HH\otimes \C^d$; with some abuse of notation, we shall identify the two spaces of~operators. 
\medskip

\paragraph{$C^*$-algebras}
We collect a few basic facts of~$C^*$-algebras that we use later and refer to~\cite{Arveson:2012} for an extensive introduction.
A $C^*$-algebra $\mathcal X = (X,\cdot, *)$ is a normed complex vector space $X$, complete with respect to its norm (\ie, a Banach space), that is endowed with two operations in addition to the standard vector-space addition and scalar multiplication operations:
\begin{enumerate}
	\item an associative  multiplication $\cdot:X\times X\to X$, denoted $x\cdot y$ for $x,y\in X$, that is distributive with respect to the vector space addition and continuous with respect to the norm of~$X$, which by definition of continuity means $\|x\cdot y\|_X \leq \|x\|_X\|y\|_X$ for all $x,y\in X$;
	\item an involution $*: X\to X$, that is, a conjugate linear map that sends $x\in X$ to (a unique) $x^*\in X$ satisfying $(x^*)^* = x$ and $(xy)^* = y^*x^*$ for any $x,y\in X$, and such that $\|x\cdot x^*\|_X  = \|x\|_X^2$.
\end{enumerate}
Any finite-dimensional normed vector space is a Banach space. 
A $C^*$-algebra $\mc X$ is \emph{unital} if it has a multiplicative identity, denoted $\id_{\mc X}$.
The most important example of a unital $C^*$-algebra is~$M_n(\C)$, where the involution operator is the conjugate-transpose and the norm is the operator norm.
A linear map $\pi:\mathcal X\to\mathcal Y$ from one $C^*$-algebra~$\mathcal{X}$ to another~$\mathcal{Y}$ is a \emph{$*$-homomorphism} if it preserves the multiplication operation, $\pi(xy) = \pi(x)\pi(y)$, and satisfies $\pi(x)^* = \pi(x^*)$ for all $x,y\in \mathcal X$.
For a complex Hilbert space~$\HH$, a mapping $\pi:\mathcal X\rightarrow \LH$ is a \emph{$*$-representation} if it is a $*$-homomorphism.
An important fact is the Gelfand--Naimark Theorem~\cite[Theorem~3.4.1]{murphy:2014} asserting that any~$C^*$-algebra admits an isometric (that is, norm-preserving) $*$-representation for some complex Hilbert~space.   Suppose $\X=(X,\cdot_X, *),\Y=(Y,\cdot_Y, \dagger)$ are $C^*$-algebras, then the \emph{tensor product} $\X\otimes \Y$ is also a $C^*$-algebra defined in terms of the standard tensor product of the vector spaces $X\otimes Y$ with the associative multiplication $\cdot _{XY}$ and involution operator $\diamond$ defined as: $(x\otimes y)\cdot_{XY} (x'\otimes y')= (x\cdot_X x')\otimes (y\cdot_Y y')$ and involution $(x\otimes y)^\diamond=x^*\otimes y^\dagger$.  This can then be extended linearly to the entire~domain. 
\medskip

\paragraph{Completely bounded norms}
We also collect a few basic facts about completely bounded norms that we use later and refer to~\cite{Paulsen:2002} for an extensive introduction.
For a $C^*$-algebra~$\mathcal X$ and positive integer~$d$, we denote by~$M_d(\mathcal X)$ the set of $d$-by-$d$ matrices with entries in $\mathcal X$.
Note that this set can naturally be identified with the algebraic tensor product $\mathcal X\otimes L(\C^d)$, that is, the linear span of all elements of the form $x\otimes M$, where $x\in X$ and $M\in L(\C^d)$. Using the Gelfand-Naimark theorem, we endow $M_d(\mathcal X)$ with a norm induced by an isometric $*$-representation~$\pi$ of~$\mathcal X$ into $\LH$ for a Hilbert space~$\HH$. The linear map $\pi\otimes \id_{\ldc}$ sends elements in $M_d(\mathcal X)$ (or~$\mathcal X\otimes \ldc$) to elements (operators) in~$L(\HH\otimes\C^d)$. 
The norm of an element 	$A\in M_d(\mathcal X)$ is then defined to be~$\|A\| = \|(\pi\otimes\id_{\ldc})(A)\|$.
The notation~$\|A\|$ reflects the fact that this norm is in fact independent of the particular $*$-representation~$\pi$.
Based on this, we can define a norm on linear maps $\sigma:\mathcal X\to \LH$ as follows:
$$
\|\sigma\|_{\cb}=\sup \Bigg\{\frac{\|(\sigma\otimes \id_{\ldc})(A)\|}{\|A\|}: d\in \N, \hspace{2 pt} A\in \X\otimes L(\C^d),\hspace{2 pt} A\neq 0 \Bigg\}.
$$

We will also need the following fact about the completely bounded norm of  $*$-representations of $C^*$-algebras \cite[Theorem~1.6]{Pisier:2003}.

\begin{lemma}\label{lem:hom_cb}
	Let~$\mathcal X$ be a finite-dimensional $C^*$-algebra, $\HH, \HH'$ be Hilbert spaces, $\pi:\mathcal X\to L(\HH)$ be a $*$-representation and $U\in L(\HH,\HH')$ and $V\in L(\HH', \HH)$ be linear maps. Then, the map $\sigma:\mathcal X\to L(\HH')$, defined as $\sigma(x) = U\pi(x) V$, satisfies that $\|\sigma\|_{\cb} \leq \|U\|\|V\|$.
\end{lemma}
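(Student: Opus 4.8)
The plan is to reduce the statement to the standard fact that every $*$-representation of a $C^*$-algebra is \emph{completely contractive}, i.e.\ $\|\pi\|_{\cb}\le 1$, and then to peel off the maps $U$ and $V$ using submultiplicativity of the operator norm together with the observation that ampliation by $\id_{\ldc}$ does not change the norm of a single operator.

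First I would unravel the definition of $\|\sigma\|_{\cb}$. Fix $d\in\N$ and $A\in\mathcal X\otimes L(\C^d)$, written as $A=\sum_k x_k\otimes M_k$ with $x_k\in X$ and $M_k\in L(\C^d)$. A direct computation using $(S\otimes T)(S'\otimes T')=SS'\otimes TT'$ gives the operator identity
\[
(\sigma\otimes\id_{\ldc})(A)=\sum_k U\pi(x_k)V\otimes M_k=(U\otimes\id_{\ldc})\,(\pi\otimes\id_{\ldc})(A)\,(V\otimes\id_{\ldc}),
\]
where we use the identification of $L(\HH)\otimes L(\C^d)$ with $L(\HH\otimes\C^d)$ and likewise for $\HH'$. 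Submultiplicativity of the operator norm then yields
\[
\|(\sigma\otimes\id_{\ldc})(A)\|\le\|U\otimes\id_{\ldc}\|\,\cdot\,\|(\pi\otimes\id_{\ldc})(A)\|\,\cdot\,\|V\otimes\id_{\ldc}\|.
\]
Since $\HH\otimes\C^d$ is isometric to the $d$-fold orthogonal direct sum of $\HH$, the operator $U\otimes\id_{\ldc}$ acts as a block-diagonal copy of $U$ repeated $d$ times, so $\|U\otimes\id_{\ldc}\|=\|U\|$ and similarly $\|V\otimes\id_{\ldc}\|=\|V\|$. Thus it suffices to prove $\|(\pi\otimes\id_{\ldc})(A)\|\le\|A\|_{M_d(\mathcal X)}$ for all $d$ and all $A$, i.e.\ $\|\pi\|_{\cb}\le 1$; dividing the last display by $\|A\|$ and taking the supremum over $d$ and $A\ne 0$ then gives $\|\sigma\|_{\cb}\le\|U\|\|V\|$.

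To establish $\|\pi\|_{\cb}\le 1$ I would observe that $\pi\otimes\id_{\ldc}$ is itself a $*$-homomorphism between $C^*$-algebras, and that such maps are norm-nonincreasing. Concretely, fix an isometric $*$-representation $\pi_0:\mathcal X\to L(\KK)$ as in the Gelfand--Naimark theorem, so that by definition $\|A\|_{M_d(\mathcal X)}=\|(\pi_0\otimes\id_{\ldc})(A)\|$ and $\pi_0(\mathcal X)$ is a $C^*$-subalgebra of $L(\KK)$. Then $\rho:=\pi\circ\pi_0^{-1}:\pi_0(\mathcal X)\to L(\HH)$ is a well-defined $*$-homomorphism, and $\rho\otimes\id_{\ldc}:M_d(\pi_0(\mathcal X))\to L(\HH\otimes\C^d)$ is again a $*$-homomorphism between $C^*$-algebras. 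For any $*$-homomorphism $\phi:\mathcal A\to\mathcal B$ and any $a\in\mathcal A$ one has $\|\phi(a)\|^2=\|\phi(a)^*\phi(a)\|=\|\phi(a^*a)\|=r\big(\phi(a^*a)\big)\le r(a^*a)=\|a^*a\|=\|a\|^2$, where $r(\cdot)$ denotes the spectral radius, the first and last equalities are the $C^*$-identity, the two middle ones use that $a^*a$ and $\phi(a^*a)$ are selfadjoint, and the inequality uses that $\phi$ cannot enlarge spectra (pass to unitizations if the algebras are non-unital). Applying this to $\phi=\rho\otimes\id_{\ldc}$ and $a=(\pi_0\otimes\id_{\ldc})(A)$ gives $\|(\pi\otimes\id_{\ldc})(A)\|\le\|(\pi_0\otimes\id_{\ldc})(A)\|=\|A\|_{M_d(\mathcal X)}$, which is exactly what was needed.

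The only point requiring care is the bookkeeping around the $C^*$-structure of $M_d(\mathcal X)$: one must be sure that the norm appearing in the definition of $\|\cdot\|_{\cb}$ is precisely the (unique) $C^*$-norm on $\mathcal X\otimes L(\C^d)$, so that ``$*$-homomorphisms are contractive'' can be invoked on the ampliated level. This is exactly what the identification $M_d(\mathcal X)\cong\mathcal X\otimes L(\C^d)$ with the norm induced by $\pi_0\otimes\id_{\ldc}$ supplies, so no additional work is needed there. Everything else — the algebraic identity for $(\sigma\otimes\id_{\ldc})(A)$, the equality $\|U\otimes\id_{\ldc}\|=\|U\|$, and the spectral-radius bound for $*$-homomorphisms — is routine; alternatively, one may simply cite $\|\pi\|_{\cb}\le1$ for $*$-representations as a standard fact and shorten the argument accordingly.
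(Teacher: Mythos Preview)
The paper does not give its own proof of this lemma: it simply states the result and cites \cite[Theorem~1.6]{Pisier:2003}. Your argument is correct and is essentially the standard proof of that fact --- factor $(\sigma\otimes\id_{\ldc})(A)$ as $(U\otimes\id_{\ldc})(\pi\otimes\id_{\ldc})(A)(V\otimes\id_{\ldc})$, use $\|U\otimes\id_{\ldc}\|=\|U\|$, and invoke that $*$-representations are completely contractive via the spectral-radius computation for $*$-homomorphisms. There is nothing to compare against beyond the reference, and your write-up handles the one genuinely delicate point (that the norm on $M_d(\mathcal X)$ in the definition of $\|\cdot\|_{\cb}$ is the unique $C^*$-norm, so the contractivity of $*$-homomorphisms applies at the ampliated level) cleanly.
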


We will also use the famous Fundamental Factorization Theorem~\cite[Theorem~8.4]{Paulsen:2002}.  Below we state the theorem when restricted to finite-dimensional spaces (see also the remark after~\cite[Theorem~16]{johnston:stabnorms}). 

\begin{theorem}[Fundamental factorization theorem]\label{thm:FFT}
	Let $\sigma:L(\C^n)\to L(\C^m)$ be a linear map and let $d = nm$.
	Then, there exist $U,V\in L(\C^m, \C^{dn})$ such that $\|U\|\|V\| \leq \|\sigma\|_{\cb}$ and for any $M\in L(\C^n)$, we have $\sigma(M) = U^*(M\otimes \id_d)V$.
\end{theorem}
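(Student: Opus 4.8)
The plan is to deduce this finite-dimensional statement from the general Fundamental Factorization Theorem for completely bounded maps on operator spaces (\cite[Theorem~8.4]{Paulsen:2002}), applied to the operator space $L(\C^n)=M_n(\C)$ itself, and then to perform a dimension reduction that brings the multiplicity of the dilation down to $d=nm$. First I would normalize: since the identity $\sigma(M)=U^*(M\otimes\id_d)V$ is homogeneous in $\sigma$, we may assume $\|\sigma\|_{\cb}=1$, carry out the argument, and rescale $U$ by $\|\sigma\|_{\cb}$ at the end.

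The key input, which is exactly where complete boundedness is used, is a Stinespring-type representation valid for completely bounded (not merely completely positive) maps: because $\|\sigma\|_{\cb}\le 1$, there exist a finite-dimensional Hilbert space $\HH$, a $*$-representation $\pi:M_n(\C)\to L(\HH)$, and contractions $V_1,V_2:\C^m\to\HH$ with $\sigma(M)=V_1^*\pi(M)V_2$ for all $M\in M_n(\C)$. (The mechanism behind this, in Paulsen's proof, is the off-diagonalisation trick: from $\|\sigma\|_{\cb}\le 1$ one builds unital completely positive ``fillers'' $\psi_1,\psi_2:M_n(\C)\to M_m(\C)$ so that the block map $\Phi:M_{2n}(\C)\to M_{2m}(\C)$ sending $\bigl(\begin{smallmatrix}A&B\\ C&D\end{smallmatrix}\bigr)$ to $\bigl(\begin{smallmatrix}\psi_1(A)&\sigma(B)\\ \sigma(C^*)^*&\psi_2(D)\end{smallmatrix}\bigr)$ is completely positive; Stinespring dilation of $\Phi$ and inspection of the $(1,2)$ corner then produce the displayed factorisation, with contractions because $\Phi$ is unital.) Next, since $M_n(\C)$ is a full matrix algebra, every $*$-representation of it on a finite-dimensional space is unitarily equivalent to $M\mapsto M\otimes\id_k$ on $\C^n\otimes\C^k$ for some multiplicity $k$; absorbing the intertwining unitary into $V_1$ and $V_2$ yields $\sigma(M)=V_1^*(M\otimes\id_k)V_2$ with contractions $V_1,V_2:\C^m\to\C^n\otimes\C^k$.

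It remains to shrink $k$ to $d=nm$. I would use that $M_n(\C)$ acts irreducibly on $\C^n$, so every subspace of $\C^n\otimes\C^k$ invariant under all $M\otimes\id_k$ has the form $\C^n\otimes\mathcal Y$ for some $\mathcal Y\subseteq\C^k$. Picking an orthonormal basis $w_1,\dots,w_r$ of the range of $V_2$ (so $r\le m$) and letting $\mathcal Y_0\subseteq\C^k$ be the span of the second-leg Schmidt vectors of $w_1,\dots,w_r$, each $w_i$ has Schmidt rank at most $n$, hence $\dim\mathcal Y_0\le rn\le nm$; thus the range of $V_2$ lies in an invariant subspace $\mathcal F=\C^n\otimes\mathcal Y$ with $\mathcal Y\subseteq\mathcal Y_0$ and $\dim\mathcal Y\le nm$. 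Since $(M\otimes\id_k)V_2$ maps into $\mathcal F$ for every $M$, and $(M\otimes\id_k)$ restricted to $\mathcal F=\C^n\otimes\mathcal Y$ equals $M\otimes\id_{\mathcal Y}$, writing $P$ for the orthogonal projection of $\C^n\otimes\C^k$ onto $\mathcal F$ we get $\sigma(M)=(PV_1)^*(M\otimes\id_{\mathcal Y})V_2$. Taking $U:=PV_1$ and $V:=V_2$ as maps into $\mathcal F\cong\C^n\otimes\C^{d}$ (pad $\mathcal Y$ with zero coordinates so its dimension is exactly $d$), and undoing the normalisation of the first step, we obtain $U,V\in L(\C^m,\C^{dn})$ with $\sigma(M)=U^*(M\otimes\id_d)V$ and $\|U\|\,\|V\|\le\|P\|\,\|V_1\|\,\|V_2\|\le\|\sigma\|_{\cb}$, as claimed.

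The main obstacle is the second paragraph: realizing a completely bounded map as a corner of a completely positive one (the off-diagonalisation step) is precisely where the hypothesis $\|\sigma\|_{\cb}\le 1$ is exploited and is the substance of Wittstock--Haagerup--Paulsen decomposition theory; it is also the only step that, in the general formulation, routes through an infinite-dimensional detour (Arveson extension / Stinespring), although in this finite-dimensional setting one may simply invoke \cite[Theorem~8.4]{Paulsen:2002} as a black box. Everything afterwards---the classification of finite-dimensional $*$-representations of $M_n(\C)$ and the invariant-subspace dimension count that yields the precise bound $d=nm$---is routine linear algebra.
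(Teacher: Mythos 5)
Your argument is correct, but note that the paper does not prove Theorem~\ref{thm:FFT} at all: it imports it as a black box, citing \cite[Theorem~8.4]{Paulsen:2002} for the factorization and the remark after \cite[Theorem~16]{johnston:stabnorms} for the finite-dimensional multiplicity bound $d=nm$. What you have written is essentially the content of those two citations spelled out: the off-diagonalization/Stinespring step is indeed where $\|\sigma\|_{\cb}\le 1$ is consumed and is exactly Paulsen's Theorem~8.4; the identification of any (non-degenerate) $*$-representation of the simple algebra $M_n(\C)$ with $M\mapsto M\otimes\id_k$ is standard; and your Schmidt-rank count (range of $V_2$ has dimension at most $m$, each vector of $\C^n\otimes\C^k$ has Schmidt rank at most $n$, hence the second legs span a space of dimension at most $nm$) is precisely the dimension-reduction argument behind the stated value of $d$. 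Two cosmetic points: the inclusion should read $\mathcal Y\supseteq\mathcal Y_0$ (you need $\operatorname{range}(V_2)\subseteq\C^n\otimes\mathcal Y$, so $\mathcal Y$ must contain, not be contained in, the span $\mathcal Y_0$ of the second-leg vectors), and if the representation produced by Paulsen's theorem is degenerate you should first replace $V_1,V_2$ by their compressions to the essential subspace (which is harmless since $\pi(M)=\pi(M)\pi(\id)$), after which the multiple-of-the-identity classification applies. Neither affects the validity of the proof.
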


\medskip

\paragraph{Tensors and multilinear forms} 
For vector spaces~$X,Y$ over the same field and positive integer~$t$, recall that a mapping 
\beqn
T:\underbrace{X\times\cdots\times X}_{\text{$t$ times}}\to Y
\eeqn
is \emph{$t$-linear} if for every $x_1,\dots,x_t\in X$ and $i\in[t]$, the map $$y\mapsto T(x_1,\dots,x_{i-1}, y, x_{i+1}, \dots,x_t)$$ is linear. A \emph{$t$-tensor} of dimension $n$ is a map $T:[n]\times \dots \times [n]\rightarrow \C$, which can alternatively be identified by $T=(T_{i_1,\ldots,i_t})_{i_1,\ldots ,i_t=1}^n \in \C^{n\times\dots\times n}$. 
With abuse of notation we identify a $t$-tensor $T\in \C^{n\times\dots\times n}$ with the $t$-linear form $T:\C^n\times\cdots\times\C^n\to\C$ given~by
$$
T(x_1,\dots,x_t)=\sum_{i_1,\ldots,i_t=1}^nT_{i_1,\ldots,i_t} x_{1}(i_1)\cdots x_{t}(i_t).
$$ 

Next, we introduce the general definition of the completely bounded norm of a $t$-linear form $T:\mc X\times\cdots\times\mc X\to\C$ on a $C^*$-algebra~$\mathcal X$.
First, we use the standard identification of such forms with the linear form on the tensor product $\mc X\otimes\cdots\otimes \mc X$ given by $T(x_1\otimes\cdots\otimes x_t)
=
T(x_1,\dots,x_t)$.
We consider a bilinear map $\odot: \big(\mc X\otimes \ldc, \mc X\otimes \ldc\big) \to\mc X\otimes \mc X\otimes\ldc$ for any positive integer~$d$ defined as follows.
For $x,y\in \mc X$ and $M_x,M_y\in \ldc$, let
\beqn
(x\otimes M_x)\odot(y\otimes M_y)
= (x\otimes y)\otimes (M_x M_y).
\eeqn
Observe that this operation changes the order of the tensor factors and \emph{multiplies}~$M_x$ with~$M_y$.
This operation is associative but \emph{not} commutative.
Extend the definition of the $\odot$ operation bi-linearly to its entire domain.
Define the $t$-linear map $T_d:M_d(\mc X)\times\cdots\times M_d(\mc X)\to \ldc$ by
\beq\label{ref:lift}
T_d(A_1\dots,A_t)
=
\big(T\otimes\id_{\ldc}\big)(A_1\odot\cdots\odot A_t).
\eeq
The completely bounded norm of~$T$ is now defined by
\beqn
\|T\|_{\cb}
=
\sup\Big\{
\big\|T_d(A_1,\dots,A_t)\big\|
\st
d\in \N,\:\:
A_j \in M_d(\mc X),\:\:
\|A_j\| \leq 1
\Big\}.
\eeqn 

Note that the definition given in Eq.~\eqref{eq:defnTcb} corresponds to the particular case where the $C^*$-algebra~$\mathcal X$ is formed by the $n\times n$ diagonal matrices.
Since any square matrix with operator norm at most~1 is a convex combination of unitary matrices (by the Russo-Dye Theorem),\footnote{A precise statement and short proof of the Russo-Dye theorem can be found in~\cite{gardnerrussodye:1984}.} the completely bounded norm can also be defined by taking the supremum over unitaries $A_j\in~ M_d(\mc X)$.
The completely bounded norm can be defined even more generally for multilinear maps into~$L(\HH)$, for some Hilbert space~$\HH$, to yield the definition of this norm for linear maps given above, but we will not use this~here.

\paragraph{Quantum query complexity}\label{sec:quantumquerycompl}
The quantum query model was formally defined by Beals et al.~in ~\cite{Beals:2001}.  In this model, we are given black-box access to a unitary operator, often called an oracle~$O_x$, whose description depends in a simple way on some binary input string~$x\in\{0,1\}^n$. 
An application of the oracle on a quantum register is referred to as a quantum \emph{query} to~$x$. 
In the standard form of the model, a query acts on a pair of registers on $(\mathsf Q, \mathsf A)$, where $\mathsf Q$ is an $n$-dimensional query register and $\mathsf A$ is a one-qubit auxiliary register.
A query to the oracle effects the unitary transformation given by
$$
O_x: \ket{i,b}\rightarrow \ket{i,b \oplus x_i}
$$ where $i\in [n]$, $b\in \{0,1\}$. 
(These oracles are also commonly called \emph{bit oracles}.)

A quantum query algorithm consists of a fixed sequence of unitary operations acting on $(\mathsf Q, \mathsf A)$ in addition to a \emph{workspace} register $\mathsf W$. 
A $t$-query quantum algorithm begins by initializing the joint register $(\mathsf Q,\mathsf A,\mathsf W)$ in the all-zero state and continues by interleaving a sequence of unitaries $U_0,\ldots ,U_{t}$ on $(\mathsf Q, \mathsf A, \mathsf W)$ with oracles $O_x$ on $(\mathsf Q, \mathsf A)$. 
Finally, the algorithm performs a $2$-outcome measurement on $\mathsf A$ and returns the measurement outcome.

\begin{figure}[h]
	\begin{center}
		\includegraphics[width = 10cm]{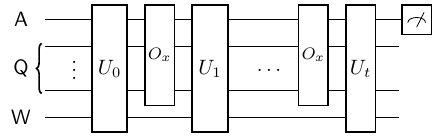}
	\end{center}
	\caption{A $t$-query quantum algorithm that starts with the all-zero state and concludes by measuring the register $\mathsf A$.}
\end{figure}

For a Boolean function $f:\{0,1\}^n\rightarrow \{0,1\}$, the algorithm is said to compute $f$ with error~$\eps > 0$ if for every~$x$, the measurement outcome of register  $\mathsf A$ equals~$f(x)$  with probability at least~$1 - \eps$.
The \emph{bounded-error query complexity} of~$f$, denoted $Q_\eps(f)$, is the smallest~$t$ for which such an algorithm exists. Note that in this model, we are not concerned with the amount of time (i.e., the number of gates) it takes to implement the interlacing unitaries, which could be much bigger than the query complexity~itself. 

Here we will work with a slightly less standard oracle sometime referred to as a \emph{phase oracle}, in which the standard oracle is preceded and followed by a Hadamard on~$\mathsf A$.
Since the Hadamards can be undone by the unitaries surrounding the queries in a quantum query algorithm, using the phase oracle does not reduce generality.
A query to this oracle, sometimes denoted $O_{x,\pm}$, applies the (controlled) unitary $\Diag(({\bf 1}, (-1)^x))$ to joint register~$(\mathsf{A, Q})$.
To avoid having to write $(-1)^x$ later on, we shall work in the equivalent setting where Boolean functions send~$\pmset{n}$ to~$\pmset{}$.

\section{Characterization of quantum query algorithms}\label{sec:mainthmproof}

In this section we prove Theorem~\ref{thm:main}. 
The main ingredient of the proof is the following celebrated representation theorem by Christensen and Sinclair~\cite{christensen:cbrep}  showing that completely-boundedness of a multilinear form is equivalent to the existence of an exceedingly nice factorization. 

\begin{theorem}[Christensen--Sinclair]\label{thm:SC-factor}
	Let $t$ be a positive integer and let $\mc X$ be a $C^*$-algebra. Then, for any $t$-linear form
	$T:\mc X\times\cdots\times \mc X\to \C$, we have $\|T\|_{\cb} \leq 1$ if and only if
	there exist Hilbert spaces $\HH_0,\dots,\HH_{t+1}$ where $\HH_0 = \HH_{t+1} =\C$,  $*$-representations $\pi_i:\mc X\to L(\HH_i)$ for each $i\in[t]$ and contractions $V_i \in L(\HH_{i},\HH_{i-1})$, for each $i\in [t+1]$ 
	such that for any $x_1,\dots,x_t\in \mc X$, we have
	\beq\label{eq:CS-factor}
	T(x_1,\dots,x_t)
	=
	V_1\pi_1(x_1)V_2\pi_2(x_2)V_3\cdots V_{t}\pi_t(x_t)V_{t+1}.
	\eeq
\end{theorem}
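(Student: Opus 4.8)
The plan is to establish the two implications separately. The ``if'' direction is routine: assume \eqref{eq:CS-factor} holds, fix $d\in\N$ and $A_1,\dots,A_t\in M_d(\mc X)$ with $\|A_j\|\le 1$, and expand each $A_j=\sum_a x^{(j)}_a\otimes M^{(j)}_a$ over elementary tensors. Since $\odot$ moves the $\mc X$-legs to the front while multiplying the $\ldc$-legs in order, unwinding \eqref{ref:lift} and substituting \eqref{eq:CS-factor} term by term gives, after regrouping by associativity,
\[
T_d(A_1,\dots,A_t)=(V_1\otimes\id_{\ldc})\,(\pi_1\otimes\id_{\ldc})(A_1)\,(V_2\otimes\id_{\ldc})\cdots(V_t\otimes\id_{\ldc})\,(\pi_t\otimes\id_{\ldc})(A_t)\,(V_{t+1}\otimes\id_{\ldc}).
\]
Each $V_i\otimes\id_{\ldc}$ is a contraction because $V_i$ is, and each $\pi_i\otimes\id_{\ldc}$ is a $*$-homomorphism defined on the $C^*$-algebra $M_d(\mc X)$, hence contractive; multiplying these estimates bounds the right-hand side in norm by $1$, and taking the supremum over $d$ and the $A_j$ yields $\|T\|_{\cb}\le 1$.

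For the converse, assume $\|T\|_{\cb}\le 1$. I would reduce to the Wittstock--Haagerup--Paulsen representation theorem for completely bounded \emph{linear} maps (a CB map $\phi:\mc A\to L(\HH)$ with $\|\phi\|_{\cb}\le 1$ factors as $\phi(a)=S^{*}\rho(a)R$ for a $*$-representation $\rho$ and contractions $S,R$) together with Stinespring's dilation theorem; these precede Theorem~\ref{thm:SC-factor} and their matrix-algebra case is essentially the Fundamental Factorization Theorem~\ref{thm:FFT}. The reduction is the ``off-diagonal'' (canonical shuffle) device of Paulsen and Smith: linearize $T$ by realizing it through a single linear map on the operator subsystem of $M_{t+1}(\mc X)$ consisting of matrices with scalar multiples of $\id$ on the diagonal and arbitrary elements of $\mc X$ on the first superdiagonal, with the ordered tuple $(x_1,\dots,x_t)$ placed in the entries $(1,2),(2,3),\dots,(t,t+1)$. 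The point is that the product of $t$ such matrices carries the \emph{ordered} product $x_1x_2\cdots x_t$ into the $(1,t+1)$-entry, matching the ordering in \eqref{eq:CS-factor}, and one checks that the CB norm of $T$ as a $t$-linear form equals the CB norm of the associated linear map. Extending the latter to all of $M_{t+1}(\mc X)$ with the same CB norm (Wittstock extension), applying the linear representation theorem to obtain $\rho$ on a Hilbert space $\widehat\HH$ together with contractions $S,R$, and cutting $\widehat\HH$ by the orthogonal projections that are the images under $\rho$ of the diagonal matrix units, one gets subspaces $\HH_1,\dots,\HH_t$, with the $(1,1)$- and $(t+1,t+1)$-corners collapsing to $\C$ so that $\HH_0=\HH_{t+1}=\C$; the $(i,i)$-corner of $\rho$ is the $*$-representation $\pi_i$, the off-diagonal pieces of $S$ and $R$ relative to this decomposition assemble into the connecting contractions $V_i$, and \eqref{eq:CS-factor} is read off from block multiplication. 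An equivalent route, closer to the original Christensen--Sinclair argument, decomposes the CB multilinear form into a combination of completely positive multilinear pieces by a Wittstock-type Hahn--Banach argument and applies Stinespring iteratively, peeling off one argument at a time.

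The hard part will be the bookkeeping in this last step: one must verify that the \emph{single} dilating representation $\rho$ and the \emph{single} pair $(S,R)$ decompose compatibly along the grading of $M_{t+1}$ by $\{1,\dots,t+1\}$, so that the output is genuinely a zig-zag of $t$ distinct representations $\pi_i$ interlaced with contractions $V_i$ of the prescribed source and target spaces, rather than one global sandwich $S^{*}\rho(\cdot)R$. This is exactly what the choice of the bidiagonal operator system secures: the diagonal scalar matrices form a commutative subalgebra whose image under $\rho$ is a system of orthogonal projections summing to the identity, and every operator in sight either commutes with these projections or intertwines adjacent ones, which forces the required block form. The other nontrivial input, used here as a black box, is the linear CB representation theorem itself, whose proof rests on Stinespring combined with Paulsen's $2\times 2$ positivity trick.
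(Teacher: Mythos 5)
The paper does not prove Theorem~\ref{thm:SC-factor}: it is imported as a black box from Christensen and Sinclair~\cite{christensen:cbrep} (with pointers to~\cite{paulsenandsmith:multilinear} and~\cite[Chapter~5]{Pisier:2003}), so there is no in-paper argument to compare yours against. Your ``if'' direction is correct and essentially complete: on elementary tensors $A_j=x_j\otimes M_j$ the definition~\eqref{ref:lift} together with~\eqref{eq:CS-factor} gives $T_d(A_1,\dots,A_t)=\big(V_1\pi_1(x_1)V_2\cdots\pi_t(x_t)V_{t+1}\big)\,M_1\cdots M_t$, which is the interlaced product you display, and the bound follows since $*$-homomorphisms between $C^*$-algebras are contractive and $\|V_i\|\leq 1$.

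The converse, however, has a genuine gap at its central step. The linearization you describe --- a \emph{linear} map on the set of bidiagonal matrices in $M_{t+1}(\mc X)$ that ``realizes'' $T$ with the tuple $(x_1,\dots,x_t)$ placed on the superdiagonal of a single matrix --- cannot exist: the assignment $(x_1,\dots,x_t)\mapsto T(x_1,\dots,x_t)$ is $t$-homogeneous under scaling of that matrix, so it is not the restriction of any linear functional on that subspace (already for $t=2$, a linear map on such a subspace of $M_3(\mc X)$ sees $x_1$ and $x_2$ only separately and linearly). Nor can you route through the $(1,t+1)$ entry of the product of $t$ such matrices, since that entry is the product $x_1x_2\cdots x_t$ \emph{inside} $\mc X$, which does not determine the multilinear form $T$. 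The correct linearization lives on the Haagerup tensor product $\mc X\otimes_h\cdots\otimes_h\mc X$, which is not an operator subsystem of $M_{t+1}(\mc X)$, and the representation theorem for completely bounded functionals on that object is precisely the theorem being proved; so the proposed reduction to the Wittstock--Haagerup--Paulsen \emph{linear} representation theorem is either ill-posed or circular as stated. (The ``off-diagonal'' $2\times 2$ device in this subject is used to place a given CB map in the corner of a completely positive map, not to linearize a multilinear form over $M_{t+1}$.) Your closing sentence --- decompose the CB multilinear form into completely positive multilinear pieces and apply Stinespring iteratively --- is indeed the actual Christensen--Sinclair strategy, but all of the substance (the multilinear Stinespring construction and the Wittstock-type decomposition) is left unproved, so the hard direction is not established by this sketch.
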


We first show how the above result simplifies when
restricting to the special case in which the $C^*$-algebra~$\mc X$  is formed by the set of diagonal $n$-by-$n$ matrices. 

\begin{corollary}\label{cor:CS-diag}
	Let $m,n,t$ be positive integers such that $t \geq 2$ and $m = n^t$. Let $T \in \C^{n\times \cdots\times n}$ be a $t$-tensor.  Then, $\|T\|_{\cb}\leq 1$ if and only if
	there exist a positive integer~$d$, unit vectors $u,v\in \C^{m}$ and contractions $U_i,V_i \in L(\C^m,\C^{dn})$ such that for any $x_1,\dots,x_t\in \C^n$, we have
	\beq\label{eq:mult-factor}
	T(x_1,\dots,x_t)
	=
	u^*U_1^*\big(\Diag(x_1)\otimes \id_{d}\big)V_1 \cdots U_t^*\big(\Diag(x_t)\otimes \id_{d}\big)V_t v.
	\eeq
\end{corollary}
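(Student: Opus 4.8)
The plan is to specialize the Christensen--Sinclair theorem (Theorem~\ref{thm:SC-factor}) to the $C^*$-algebra $\mc X = \Delta_n$ of diagonal $n\times n$ matrices, and then clean up the resulting factorization so that all Hilbert spaces have the same, explicitly bounded dimension $m = n^t$. For the forward direction, suppose $\|T\|_{\cb}\le 1$. Theorem~\ref{thm:SC-factor} gives Hilbert spaces $\HH_0=\HH_{t+1}=\C$, $*$-representations $\pi_i:\Delta_n\to L(\HH_i)$, and contractions $W_i\in L(\HH_i,\HH_{i-1})$ with
\beqn
T(x_1,\dots,x_t) = W_1\pi_1(\Diag(x_1))W_2\pi_2(\Diag(x_2))\cdots W_t\pi_t(\Diag(x_t))W_{t+1}.
\eeqn
The first step is to understand $*$-representations of $\Delta_n\cong \C^n$: since $\C^n$ is a commutative $C^*$-algebra generated by the orthogonal projections $E_{11},\dots,E_{nn}$, any $*$-representation $\pi_i$ decomposes $\HH_i$ into an orthogonal direct sum $\HH_i = \bigoplus_{j=1}^n \HH_i^{(j)}$ of (possibly zero) subspaces, with $\pi_i(\Diag(x)) = \bigoplus_j x_j\,\id_{\HH_i^{(j)}}$; equivalently $\pi_i(\Diag(x))$ acts as $\Diag(x)\otimes\id$ on $\C^n\otimes(\text{multiplicity space})$ after reindexing. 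The second step is a standard ``restrict to a cyclic/reachable subspace'' argument: because the outer spaces are one-dimensional, only a subspace of dimension at most $n^t$ of each $\HH_i$ is ever touched when evaluating the right-hand side on all inputs, so we may compress each $\pi_i$ and each $W_i$ to a space of dimension exactly $m=n^t$ (padding with zeros or an auxiliary direction if needed). After this compression each $\pi_i$ becomes the representation $x\mapsto \Diag(x)\otimes\id_d$ on $\C^n\otimes\C^d$ with $nd=m$, i.e.\ $d = n^{t-1}$.

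The third step is to massage the $W_i$'s into the shape $U_i^*(\cdots)V_i$ demanded by~\eqref{eq:mult-factor}. Between consecutive representation applications we have a single contraction $W_{i+1}:\HH_{i+1}\to\HH_i$; to match the template $u^*U_1^*(\Diag(x_1)\otimes\id_d)V_1\,U_2^*(\Diag(x_2)\otimes\id_d)V_2\cdots$ one simply \emph{factors} each $W_{i+1}$ trivially, e.g.\ writing $V_i$ and $U_{i+1}^*$ so that $V_i$ carries $W_{i+1}$ and $U_{i+1}$ is an isometric inclusion (or vice versa), and absorbing $W_1$ into $u^*U_1^*$ and $W_{t+1}$ into $V_t v$ with $u,v$ unit vectors picking out the one-dimensional outer spaces. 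Each resulting $U_i,V_i$ is a contraction because products and compressions of contractions are contractions. This is bookkeeping, and I would present it by defining $V_i := W_{i+1}$ (viewed as a map into $\C^m = \C^{dn}$) and $U_i := \id$ on the appropriate space — the only point requiring a line of care is that the alternating pattern of the $t+1$ contractions and $t$ representations in~\eqref{eq:CS-factor} lines up with the $t$ blocks $U_i^*(\cdots)V_i$ plus the two boundary vectors $u^*,v$, which it does precisely because $t+1 = t + 1$.

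For the converse, given a factorization of the form~\eqref{eq:mult-factor} one reverses the construction: the maps $x\mapsto\Diag(x)\otimes\id_d$ are genuine $*$-representations of $\Delta_n$ on $\C^{dn}$, the composite boundary maps $x^*\mapsto u^*U_1^*$ and $V_t v$ are contractions into/out of $\C=\HH_0=\HH_{t+1}$, and each middle product $V_i U_{i+1}^*$ is a contraction; plugging these into Theorem~\ref{thm:SC-factor} (the ``if'' direction) certifies $\|T\|_{\cb}\le 1$. Here one uses the sub-multiplicativity of the completely bounded norm together with Lemma~\ref{lem:hom_cb} (or directly the definition) to see that conjugating representations by contractions cannot increase the cb-norm. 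The main obstacle — really the only conceptual content beyond invoking Theorem~\ref{thm:SC-factor} — is the dimension reduction: one must argue that the possibly infinite-dimensional $\HH_i$ furnished by Christensen--Sinclair can be replaced by spaces of dimension exactly $n^t$ without changing $T$. I expect to handle this by noting that the vector $W_{t+1}\in\HH_t$ generates, under the commuting family $\{\pi_t(E_{jj})\}_j$, a space of dimension at most $n$; then $W_t$ of that lands in a space which, under $\{\pi_{t-1}(E_{jj})\}_j$, spans dimension at most $n^2$; and so on, so that after $t$ steps everything relevant sits inside a space of dimension at most $n^t$, which we then pad up to exactly $m=n^t$ so that $d=n^{t-1}$ is a clean integer and the $\otimes\id_d$ form is available.
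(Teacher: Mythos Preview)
Your approach is correct and in fact slightly more elementary than the paper's. Both proofs invoke Theorem~\ref{thm:SC-factor} and then compress the Hilbert spaces $\HH_i$ to reachable subspaces of dimension at most $n^t$. The divergence is in how the shape $U_i^*(\Diag(x)\otimes\id_d)V_i$ is obtained. The paper does \emph{not} track $\pi_i$-invariance: it forms cb-contractions $\sigma_i(x)=Q_i^*V_i\pi_i(x)Q_{i+1}$ on $\C^m$ (which need not be $*$-homomorphisms), extends them trivially to all of $L(\C^n)$, and then applies the Fundamental Factorization Theorem (Theorem~\ref{thm:FFT}) as a black box to manufacture $U_i,V_i$. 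You instead exploit that your inductively built reachable subspaces are $\pi_i$-invariant, so the compressed $\pi_i$ is still an honest $*$-representation of $\C^n$ and hence already unitarily of the form $\Diag(x)\otimes\id_d$; this bypasses Theorem~\ref{thm:FFT} entirely and even yields $d=n^{t-1}$ rather than the paper's $d=nm=n^{t+1}$.

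One point in your write-up needs a line of care. Simply ``padding up to exactly $m=n^t$'' does not by itself give the $\Diag(x)\otimes\id_d$ form: you need each of the $n$ eigenspaces $\pi_i(E_{jj})\tilde S_i$ to have dimension \emph{exactly} $d=n^{t-1}$, not just the total to be $m$. This is achievable---your inductive bound shows each such eigenspace has dimension at most $n^{t-i}\le n^{t-1}$---but the padding must be done eigenspace-by-eigenspace (after first enlarging $\HH_i$ if necessary), not just globally.
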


\begin{proof}
	The set $\mc X = \Diag(\C^n)$ of diagonal matrices is a (finite-dimensional) $C^*$-algebra (endowed with the standard matrix product and conjugate-transpose involution).
	Now, define the $t$-linear form $R:\mc X\times\cdots\times \mc X\to\C$ by
	$R(X_1,\dots,X_t) = 
	T(\diag(X_1),\dots,\diag(X_t))$.
	We claim that $\|R\|_{\cb}=\|T\|_{\cb}$.
	Observe that for every positive integer~$d$, the set $\{B\in M_d(\mc X)\st \|B\|\leq 1\}$ can be identified with the set of block-diagonal matrices $B = \sum_{i=1}^nE_{i,i}\otimes B(i)$ of size $nd\times nd$ and blocks $B(1),\dots,B(n)$ of size~$d\times d$ satisfying $\|B(i)\|\leq 1$ for all $i\in[n]$.
	It follows~that
	\begin{align*}
	R_d(B_1,\dots,B_t)
	&=
	\sum_{i_1,\dots,i_t=1}^n R(E_{i_1,i_1},\dots,E_{i_t,i_t})B_1(i_1)\cdots B_t(i_t)\\
	& =
	\sum_{i_1,\dots,i_t=1}^n T_{i_1,\dots,i_t}B_1(i_1)\cdots B_t(i_t),
	\end{align*}
	which shows that $\|R\|_{\cb}=\|T\|_{\cb}$. 
	\vspace{5 pt}
	
	Next, we show that~\eqref{eq:CS-factor} is equivalent to~\eqref{eq:mult-factor}.
	The fact that~\eqref{eq:mult-factor} implies~\eqref{eq:CS-factor} follows immediately from the fact that the map $\Diag(x)\mapsto \Diag(x)\otimes \id_d$ is a $*$-representation.
	Now assume~\eqref{eq:CS-factor}.
	Without loss of generality, we may assume that each of the Hilbert spaces~$\HH_1,\dots,\HH_t$ has dimension at least~$m$.
	If not, we can expand the dimensions of the ranges and domains of the representations~$\pi_i$ and contractions~$V_i$ by dilating with appropriate isometries into larger Hilbert spaces (``padding with zeros'').
	For each $i\in [t]$, let $S_i \subseteq \HH_i$ be the subspace 
	\beqn
	S_i=\Span\big\{\pi_{i}(x_{i})V_{i+1}\cdots V_{t}\pi_t(x_t) V_{t+1}\st x_i,\dots,x_t\in \mc X\big\}.
	\eeqn
	Since $\dim(\mc X) = n$, we have that $\dim(S_i) \leq m$.
	For each $i\in[t]$, let $Q_i \in L(\C^m, \HH_i)$ be an isometry such that $S_i\subseteq \im(Q_i)$.
	Note that~$V_{t+1}$ is a vector in the unit ball of~$\HH_{t}$. Let $Q_{t+1} \in L(\C^{m}, \HH_{t})$ be an isometry such that $V_{t+1}\in \im(Q_{t+1})$. Note that for each $i\in[t+1]$, the map $Q_{i}Q_{i}^*$ acts as the identity on~$\im(Q_i)$. For each $i\in \{2,\dots,t\}$ define the map $\sigma_i:\mc X\to L(\C^m)$ by $\sigma_i(x) = Q_i^*V_{i}\pi_i(x)Q_{i+1}$ and $\sigma_1(x)  = Q_1^* \pi_1(x) Q_2$.
	Finally define $u = Q_1^*V_1^*$ and $v = Q_{t+1}^*V_{t+1}$.
	Then, the right-hand side of~\eqref{eq:CS-factor} can be written as
	\beqn
	u^*\sigma_1(x_1)\cdots \sigma_t(x_t) v.
	\eeqn
	It follows from Lemma~\ref{lem:hom_cb} that $\|\sigma_i\|_{\cb} \leq 1$.
	Let $\sigma_i':L(\C^n)\to L(\C^m)$ be the linear map given by $\sigma_i'(M) = \sigma_i(\Diag(M_{11},\dots,M_{nn}))$ for any $M\in L(\C^m)$.
	Then, for every diagonal matrix~$x\in\mathcal X$, we have $\sigma_i(x) = \sigma_i'(x)$ and also $\|\sigma_i'\|_{\cb} \leq \|\sigma_i\|_{\cb}$. 
	It follows from Theorem~\ref{thm:FFT} that	 there exist a positive integer~$d_i$ and contractions $U_i,V_i:L(\C^m,\C^{dn})$ such that $\sigma'_i(x) = U_i^*(x\otimes\id_{d_i})V_i$ for every $x\in\mc  X$.
	We can take all $d_i$ equal to $d=\max_i\{d_i\}$ by suitably dilating the contractions~$U_i,V_i$.
	Setting $u' = u/\|u\|_{2}$ and $U_1' = \|u\|_{2}U_1$, and similarly defining $v', V_{i+1}'$ shows that  Eq.~\eqref{eq:CS-factor} implies Eq.~\eqref{eq:mult-factor}. 
\end{proof}

Corollary~\ref{cor:CS-diag} implies the following lemma, from which Theorem~\ref{thm:main} easily follows.

\begin{lemma}\label{lem:cb-alg-tensor}
	Let $\beta:\pmset{n}\to [-1,1]$ be some map and let~$t$ be a positive integer.
	Then, the following are equivalent.
	\begin{enumerate}
		\item There exists a $(2t)$-tensor $T\in \R^{2n\times\cdots\times 2n}$ such that $\|T\|_{\cb} \leq 1$ and for every $x\in \pmset{n}$ and $y = (x,{\bf 1})$, we have 
		\beqn
		\sum_{i_1,\dots,i_{2t}=1}^{2n} T_{i_1,\dots,i_{2t}} y_{i_1}\cdots y_{i_{2t}}
		=
		\beta(x).
		\eeqn 
		\item There exists a $t$-query quantum algorithm that, on input $x\in\pmset{n}$, returns a  sign with expected value~$\beta(x)$.
	\end{enumerate}
\end{lemma}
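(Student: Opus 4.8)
The plan is to prove the two directions separately, using Corollary~\ref{cor:CS-diag} as the bridge between the tensor factorization and the unitaries appearing in a quantum query algorithm. The key observation is that a $t$-query phase-oracle algorithm on input $x\in\pmset{n}$ produces a final state of the form $U_t O_{x,\pm} U_{t-1} O_{x,\pm}\cdots U_1 O_{x,\pm} U_0\ket{0}$, and the expected value of the sign returned by the final two-outcome measurement on register $\mathsf A$ is $\braket{0|U_0^* O_{x,\pm}^* U_1^* \cdots O_{x,\pm}^* U_t^* \, Z_{\mathsf A} \, U_t O_{x,\pm} U_{t-1}\cdots U_1 O_{x,\pm} U_0|0}$, where $Z_{\mathsf A}$ is the Pauli-$Z$ observable on $\mathsf A$. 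Since the phase oracle $O_{x,\pm}$ acts as $\Diag(({\bf 1},(-1)^x))$ on $(\mathsf A,\mathsf Q)$ — i.e.\ as a diagonal unitary controlled by the $n$ query-register indices together with the one auxiliary bit, hence by $2n$ "slots" if we split the oracle into its two halves — this expression is exactly of the multilinear shape in Eq.~\eqref{eq:mult-factor} after substituting $y=(x,{\bf 1})$ into a $(2t)$-tensor. The bookkeeping trick is: each of the $t$ phase oracles contributes two of the $2t$ tensor indices (one from the positive phase $\Diag(({\bf 1},\cdot))$-part, which on the first $n$ coordinates is the constant $1$ — this is why we evaluate at $y=(x,{\bf 1})$ — and one from the $(-1)^x$-part), or alternatively one splits each $O_{x,\pm}$ as a product $O_{x,\pm} = O_{x,\pm}^{1/2}\cdot O_{x,\pm}^{1/2}$ using the square root of the diagonal unitary, giving $2t$ genuine applications of a diagonal operator depending on $y\in\R^{2n}$; the two formulations agree.

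For the direction (2)$\Rightarrow$(1): given a $t$-query algorithm with unitaries $U_0,\dots,U_t$ on the joint register $\mathcal H=\mathsf Q\otimes\mathsf A\otimes\mathsf W$ of some dimension $D$, I would write $\Exp[\mathcal A(x)] = \braket{\psi_L(x)|\psi_R(x)}$ where $\ket{\psi_R(x)}$ is the final state and $\ket{\psi_L(x)} = Z_{\mathsf A}\ket{\psi_R(x)}$; expanding both factors gives a $2t$-fold product of diagonal matrices (the halved oracles) interleaved with the fixed unitaries. Reading off the coefficients, this is exactly $T(y,\dots,y)$ for the $(2t)$-tensor $T$ whose $(i_1,\dots,i_{2t})$ entry is the matrix element $\braket{0|\,U_0^*\,D_{i_1}^{(1)}\,U_1^*\cdots\,U_t^*\,Z_{\mathsf A}\,U_t\,D_{i_{t+1}}^{(t+2)}\cdots D_{i_{2t}}^{(2t)}\,U_0\,|0}$ with $D_i$ the rank-one diagonal projector onto coordinate $i$ of the $2n$-dimensional "oracle space" (here I have to be a little careful that splitting $\mathsf Q\otimes\mathsf A$, which is $2n$-dimensional, matches the indexing $1,\dots,2n$). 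Then $\|T\|_{\cb}\le 1$ follows from Corollary~\ref{cor:CS-diag}: the factorization \eqref{eq:mult-factor} holds with $u,v$ the (images under the fixed unitaries of the) all-zero state, and $U_j^*,V_j$ built from the $U_i$ and $Z_{\mathsf A}$, all of which are contractions (indeed isometries/unitaries, appropriately padded with zeros to land in $\C^{dn}$). One must double-check the entries of $T$ are real — they are, since the oracle, the initial state, and WLOG the $U_i$ and the measurement can be taken real, or one symmetrizes; and one checks that evaluating at $y=(x,{\bf 1})$ reproduces $\beta(x)$, which is immediate from how the constant-$1$ halves of the oracles were introduced.

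For the direction (1)$\Rightarrow$(2): given $T$ with $\|T\|_{\cb}\le 1$, apply Corollary~\ref{cor:CS-diag} to obtain $d$, unit vectors $u,v\in\C^{m}$ ($m=(2n)^{2t}$) and contractions $U_i,V_i\in L(\C^m,\C^{d\cdot 2n})$ with the factorization \eqref{eq:mult-factor} evaluated at $x_1=\dots=x_{2t}=y$. The task is to assemble a genuine quantum query algorithm realizing $\braket{u|\cdots|v}$. The standard move is to pass from contractions to unitaries on a larger space: each contraction $C$ dilates to a unitary $\begin{pmatrix} C & *\\ * & *\end{pmatrix}$, and a product of the form $U_i^*(\Diag(y)\otimes\id_d)V_i$ — where $\Diag(y)\otimes\id_d$ with $y=(x,{\bf 1})$ is precisely a phase oracle $O_{x,\pm}$ tensored with identity on an ancilla — can be simulated by a fixed unitary, then an oracle call, then another fixed unitary, all acting on the joint register, provided we route the "right" blocks through and discard the rest (projecting back at the end onto the subspace where $u$ and $v$ live, which only decreases the value — but here we need equality, so one instead absorbs $u,v$ into the initial state preparation $U_0$ and the final measurement). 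Since each oracle call $\Diag(y)\otimes\id_d$ with $y=(x,{\bf 1})$ covers $2n$ coordinates of which the last $n$ are the constant $1$, it is literally one application of the phase oracle $O_{x,\pm}$ on an $n+\lceil\log\rceil$-qubit query-plus-auxiliary register, and grouping the $2t$ diagonal operators into $t$ pairs (each pair being $O_{x,\pm}^{1/2}\cdot O_{x,\pm}^{1/2}=O_{x,\pm}$, or directly interpreting consecutive pairs) yields exactly $t$ oracle queries. The output being a sign with expectation $\beta(x)$ then follows since $\braket{u|(\text{product})|v}=T(y,\dots,y)=\beta(x)\in[-1,1]$ can be read as the bias of a two-outcome measurement.

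The main obstacle I expect is the indexing/dilation bookkeeping in the second direction: matching the abstract contractions $U_i,V_i$ of Corollary~\ref{cor:CS-diag} — which have different-sized domains and codomains and absorb the unit vectors $u,v$ — to a \emph{single fixed-dimension} joint register on which the \emph{same} phase oracle acts at every step, while keeping the number of oracle calls at exactly $t$ (not $2t$) and getting \emph{exact} equality $\Exp[\mathcal A(x)]=\beta(x)$ rather than an inequality. The square-root-of-the-oracle device, together with dilating all contractions to unitaries of a common size and padding the initial/final vectors, is what makes this go through, but verifying that the resulting sequence of unitaries and oracle calls is a legal quantum query algorithm in the sense of Section~\ref{sec:quantumquerycompl} — in particular that the oracle is applied to the prescribed $(\mathsf Q,\mathsf A)$ registers and the fixed unitaries to everything — requires care. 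Everything else (reality of entries, the role of $y=(x,{\bf 1})$, contractivity of the pieces) is routine once this correspondence is set up cleanly.
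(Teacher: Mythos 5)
Your direction (2)$\Rightarrow$(1) is essentially the paper's argument: the $2t$ tensor slots come from writing the output expectation as $\psi_x^*Q\psi_x$ with $t$ diagonal oracle factors coming from the bra and $t$ from the ket, and the resulting $(2t)$-linear form is manifestly of the factored shape in Corollary~\ref{cor:CS-diag}, hence has completely bounded norm at most~$1$; reality of the tensor is handled either by taking the unitaries real or by averaging $T$ with its entrywise conjugate, both of which are legitimate. Note, though, that your first paragraph misattributes the factor of two: each phase oracle $\Diag(({\bf 1},(-1)^x))$ contributes a \emph{single} tensor index ranging over $[2n]$ (namely its $2n$ diagonal entries, which is why one substitutes $y=(x,{\bf 1})$ after a permutation of coordinates), and the doubling comes solely from the bra--ket sandwich. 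The ``square root of the oracle'' device cannot even produce a multilinear form, since the entries of $\Diag(y)^{1/2}$ are $\sqrt{y_i}$, which are not linear in $y$.

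The direction (1)$\Rightarrow$(2) has a genuine gap, located exactly where you flag the main obstacle. After Corollary~\ref{cor:CS-diag}, the quantity $T(y,\dots,y)$ is a product $\widetilde u^*\,W_1\,\widetilde D(y)\,W_2\,\widetilde D(y)\cdots W_{2t}\,\widetilde D(y)\,W_{2t+1}\,\widetilde v$ containing $2t$ occurrences of $\widetilde D(y)=\Diag(y)\otimes \id_d$, each separated from the next by a nontrivial fixed contraction $W_i$. Your proposal to ``group the $2t$ diagonal operators into $t$ pairs'' fails because the occurrences are not adjacent---there is a $W_i$ between any two of them---and each occurrence is already a full oracle call rather than half of one, so nothing can be merged; running the product as a single linear sequence of operations would cost $2t$ queries. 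The missing idea is the paper's Hadamard-test-style construction: split the product at the middle as $T(y,\dots,y)=v_1^*v_2$, where $v_1$ is obtained from $\widetilde u$ by the first $t$ oracle-and-unitary layers and $v_2$ from $\widetilde v$ by the adjoints of the last $t$ layers; then prepare the state $\tfrac{1}{\sqrt2}\bigl(e_1\otimes W_1\widetilde u+e_2\otimes W_{2t+1}^*\widetilde v\bigr)$ on a control qubit plus work register, and in each of $t$ rounds apply the oracle \emph{once}---it acts identically on both branches, so a single call advances both simultaneously---followed by a pair of controlled fixed unitaries (one $W_j$ on the $e_1$ branch, one $W_{\cdot}^*$ on the $e_2$ branch, after dilating the contractions to unitaries). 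A final Hadamard on the control and a measurement then yield a sign with expectation $v_1^*v_2=T(y,\dots,y)=\beta(x)$ using exactly $t$ queries. Without this parallel-branch trick your argument only establishes a $2t$-query algorithm, which is a strictly weaker statement than the lemma.
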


\begin{remark}
	Note that Lemma \ref{lem:cb-alg-tensor} itself already gives a characterization of quantum query algorithms, but in terms of the completely bounded norm of a tensor, as opposed to a polynomial.
	Then, the reader could wonder about the interest of Theorem \ref{thm:main}, which is a similar characterization (though of course, equivalent), but in terms of a more complicated-looking norm. 
	As  mentioned in the introduction, the completely bounded norm of a polynomial can be significantly smaller than that of its associated symmetric tensor.
	Therefore, given a function $\beta$, a symmetric $(2t)$-tensor~$T$ verifying item 1 in Lemma~\ref{lem:cb-alg-tensor} and the degree-$(2t)$ polynomial $p(x) = T(x,\dots,x)$, checking that $\|p\|_{\cb}\leq 1$ should be easier than proving that $\|T\|_{cb}\leq 1$. In fact, it may well be the case that $\|T\|_{cb}>1$ (so Lemma  \ref{lem:cb-alg-tensor} does not allow us to conclude anything, and we should look for another $T$), while $\|p\|_{\cb}\leq 1$ which allows us to apply Theorem \ref{thm:main}.
\end{remark}

\begin{proof}[Proof of Lemma~\ref{lem:cb-alg-tensor}]
	We first prove that $(2)$ implies $(1)$.
	As discussed in Section~\ref{sec:quantumquerycompl}, a $t$-query quantum algorithm with phase oracles  initializes the joint register $(\mathsf{ A,Q, W})$ in the all-zero state on which it then performs some unitaries $U_1,\ldots, U_{t}$ interlaced with queries $D(x) = \Diag(({\bf 1},x))\otimes \id_{\mathsf W}$. Let $\{P_0,P_1\}$ be the the two-outcome measurement done at the end of the algorithm and assume that it returns $+1$ on measurement outcome zero and $-1$ otherwise. 
	Let $Q=P_0-P_1$ and note that $Q$ is a contraction since $P_0,P_1$ are positive semi-definite and satisfy $P_0 + P_1 = \id$.
	The final state of the quantum algorithm (before the measurement of register~$\mathsf A$) is 
	$$
	\psi_x=U_{t} D(x) \cdots U_2 D(x)U_1 e_1.
	$$
	Hence the expected value of the measurement outcome is then given by 
	\beq\label{eq:alg_exp}
	\psi_x^* Q   \psi_x.
	\eeq
	By assumption, this expected value equals $\beta(x)$ for every $x\in \pmset{n}$. For $z\in \C^{2n}$, denote $D'(z)=\Diag((z_{n+1},\dots,z_{2n},z_1,\dots,z_n))\otimes\id_{\mathsf W}$ and $\widetilde{U}_t=U_t^*QU_t$. 
	Define the $(2t)$-linear form~$T$ by
	\begin{align*}
	&T(y_1,\ldots,y_{2t})=
	u^* U_1^*D'(y_1)U_{2}^*  \cdots D'(y_t)\widetilde{U}_t D'(y_{t+1}) \cdots U_2 D'(y_{2t})U_1u.
	\end{align*}
	Clearly $T((x,{\bf 1})\ldots,(x,{\bf 1}))=\beta(x)$ for every $x\in \pmset{n}$.
	Moreover, by definition~$T$ admits a factorization as in~\eqref{eq:mult-factor}. It thus follows from Corollary~\ref{cor:CS-diag} that $\|T\|_{\cb} \leq 1$.
	We turn~$T$ into a real tensor by taking its real part~$T' = (T + \overline{T})/2$, where $\overline{T}$ is the coordinate-wise complex conjugate of~$T$.\footnote{An anonymous referee pointed out one could also use a result of Barnum et al.~\cite{barnum:quantumquerysdp} showing that the unitaries in quantum query algorithms can be assumed to be real. In that case one can assume~$T$ is a real tensor to begin with.}
	Since for any $x\in\pmset{n}$ and $y = (x,{\bf 1})$, the value $T(y,\dots,y)$ is real, we have $T'(y,\dots,y) = \beta(x)$.
	We need to show that $\|T'\|_{\cb} \leq 1$.
	To this end, consider an arbitrary positive integer~$d$, unit vectors $v,w\in \C^d$ and sequences of unitary matrices $V_1(i),\dots,V_{2t}(i)$ for $i\in[n]$ such that
	\begin{align*}
	\Big\|
	\sum_{i_1,\dots,i_{2t}=1}^{2n} \overline{T_{i_1,\dots,i_{2t}}}
	V_1(i_1)\cdots V_{2t}(i_{2t})
	\Big\|
	&=
	\Big|
	\sum_{i_1,\dots,i_{2t}=1}^{2n} \overline{T_{i_1,\dots,i_{2t}}}
	v^*V_1(i_1)\cdots V_{2t}(i_{2t})\, w
	\Big|.
	\end{align*}
	where we assumed that the unit vectors $v,w\in \C^d$ maximize the operator norm. Note that $\|\overline{T}\|_{\cb}$ is given by the supremum over~$d$ and~$V_j(i)$.
	Taking the complex conjugate of the above summands on the right-hand side allows us to express the above absolute value as
	\begin{align}
	\Big|
	\sum_{i_1,\dots,i_{2t}=1}^{2n} T_{i_1,\dots,i_{2t}}
	\bar v^*\overline{V_1(i_1)}\cdots \overline{V_{2t}(i_{2t})}\, \bar w
	\Big|,\label{eq:Tbound}
	\end{align}
	where $\bar{v}, \bar{w}, \overline{V_j(i)}$ denote the coordinate-wise complex conjugates.
	Since each $\overline{V_j(i)}$ is still unitary, it follows that~\eqref{eq:Tbound} is at most $\|T\|_{\cb}$ and so $\|\overline T\|_{\cb} \leq \|T\|_{\cb} \leq 1$.
	Hence, by the triangle inequality, $\|T'\|_{\cb} \leq (\|T\|_{\cb} + \|\overline T\|_{\cb})/2 \leq 1$ as desired.
	\vspace{5 pt}
	
	Next, we show that $(1)$ implies $(2)$. Let~$T$ be a $(2t)$-tensor as in item~1. 
	From Corollary~\ref{cor:CS-diag} it follows that~$T$ admits a factorization as in~\eqref{eq:mult-factor}. Let~$V_0,U_{2t+1} \in L(\C^m, \C^{2dn})$ be isometries.
	For each $i\in [2t+1]$, define the map $W_i\in L(\C^{2dn})$ by $W_i = V_{i-1}U_i^*$.
	Observe that each~$W_i$ is a contraction and
	recall that unitaries are contractions.
	For the moment, assume for simplicity that each~$W_i$ is in fact unitary. Define two vectors $\widetilde u = V_0u$ and $\widetilde v = U_{2t+1}v$ and observe that these are unit vectors in~$\C^{2dn}$.
	The right-hand side of~\eqref{eq:mult-factor} then gives us
	\beq\label{eq:algo_output}
	T(y_1,\ldots,y_{2t})=\widetilde u^* W_1 \widetilde{D}(y_1) W_2\widetilde{D}(y_2)W_3 \cdots W_{2t} \widetilde{D}(y_{2t}) W_{2t+1}\widetilde v,
	\eeq
	where $\widetilde{D}(y_i)=\Diag(y_i)\otimes \id_d$ for $i\in [2t]$. In particular, if we define two unit vectors
	\begin{align*}
	v_1&=(\Diag(y)\otimes \id_d)W_{t}\cdots W_2(\Diag(y)\otimes \id_{d})W_1 \widetilde{u},\\
	v_2&=W_{t+1}^*(\Diag(y)\otimes \id_d)W_{t+2}^*\cdots W_{2t}^*(\Diag(y)\otimes \id_{d}) W_{2t+1}^* \widetilde{v},
	\end{align*}
	then $T(y,\ldots,y)=|v_1^* v_2|$. Based on this, we obtain the quantum query algorithm  that prepares $v_1$ and $v_2$ in parallel, each using at most $t$ queries. This is described in Figure~\ref{fig:t-circuit}.
	
	\begin{center}
		\begin{figure}[!ht]
			\includegraphics[width = 13cm]{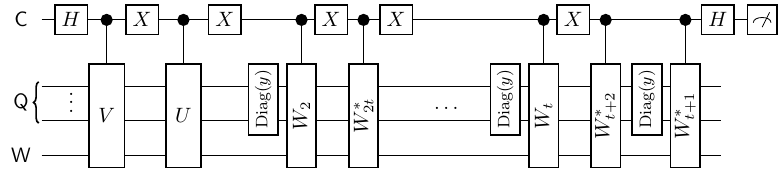}
			\caption{The registers $\mathsf C,\mathsf Q,\mathsf W$ denote the control, query and workspace registers. 
				Let $U,V$ be unitaries with $W_1\widetilde{u}$ and $W_{2t+1}\widetilde{v}$ as their first columns, respectively and for $x\in\pmset{n}$ and $y = (x, {\bf 1})$, let $\Diag(y)$ be the query operator. The algorithm begins by initializing the joint register $(\mathsf C,\mathsf Q,\mathsf W)$ in the all-zero state  and proceeds by performing the displayed operations.
				The algorithm returns~$+1$ if the outcome of the measurement on~$\mathsf C$ equals zero and $-1$ otherwise.}
			\label{fig:t-circuit}
		\end{figure}
	\end{center}
	
	To see why this algorithm satisfies the requirements, first note that the algorithm makes $t$ queries to the input~$x$. For the correctness of the algorithm, we begin by observing that before the application of the first query, the state of the joint register $(\mathsf C,\mathsf Q,\mathsf W)$ is 
	$$
	\tfrac{1}{\sqrt{2}}(e_1\otimes W_1 \widetilde{u}+  e_2\otimes W_{t+1} \widetilde{v}).
	$$
	Before the final Hadamard gate, the state of the joint register  is given by
	\begin{align*}
	&\frac{1}{\sqrt{2}} e_1\otimes\big((\Diag(y)\otimes \id_d)W_{t}\cdots W_2(\Diag(y)\otimes \id_{d})W_1 \widetilde{u}\big) \\+&\frac{1}{\sqrt{2}} e_2\otimes\big(W_{t+1}^*(\Diag(y)\otimes \id_d)W_{t+2}^*\cdots W_{2t}^*(\Diag(y)\otimes \id_{d}) W_{2t+1}^* \widetilde{v}\big).
	\end{align*}
	A standard calculation and~\eqref{eq:algo_output} then show that after the final Hadamard gate, the expected output of the algorithm is precisely $T((x, {\bf 1}),\ldots,(x, {\bf 1})) = \beta(x)$.
	In the general case where the $W_i$s are not necessarily unitary, we can use the fact that, by the Russo--Dye Theorem and Carath\'eodory's Theorem, each~$W_i$ is a convex combination of at most $(dn)^2 + 1$ unitaries.
	The algorithm can thus use randomness to effect each~$W_i$ on expectation.
	Alternatively, by linear algebra there exists a unitary matrix~$W_i'\in \C^{2dn\times 2dn}$ that has~$W_i$ as its upper-left corner (see~\cite[Lemma~7]{Aaronson:2016}), through which the algorithm could implement~$W_i$ by working on a larger quantum register.
\end{proof}

Using Lemma~\ref{lem:cb-alg-tensor}, we now prove our main Theorem~\ref{thm:main}. 

\begin{proof}[Proof of Theorem~\ref{thm:main}]
	We first show that $(2)$ implies $(1)$. Using the equivalence in Lemma~\ref{lem:cb-alg-tensor}, there exists a $(2t)$-tensor $T\in \R^{2n\times\cdots\times 2n}$ such that $\|T\|_{\cb} \leq 1$ and for every $x\in \pmset{n}$ and $y = (x,{\bf 1})$, we have 
	\beqn
	\sum_{i_1,\dots,i_{2t}=1}^{2n} T_{i_1,\dots,i_{2t}} y_{i_1}\cdots y_{i_{2t}}
	=
	\beta(x).
	\eeqn
	Define the symmetric $2t$-tensor $T' = \frac{1}{(2t)!}\sum_{\sigma\in S_{2t}} T\circ \sigma$.
	Let~$p\in \R[x_1,\dots,x_{2n}]$ be the form of degree~$2t$ associated with~$T'$. Since there is a unique symmetric tensor associated with a polynomial, it follows that $T'=T_p$ (where $T_p$ is defined by Eq.~\eqref{eq:Tpdef}). Then,  $p((x,{\bf 1}))=\beta(x)$ for every $x\in \pmset{n}$. 
	Moreover, if we set $T^\sigma = T$ for each $\sigma\in S_t$, it follows  from the above decomposition of~$T_p$ and Definition~\ref{defn:pcb} that $\|p\|_{\cb}\leq \|T\|_{\cb}\leq 1$.

	Next, we show that $(1)$ implies $(2)$. Let $p$ be a degree-$(2t)$ form satisfying $\|p\|_{\cb}\leq~1$. Suppose~$T_p$ as defined in Eq.~\eqref{eq:Tpdef} can be written as $T_p=\sum_{\sigma \in S_{2t}} T^\sigma \circ \sigma$ and $\sum_{\sigma \in S_{2t}} \|T^\sigma\|_{\cb}=\|p\|_{\cb}\leq 1$. Define $T=\sum_{\sigma\in S_{2t}}T^\sigma$. Then, using the triangle inequality, it follows that $\|T\|_{\cb}\leq \sum_{\sigma\in S_{2t}} \|T^\sigma\|_{\cb}\leq 1$. Also note that for any $y\in \R^{2n}$, 
	$$
	T(y,\ldots,y)=\sum_{\sigma \in S_{2t}} T^\sigma (y,\ldots,y)=\sum_{\sigma \in S_{2t}} (T^\sigma  \circ \sigma) (y,\ldots,y)=T_p(y,\ldots,y)=p(y).
	$$
	Using Lemma~\ref{lem:cb-alg-tensor} (in particular $(1)\implies (2)$ in Lemma~\ref{lem:cb-alg-tensor}) for the tensor $T$, the theorem~follows.
\end{proof}
We now prove Corollary~\ref{cor:cbdeg}, which is an immediate consequence of our main theorem.

\begin{proof}[Proof of Corollary~\ref{cor:cbdeg}]
	We first show $\cbdeg_\eps(f)\geq Q_\eps(f)$: Suppose $\cbdeg_\eps(f)=d$, then there exists a degree-$(2d)$ form $p$ satisfying: $|p(x)-f(x)|\leq 2\eps$ for every $x\in D$ and $\|p\|_{\cb}\leq 1$. Using our characterization in Theorem~\ref{thm:main}, it follows that there exists a $d$-query quantum algorithm $\mathcal A$, that on input $x\in D$, returns a  sign with expected value $p(x)$.  So, our $\eps$-error quantum algorithm for $f$ simply runs~$\mathcal A$ and outputs the~sign.
	
	We next show $\cbdeg_\eps(f)\leq Q_\eps(f)$.
	Suppose $Q_\eps(f)=t$. Then, there exists a $t$-query quantum algorithm that, on input $x\in D$, outputs a sign with expected value $\beta(x)$ satisfying $|\beta(x)-f(x)|\leq 2\eps$. Note that we could also run the quantum algorithm for $x\notin D$ and let $\beta(x)$ be the expected value of the quantum algorithm for such $x$s. Using Theorem~\ref{thm:main}, we know that there exists a degree-$(2t)$ form $p$ satisfying $\beta(x)=p(x)$ for every $x\in \pmset{n}$ and $\|p\|_{\cb}\leq 1$. Clearly $p$ satisfies the conditions of Definition~\ref{def:cbdeg}, hence $\cbdeg_\eps(f)\leq t$.
\end{proof}

\section{Separations for quartic polynomials}\label{sec:cubicseparations}
In this section we show the existence of a quartic polynomial~$p$ that is bounded but for which any two-query quantum algorithm~$\mathcal A$ satisfying $\Exp[\mathcal A(x)] = Cp(x)$ for every $x\in\pmset{n}$ must necessarily have $C = O(n^{-1/2})$.
We show this using a (random) \emph{cubic} form that is bounded, but whose  completely bounded norm is $\poly(n)$, following a construction of~\cite[Theorem 18.16]{DJT:1995}.

\begin{remark}
Earlier versions of this work finished this section with a claim that Corollary~\ref{cor:quartic} gives a counterexample to possible quartic extensions of Theorem~\ref{thm:AAIKS}.
Although the claim holds as stated, the proof contained a bug.
The faulty proof is omitted from the current version.
An explanation of the issue, a corrected proof and stronger examples are presented in~\cite{BEG:2022}.
\end{remark} 

Given a form $p: \R^n\rightarrow \R$, we define its norm as $$\|p\|=\sup\{|p(x)|\st x\in \pmset{n}\}.$$
Note that the condition $\|p\|\leq 1$ is equivalent to $p$ being bounded.

\begin{theorem}\label{thm:rand_T}
	There exist absolute constants $C,c\in (0,\infty)$ such that the following holds.  Let\footnote{Recall that $|\alpha|$ in the definition of $p$ is defined as $|\alpha|=\sum_i\alpha_i$.}
	\beqn
	p(x) =\sum_{\alpha\in \{0,1,2,3\}^n:\, |\alpha| = 3} c_\alpha x^\alpha
	\eeqn 
	be a random cubic form such the coefficients~$c_\alpha$ are independent uniformly distributed $\pmset{}$-valued random variables.
	Then, with probability at least $1- Cne^{-c n}$, we have $\|p\|_{\cb} \geq c\sqrt{n}\|p\|$.
\end{theorem}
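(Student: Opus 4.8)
The plan is to lower bound $\|p\|_{\cb}$ by the \emph{jointly completely bounded norm} $\|T_p\|_{\jcb}$ (equivalently, the entangled bias of the associated $3$-XOR game), then use a dimension/counting argument to show that a random sign tensor has large $\|T_p\|_{\jcb}$ while remaining bounded as a form on $\pmset{n}$. First I would record the elementary inequality $\|p\|_{\cb} \geq \|T_p\|_{\jcb}$, which comes from the fact that inserting tensor products $U_1(i_1)\otimes\cdots$ in place of products $U_1(i_1)\cdots$ in \eqref{eq:defnTcb} only restricts the supremum further after one also checks the decomposition structure in \eqref{eq:defnpcb} is compatible (here each summand $T^\sigma$ contributes its own jcb-norm, and $\sum_\sigma \|T^\sigma\|_{\jcb} \geq \|\sum_\sigma T^\sigma \circ \sigma\|_{\jcb} = \|T_p\|_{\jcb}$ is false in general, so more care is needed — see the obstacle below). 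The cleaner route, following \cite[Theorem~18.16]{DJT:1995}, is to bound $\|p\|_{\cb}$ directly from below by evaluating the relevant trilinear form on a well-chosen family of unitaries (e.g.\ tensor powers of Pauli/shift-and-clock matrices, or a concrete commuting family realizing a $Q$-algebra obstruction), producing a quantity of order $\|T_p\|_{\mathrm{HS}}/\sqrt{n}$-type that for a random sign tensor concentrates around $n^{3/2}/\sqrt n \cdot (\text{normalization})$.

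Concretely, the steps I would carry out are: (1) Fix the normalization so that $\|p\| = \sup_{x\in\pmset n}|p(x)| \leq 1$ with high probability — this is a standard concentration estimate: $p(x)$ is a sum of $\Theta(n^3)$ independent signs times monomials, so after dividing by $\Theta(n^{3/2})$ each $|p(x)|$ is $O(\sqrt{\log(2^n)}) = O(\sqrt n)$ with probability $1 - 2^{-\Omega(n)}$ by Hoeffding, and a union bound over the $2^n$ inputs gives $\|p\| = O(\sqrt n)$ (the precise power is absorbed into the constant $c$; the theorem only claims $\|p\|_{\cb} \geq c\sqrt n\,\|p\|$, so I want $\|p\| = \Theta(n)$ in raw scale and $\|p\|_{\cb} = \Omega(n^{3/2})$ in raw scale). (2) Lower bound $\|p\|_{\cb}$ by choosing an explicit test: plug in, for each coordinate $i\in[n]$, the same fixed unitary triple built from a maximally entangled state / generalized Pauli operators on $\C^k$ with $k$ a function of $n$, exactly mirroring the GT/$Q$-algebra computation in \cite{DJT:1995}; this reduces $\|p\|_{\cb}$ to (a constant times) $\frac{1}{k}\big|\sum_{i,j,\ell} (T_p)_{ijl}\,\omega^{\phi(i,j,\ell)}\big|$ for suitable roots of unity, i.e.\ essentially a random $\pm1$ (or random unimodular) sum of $\Theta(n^3)$ terms, which concentrates at $\Omega(n^{3/2})$ with probability $1 - e^{-\Omega(n)}$ by a second Hoeffding/Bernstein estimate after conditioning appropriately. (3) Combine: with probability $1 - Cne^{-cn}$ (the factor $n$ absorbing the polynomially many union bounds, e.g.\ over the $2^n$ inputs in step (1) and over any net used in step (2)) both events hold, giving $\|p\|_{\cb} \geq \Omega(n^{3/2}) \geq c\sqrt n \cdot O(n) \geq c'\sqrt n\,\|p\|$.

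The main obstacle is step (2): unlike the operator norm $\|T_p\|_{\cb}$ of the \emph{tensor}, the norm $\|p\|_{\cb}$ of the \emph{form} involves the infimum over all decompositions $T_p = \sum_{\sigma\in S_3} T^\sigma\circ\sigma$, so a lower bound must survive this infimum — one cannot simply exhibit good test unitaries for $T_p$ itself. The way around this, which is precisely why \cite{DJT:1995} is the right reference, is that the relevant lower-bounding functional (coming from the failure of a commutative operator algebra to be a $Q$-algebra) is \emph{symmetric}: it gives the same value on $T^\sigma\circ\sigma$ as on $T^\sigma$, so $\sum_\sigma \|T^\sigma\|_{\cb} \geq \sum_\sigma |\Phi(T^\sigma)| \geq |\Phi(\sum_\sigma T^\sigma\circ\sigma)| = |\Phi(T_p)|$ for such a functional $\Phi$, and one lower bounds $|\Phi(T_p)|$ for the random tensor. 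Verifying that the chosen test (the generalized-Pauli / $Q$-algebra functional) is indeed permutation-invariant in the required sense, and that it lower bounds $\|\cdot\|_{\cb}$ on tensors, is the technical heart; the two concentration inequalities in steps (1) and (3) are routine. I would also double-check the exponent bookkeeping: the claim is only $\|p\|_{\cb} = \Omega(\sqrt n\,\|p\|)$, so any loss of $\mathrm{polylog}$ or constant factors between the raw scales of $\|p\|$ and $\|p\|_{\cb}$ is harmless and gets folded into $c$.
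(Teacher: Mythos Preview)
Your overall structure is right, and you correctly identify the central obstacle: the infimum over decompositions in Definition~\ref{defn:pcb} means any lower bound on $\|p\|_{\cb}$ must come from a functional invariant under permuting the tensor legs. But your concrete proposal for step~(2) does not achieve this. Generalized Pauli (shift-and-clock) operators do \emph{not} commute, so plugging them in does not yield a permutation-invariant functional, and the chain $\sum_\sigma \|T^\sigma\|_{\cb} \geq \sum_\sigma |\Phi(T^\sigma)| \geq |\Phi(T_p)|$ breaks at the second inequality. The paper follows precisely the route you mention parenthetically but do not develop: it builds an explicit family of \emph{pairwise commuting} contractions $A_1,\dots,A_n \in L(\R^{2n+2})$ out of the random symmetric tensor $T_p$ itself (schematically $A_i: e_1\mapsto e_i$, $e_j\mapsto (\tau\sqrt n)^{-1}\sum_k (T_p)_{ijk}\,e_{k+n}$, $e_{j+n}\mapsto \delta_{ij}e_{2n+1}$). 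Commutativity is exactly what makes $T\mapsto \big\|\sum_{i,j,k} T_{ijk}A_iA_jA_k\big\|$ invariant under $T\mapsto T\circ\sigma$ and hence a valid lower bound on $\|p\|_{\cb}$; this is Proposition~\ref{prop:commute}.

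The probabilistic input is also different from what you sketch. Randomness is \emph{not} used to make a random exponential sum large; rather it is used to certify that each $A_i$ is a contraction with high probability, via a Wigner-type operator-norm tail bound (Lemma~\ref{lem:taorandommatrix}) on the random symmetric slices $M_i = ((T_p)_{ijk})_{j,k=1}^n$. Once the $A_i$ are contractions, the lower bound is deterministic: from $e_{2n+1}^*A_iA_jA_ke_1 = (T_p)_{ijk}/(\tau\sqrt n)$ and $|(T_p)_{ijk}|\geq 1/6$ one reads off $\|p\|_{\cb} \geq \Omega(n^{5/2})$. Your scaling is also off: $\|p\| = O(n^2)$ (a sum of $\Theta(n^3)$ random signs, Hoeffding plus a union bound over $2^n$ inputs), not $O(n)$; combined with $\|p\|_{\cb} = \Omega(n^{5/2})$ this gives the claimed ratio $\Omega(\sqrt n)$.
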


We shall use the following standard concentration-of-measure results.
The first is the Hoeff\-ding bound~\cite[Corollary 3 (Appendix B)]{pollard:2012}.

\begin{lemma}[Hoeffding bound]\label{lem:hoeffding}
	Let~$X_1,\dots,X_m$ be independent uniformly distributed $\pmset{}$-random variables and let $a\in \R^m$.
	Then, for any $\tau > 0$, we have
	\beqn
	\Pr\Big[\Big|\sum_{i=1}^m a_iX_i\Big| > \tau\big]
	\leq
	2e^{-\frac{\tau^2}{2(a_1^2 + \cdots + a_m^2)}}
	\eeqn
\end{lemma}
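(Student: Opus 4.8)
The plan is to prove this by the standard exponential-moment (Chernoff) method. First I would reduce to a one-sided estimate: since the $X_i$ are symmetric, replacing $a_i$ by $-a_i$ does not change the distribution of $\sum_i a_iX_i$, so $\Pr[\sum_i a_iX_i < -\tau] = \Pr[\sum_i a_iX_i > \tau]$, and by a union bound over the two tails it suffices to show that $\Pr[\sum_{i=1}^m a_iX_i > \tau] \le e^{-\tau^2/(2\|a\|_{\ell_2}^2)}$, where $\|a\|_{\ell_2}^2 = a_1^2 + \cdots + a_m^2$; doubling this yields the claimed bound.

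Next I would fix a parameter $\lambda > 0$ (to be optimized at the end) and apply Markov's inequality to the nonnegative random variable $e^{\lambda\sum_i a_iX_i}$:
\beqn
\Pr\Big[\sum_{i=1}^m a_iX_i > \tau\Big]
= \Pr\Big[e^{\lambda\sum_i a_iX_i} > e^{\lambda\tau}\Big]
\le e^{-\lambda\tau}\,\Exp\Big[e^{\lambda\sum_i a_iX_i}\Big].
\eeqn
By independence of the $X_i$, the expectation factorizes as $\prod_{i=1}^m \Exp[e^{\lambda a_iX_i}]$, and since $X_i$ is uniform on $\pmset{}$ each factor equals $\cosh(\lambda a_i)$. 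The one non-mechanical ingredient is the elementary moment-generating-function bound $\cosh(t) \le e^{t^2/2}$ for all $t\in\R$, which follows by comparing Taylor series term by term: $\cosh(t) = \sum_{k\ge 0} t^{2k}/(2k)! \le \sum_{k\ge 0} t^{2k}/(2^k k!) = e^{t^2/2}$, using $(2k)! \ge 2^k k!$.

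Combining these with $t = \lambda a_i$ gives
\beqn
\Pr\Big[\sum_{i=1}^m a_iX_i > \tau\Big]
\le e^{-\lambda\tau}\prod_{i=1}^m e^{\lambda^2 a_i^2/2}
= e^{-\lambda\tau + \frac{\lambda^2}{2}\|a\|_{\ell_2}^2},
\eeqn
and finally I would optimize over $\lambda$: the exponent is a convex quadratic in $\lambda$ minimized at $\lambda = \tau/\|a\|_{\ell_2}^2$, which yields the bound $e^{-\tau^2/(2\|a\|_{\ell_2}^2)}$. Doubling for the two-sided estimate completes the proof. There is no real obstacle here; the argument is entirely routine once one has the estimate $\cosh(t)\le e^{t^2/2}$, and the rest is the standard Chernoff optimization.
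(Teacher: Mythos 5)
Your proof is correct and complete: the symmetrization/union bound for the two tails, the Markov bound on the exponential moment, the factorization by independence, the estimate $\cosh(t)\le e^{t^2/2}$ via $(2k)!\ge 2^k k!$, and the optimization $\lambda=\tau/(a_1^2+\cdots+a_m^2)$ together give exactly the stated inequality. The paper does not prove this lemma at all — it simply cites it from Pollard's book — so your self-contained argument is the standard Chernoff-method derivation one would find in that reference.
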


The second result is one from random matrix theory concerning upper tail estimates for Wigner ensembles (see~\cite[Corollary~2.3.6]{tao:randommat}).

\begin{lemma}\label{lem:taorandommatrix}  There exist absolute constants $C,c\in (0,\infty)$ such that the following holds. 
	Let $n$ be a positive integer and let~$M$ be a random $n\times n$ symmetric random matrix such that for $j\geq i$, the entries~$M_{ij}$ are independent random variables with mean zero and absolute value at most~1. 
	Then, for any $\tau \geq C$, we have
	\begin{align*}
	\Pr\big[\|M\|>\tau\sqrt{n}\big]\leq Ce^{-c\tau n}.
	\end{align*}
\end{lemma}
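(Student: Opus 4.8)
\textbf{Proof plan for Lemma~\ref{lem:taorandommatrix}.}

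The plan is to prove the upper-tail concentration estimate for the operator norm of a symmetric Wigner-type matrix with bounded entries using the standard moment method, following the argument in Tao's book cited in the statement. First I would set up the moment bound: since $M$ is symmetric, $\|M\| = \max_i |\lambda_i(M)|$, and for any even integer $k$ we have $\|M\|^k \le \Tr(M^k) = \sum_i \lambda_i(M)^k$. Taking expectations, $\Exp[\|M\|^k] \le \Exp[\Tr(M^k)] = \sum_{i_1,\dots,i_k} \Exp[M_{i_1 i_2} M_{i_2 i_3}\cdots M_{i_k i_1}]$, which expresses the bound as a sum over closed walks of length $k$ in the complete graph on $n$ vertices. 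The core combinatorial step is to bound this sum: because the off-diagonal entries (above the diagonal) are independent, mean-zero, and bounded by $1$ in absolute value, any closed walk that traverses some edge exactly once contributes $0$ in expectation, so only walks in which every edge is used at least twice survive; each surviving term has absolute value at most $1$. A classical counting argument (encoding such walks via their spanning trees and the structure of repeated edges) shows the number of such walks on $n$ vertices of length $k$ is at most $n^{k/2+1} C_k$ where $C_k$ is the Catalan-number-type factor, bounded by $k^{O(1)} 4^{k/2}$; more crudely one gets $\Exp[\Tr(M^k)] \le n^{k/2+1} (Ck)^{k/2}$ for an absolute constant, which already suffices.

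Given this moment bound, I would finish via Markov's inequality: for any threshold $\tau\sqrt n$ and even $k$,
\beqn
\Pr\big[\|M\| > \tau\sqrt n\big]
= \Pr\big[\|M\|^k > \tau^k n^{k/2}\big]
\le \frac{\Exp[\|M\|^k]}{\tau^k n^{k/2}}
\le \frac{n^{k/2+1}(Ck)^{k/2}}{\tau^k n^{k/2}}
= n\Big(\frac{Ck}{\tau^2}\Big)^{k/2}.
\eeqn
Now I would optimize over $k$: choosing $k \approx \tau^2 n / (eC)$ (the largest even integer below this value, which is a valid choice provided $\tau$ is at least an absolute constant $C_0$ so that $k\ge 2$), the factor $(Ck/\tau^2)^{k/2}$ becomes roughly $e^{-k/2} = e^{-c'\tau^2 n}$. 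Since $\tau \ge C_0$ we have $\tau^2 n \ge C_0\tau n$, so this is at most $e^{-c\tau n}$ for a suitable absolute constant $c>0$, and the stray factor of $n$ out front is absorbed by adjusting $c$ slightly (e.g.\ $n e^{-c\tau n} \le C e^{-(c/2)\tau n}$ for all $n\ge 1$ and $\tau\ge C_0$). Setting the constant $C$ in the statement to be $\max(C_0, 2)$ and keeping track of the relabelled constant $c$ yields exactly the claimed bound $\Pr[\|M\|>\tau\sqrt n]\le Ce^{-c\tau n}$ for all $\tau \ge C$.

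The main obstacle is the combinatorial walk-counting step — precisely, showing that the number of closed length-$k$ walks on $[n]$ in which every edge appears at least twice is bounded by $n^{k/2+1}$ times a factor depending only on $k$ (and not on $n$), since this is what produces the crucial $n^{k/2}$ scaling rather than a trivial $n^k$. One handles this by the standard encoding: such a walk visits at most $k/2+1$ distinct vertices, the first visits to new vertices form a spanning tree, and the "non-tree" steps can be charted by recording, at each step, whether one moves along a tree edge or makes a bounded number of other choices; careful bookkeeping (as in Tao's treatment, or Füredi–Komlós) gives the factor $(Ck)^{k/2}$. The only subtlety beyond pure counting is that diagonal entries $M_{ii}$ need separate (easy) treatment since they need not be mean zero in general — but here they are assumed mean zero as well, so they cause no difficulty, and in any case their contribution to $\Tr(M^k)$ is lower-order. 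Since the lemma is explicitly attributed to a textbook corollary, the expected write-up is short: invoke the moment method, state the walk-counting bound with a reference, and carry out the Markov/optimization computation above.
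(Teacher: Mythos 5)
There is a genuine gap, and it sits exactly at the step you flag as routine: the Markov/optimization computation does not deliver the claimed bound. From $\Pr[\|M\|>\tau\sqrt n]\le n\,(Ck/\tau^2)^{k/2}$, the base of the exponential is less than $1$ only when $k<\tau^2/C$; your proposed choice $k\approx\tau^2 n/(eC)$ makes $Ck/\tau^2=n/e>1$, so the right-hand side \emph{diverges} rather than decays. The correct optimization is $k\approx\tau^2/(eC)$ (no factor of $n$), which yields only $n\,e^{-\tau^2/(2eC)}$ --- an exponent with no $n$ in it at all. That is far weaker than the claimed $Ce^{-c\tau n}$ and in fact does not even tend to $0$ as $n\to\infty$ for fixed $\tau$, so it is useless for the application in Theorem~\ref{thm:rand_T}, where one must beat a union bound over $n$ matrices. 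The obstruction is structural, not a matter of bookkeeping: to extract $e^{-c\tau n}$ from moments one must take $k$ of order $\tau n$, and in that regime the combinatorial factor $(Ck)^{k/2}$ exceeds $n^{k/2}$, so the moment bound $\Exp[\Tr(M^k)]\le n^{k/2+1}(Ck)^{k/2}$ you invoke becomes weaker than the trivial estimate $\Tr(M^k)\le n\cdot n^k$. A moment-method proof of the lemma therefore needs a much more delicate walk count valid for $k\sim\tau n$, which your plan does not supply.

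For context, the paper itself gives no proof of this lemma; it is quoted verbatim from Tao's book. The standard short arguments that do give the stated tail are: (i) an $\varepsilon$-net over the unit sphere combined with Hoeffding's inequality for each fixed quadratic form $u^{\mathsf T}Mu$ (union bound over a net of size $e^{O(n)}$ against a pointwise tail $e^{-c\tau^2 n}$, then weaken $\tau^2$ to $C\tau$ using $\tau\ge C$), or (ii) Talagrand's concentration inequality for the convex $O(1)$-Lipschitz function $M\mapsto\|M\|$ of the bounded entries, together with a first-moment bound $\Exp\|M\|=O(\sqrt n)$. Either route is short and correct; if you want to keep the moment method, you would have to replace your combinatorial estimate with one that remains $n\cdot(O(\sqrt n))^k$ for $k$ up to order $\tau n$, which is a substantially harder statement than the one you cite.
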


We also use the following proposition.

\begin{proposition}\label{prop:commute}
	Let $m,n,t$ be positive integers, let~$p\in\R[x_1,\dots,x_n]$ be a $t$-linear form, let $T_p\in \R^{n\times\cdots\times n}$ be as in~\eqref{eq:Tpdef} and let $A_1,\dots,A_n\in L(\R^m)$ be pairwise commuting contractions.
	Then, 
	\beqn
	\|p\|_{\cb}
	\geq
	\Big\|\sum_{i_1,\dots,i_t=1}^n (T_p)_{i_1,\dots,i_t}A_{i_1}\cdots A_{i_t}\Big\|.
	\eeqn
\end{proposition}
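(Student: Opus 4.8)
The plan is to use the completely bounded norm of the form $p$ via the decomposition $T_p = \sum_{\sigma\in S_t} T^\sigma\circ\sigma$ that attains $\|p\|_{\cb} = \sum_\sigma \|T^\sigma\|_{\cb}$, together with the definition of $\|T^\sigma\|_{\cb}$ as a supremum over unitaries. The key observation is that when all the operators plugged in commute, the permutation $\sigma$ becomes irrelevant: for pairwise commuting contractions $A_1,\dots,A_n$ we have
\beqn
\sum_{i_1,\dots,i_t} (T^\sigma\circ\sigma)_{i_1,\dots,i_t} A_{i_1}\cdots A_{i_t}
=
\sum_{i_1,\dots,i_t} T^\sigma_{i_{\sigma(1)},\dots,i_{\sigma(t)}} A_{i_1}\cdots A_{i_t}
=
\sum_{j_1,\dots,j_t} T^\sigma_{j_1,\dots,j_t} A_{j_1}\cdots A_{j_t},
\eeqn
where the last equality is just a relabelling of the summation indices (using commutativity to reorder the product $A_{i_1}\cdots A_{i_t}$ so that it matches the new index order). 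Summing over $\sigma$ and using $T_p = \sum_\sigma T^\sigma\circ\sigma$ gives
\beqn
\sum_{i_1,\dots,i_t} (T_p)_{i_1,\dots,i_t} A_{i_1}\cdots A_{i_t}
=
\sum_{\sigma\in S_t}\; \sum_{j_1,\dots,j_t} T^\sigma_{j_1,\dots,j_t} A_{j_1}\cdots A_{j_t}.
\eeqn

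First I would establish this identity carefully, being explicit that commutativity is used only to reorder each monomial $A_{i_1}\cdots A_{i_t}$ into the canonical order dictated by $\sigma^{-1}$. Then, applying the triangle inequality over $\sigma$ and bounding each term,
\beqn
\Big\|\sum_{i_1,\dots,i_t} (T_p)_{i_1,\dots,i_t} A_{i_1}\cdots A_{i_t}\Big\|
\leq
\sum_{\sigma\in S_t}\Big\|\sum_{j_1,\dots,j_t} T^\sigma_{j_1,\dots,j_t} A_{j_1}\cdots A_{j_t}\Big\|.
\eeqn
Each summand on the right is at most $\|T^\sigma\|_{\cb}$: indeed, one can regard the $A_i$ as the images $\Diag(e_i)\mapsto A_i$ under a suitable map, or more directly observe that a single collection of contractions $A_1,\dots,A_n$ acting on $\R^m$ is precisely an admissible choice in the supremum defining $\|T^\sigma\|_{\cb}$ (take $k = m$, $U_\ell(i) = A_i$ for all $\ell$ — and if one insists on unitaries, invoke the Russo--Dye/Carath\'eodory reduction already used in the paper, or dilate each $A_i$ to a unitary, noting that commuting contractions admit commuting unitary dilations; alternatively observe the supremum over contractions equals that over unitaries by convexity). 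Hence the right-hand side is at most $\sum_\sigma \|T^\sigma\|_{\cb} = \|p\|_{\cb}$, which is the claim.

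The one genuine subtlety — and the step I expect to need the most care — is justifying that each term $\big\|\sum T^\sigma_{j_1,\dots,j_t} A_{j_1}\cdots A_{j_t}\big\|$ is bounded by $\|T^\sigma\|_{\cb}$ rather than by something larger, because the definition in Eq.~\eqref{eq:defnTcb} allows a \emph{different} unitary family $U_1(i),\dots,U_t(i)$ in each of the $t$ slots, whereas here the same operator $A_i$ appears in every slot. This is fine: using the same family in every slot is a special case, so the bound only goes the easy way. I would also need to handle the contraction-versus-unitary point; the cleanest route is to note that since the set of contractions on $\R^m$ is the closed convex hull of the unitaries (Russo--Dye), and the map $(A_1,\dots,A_n)\mapsto \|\sum T^\sigma_{j_1,\dots,j_t} A_{j_1}\cdots A_{j_t}\|$ is convex in each argument separately, the supremum over contractions equals the supremum over unitaries. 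Everything else is bookkeeping with the symmetric-tensor decomposition.
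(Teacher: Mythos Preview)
Your proposal is correct and follows essentially the same argument as the paper: decompose $T_p=\sum_\sigma T^\sigma\circ\sigma$, use commutativity of the $A_i$ to remove the permutation, apply the triangle inequality over~$\sigma$, and bound each term by $\|T^\sigma\|_{\cb}$. The only cosmetic differences are that the paper works with an \emph{arbitrary} decomposition and takes the infimum at the end (rather than fixing an optimal one up front), and that the paper simply invokes ``the definition of the completely bounded norm'' for the step $\big\|\sum_{j} T^\sigma_{j_1,\dots,j_t}A_{j_1}\cdots A_{j_t}\big\|\le \|T^\sigma\|_{\cb}$ without dwelling on the contraction-versus-unitary issue you discuss.
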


\begin{proof}
	Consider an arbitrary decomposition $T_p=\sum_{\sigma\in S_t}T^{\sigma}\circ \sigma$.
	Then, the definition of the completely bounded norm and triangle inequality show that
	\begin{align*}
	\sum_{\sigma\in S_t}\|T^{\sigma}\|_{\cb}
	&\:\:\geq\:\: \sum_{\sigma\in S_t}\big\|\sum_{i_1,\dots,i_t=1}^nT_{i_1,\dots,i_t}^{\sigma}A_{i_1}\cdots A_{i_t}\big\| \\
	&\:\:\geq\:\: \hspace{-2pt}\big\|\sum_{\sigma\in S_t}\sum_{i_1,\dots,i_t=1}^nT_{i_1,\dots,i_t}^{\sigma}A_{i_1}\cdots A_{i_t} \big\|. 
	\end{align*}
	Since the $A_i$ commute, the above can be re-written as
	\begin{align*}
	\big\|\sum_{\sigma\in S_t}\sum_{i_1,\dots,i_t=1}^n T_{i_1,\dots,i_t}^{\sigma} A_{i_{\sigma^{-1}(1)}}\cdots A_{i_{\sigma^{-1}(t)}}\big\|
	&= \big\|\sum_{\substack{i_1,\dots,i_t\in [n]\\ \sigma \in S_t}}(T^{\sigma}\circ\sigma)_{i_1,\dots,i_t}A_{i_1} \cdots A_{i_t} \big\| \\
	&=\big\|\sum_{i_1,\dots,i_t=1}^n(T_p)_{i_1,\dots,i_t}A_{i_1}\cdots A_{i_t}\big\|.
	\end{align*}
	The claim now follows from the definition of~$\|p\|_{\cb}$ and using the fact that the decomposition of~$T_p$ was arbitrary.
\end{proof}

\begin{proof}[Proof of Theorem~\ref{thm:rand_T}]
	We begin by showing that with high probability, $\|p\| \leq O(n^2)$. 
	To this end, let us fix an arbitrary $x\in \pmset{n}$. 
	Then, $p(x)$ is a sum of at most $n^3$ independent uniformly distributed random $\pmset{}$-random variables.
	It therefore follows from Lemma~\ref{lem:hoeffding} that 
	\beqn
	\Pr\big[|p(x)| > 2n^2\big] \leq 2e^{-2n},
	\eeqn
	By the union bound over $x\in \pmset{n}$, it follows that $\|p\| > 2n^2$ with probability at most $2e^{-n}$, which gives the claim.

	We now lower bound $\|p\|_{\cb}$. 
	Let~$\tau> 0$ be a parameter to be set later.
	Let~$T\in \R^{n\times n\times n}$ be the random symmetric 3-tensor associated with~$p$ as in~\eqref{eq:Tpdef}.
	For every $i\in [n]$, we define the linear map $A_i:\R^{2n+2}\to\R^{2n+2}$ by
	$$\left\{
	\begin{array}{l}
	A_ie_1= e_i\\
	A_ie_j= \frac{1}{\tau\sqrt{n}}\sum_{k=1}^nT_{i,j,k}e_{k+n}\\ 
	A_ie_{j+n}= \delta_{i,j}e_{2n+1}\\
	A_ie_{2n+1}= e_1.
	\end{array}
	\right.$$
	Observe that for every $i,j,k\in[n]$, we have
	\beq\label{eq:AAA}
	e^*_{2n+1}A_{i}A_{j}A_{k}e_1 = \frac{1}{\tau\sqrt{n}} T_{i,j,k}.
	\eeq
	Since~$T$ is symmetric, it follows easily that these maps commute, which is to say that $A_iA_j=A_jA_i$ for every $i,j\in [n]$. 
	In addition, we claim that with high probability, these maps are contractions (i.e., the associated matrices have operator norm at most~1).
	To see this, for each $i\in[n]$, let $M_i$ be the random matrix given by $M_i= (T_{i,j,k})_{j,k=1}^n$.
	Observe that $M_i$ is symmetric and its entries have mean zero and absolute value at most~1.
	By Lemma~\ref{lem:taorandommatrix} and a union bound, we get that 
	\begin{align}\label{operator norm random matrix}
	\Pr\Big[\max_{i\in [n]}\big\|M_i\big\|> \tau\sqrt{n}\Big]\leq Cne^{-c\tau n}.
	\end{align}
	for absolute constants $c,C$ and provided $\tau\geq C$.
	Now, for any Euclidean unit vector $u\in \R^{2n+2}$, we~have
	\begin{align*}
	\|A_iu \|^2 &=|u_0|^2+\frac{1}{\tau^2 n}\sum_{k=1}^n\Big|\sum_{j=1}^nu_jT_{i,j,k}\Big|^2+|u_{i+n}|^2\\
	&\leq 
	|u_0|^2+\frac{\|M_i\|^2}{\tau^2n}\sum_{j=1}^n|u_j|^2+|u_{i+n}|^2.
	\end{align*}
	It follows from~\eqref{operator norm random matrix} that $\max_i\|M_i\|\leq \tau\sqrt{n}$ with probability at least $1-Cne^{-c\tau n}$, which in turn  implies the above is at most~$\|u\|^2 \leq 1$ and therefore that all~$A_i$ have operator norm at most~1.
	
	By Proposition~\ref{prop:commute},
	\beqn
	\|p\|_{\cb}
	\geq
	\big\|\sum_{i,j,k=1}^nT_{i, j, k}A_{i}A_{j}A_{k}\big\|,
	\eeqn
	provided that the~$A_i$s are contractions.
	
	By~\eqref{eq:AAA}, and since $|T_{i,j,k}| \geq 1/6$ for every $i,j,k\in[n]$, the above is at least $n^{5/2}/(36\tau)$.
	with probability at least $1 - Cne^{-c\tau n}$. 
	Letting~$\tau$ be a sufficiently large constant then gives the result.
\end{proof}

As mentioned in the introduction, one can easily extend this result to the case of $4$-linear forms.  To demonstrate the failure of Theorem~\ref{thm:AAIKS} for quartic polynomials, we ``embed" the degree-$3$ polynomial $p$ in Theorem~\ref{thm:rand_T} into a degree-$4$ polynomial $q$ which has high completely bounded norm.

\begin{corollary}\label{cor:quartic}
	There exists a bounded quartic form
	\beq\label{eq:quartic}
	q(x_1,\dots,x_n) = \sum_{\alpha \in \bset{n} \st |\alpha| = 4}d_\alpha x^\alpha,
	\eeq
	and pairwise commuting contractions $A_1,\dots,A_n \in L(\R^{2n+2})$ such that
	\beqn
	\Big\|\sum_{i,j,k,\ell=1}^n(T_q)_{i,j,k,\ell}A_iA_jA_kA_\ell\Big\| \geq c\sqrt{n}
	\eeqn
	where $c\in (0,1]$ is some absolute constant.
\end{corollary}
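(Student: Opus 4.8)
The plan is to obtain the quartic example by \emph{homogenizing} a bounded square-free \emph{cubic} form with one fresh variable, so that the extra tensor slot of the quartic can be carried by the identity operator. Concretely, I would run the construction from the proof of Theorem~\ref{thm:rand_T} on $n-1$ variables rather than $n$: this yields a random cubic form $p$ on $\R^{n-1}$ with symmetric tensor $T$ satisfying $|T_{abc}|=1/6$ on distinct triples, together with pairwise commuting contractions $A_1,\dots,A_{n-1}\in L(\R^{2n})$ for which $e_{2n-1}^{*}A_aA_bA_ce_1=(\tau\sqrt{n-1})^{-1}T_{abc}$ for all $a,b,c$ (cf.~\eqref{eq:AAA}), valid on the high-probability event that the $A_i$ are contractions. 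Let $r$ be the square-free part of $p$, namely the sum of the monomials $c_{abc}x_ax_bx_c$ of $p$ with $a<b<c$; this is a bounded homogeneous square-free cubic whose symmetric tensor $T_r$ agrees with $T$ on distinct triples and vanishes elsewhere.

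Next I would set $q(x_1,\dots,x_n)=\|r\|^{-1}\,x_n\,r(x_1,\dots,x_{n-1})$, a homogeneous square-free quartic form whose cube-norm is exactly $1$ because $x_n$ is a fresh variable, so $q$ is bounded. For the operators I would take $A_1,\dots,A_{n-1}$ padded with two extra zero coordinates into $L(\R^{2n+2})$, and $A_n:=\id_{\R^{2n+2}}$; these are pairwise commuting contractions. From~\eqref{eq:Tpdef} one reads off that the only nonzero entries of $T_q$ are the $4!$ reorderings of each tuple $(a,b,c,n)$ with $a<b<c\le n-1$, each equal to $(T_r)_{abc}/(4\|r\|)$ (a factor $6$ from the symmetric cubic tensor, a factor $24$ from the quartic normalization).

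The key step is then a bookkeeping computation. Since $T_q$ is symmetric, the $A_i$ commute, and $A_n=\id$, grouping the $4!$ reorderings inside each orbit and then the $3!$ orderings of $\{a,b,c\}$ collapses the quartic operator sum to
\[
\sum_{i,j,k,\ell=1}^{n}(T_q)_{ijk\ell}\,A_iA_jA_kA_\ell=\frac{1}{\|r\|}\sum_{a,b,c=1}^{n-1}(T_r)_{abc}\,A_aA_bA_c .
\]
Bounding the operator norm below by the $(e_{2n-1},e_1)$ entry of the right-hand side and using $e_{2n-1}^{*}A_aA_bA_ce_1=(\tau\sqrt{n-1})^{-1}T_{abc}$ together with $(T_r)_{abc}T_{abc}=T_{abc}^2=1/36$ on distinct triples and $0$ otherwise, this norm is at least $\|r\|^{-1}(\tau\sqrt{n-1})^{-1}\cdot\frac{1}{36}(n-1)(n-2)(n-3)$. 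Finally I would bound $\|r\|\le 2n^2$ with probability $1-e^{-\Omega(n)}$ by the same Hoeffding plus union-bound argument as in the proof of Theorem~\ref{thm:rand_T} (now $r(x)$ is a sum of $\binom{n-1}{3}$ independent signs at each point of the cube). Intersecting the two good events, which has positive probability for all sufficiently large $n$, yields $\|\sum(T_q)_{ijk\ell}A_iA_jA_kA_\ell\|\ge c\sqrt{n}$ once $\tau$ is fixed to a large enough absolute constant, which proves the corollary.

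\textbf{Main obstacle.} The only genuinely delicate point is the collapse of the quartic operator sum displayed above: one must keep the symmetrization factors relating $T_q$, $T_r$ and the monomial coefficients straight, and one must verify that passing from $p$ to its square-free part $r$ costs nothing --- the $A_i$ are built from the \emph{full} tensor $T$, so they remain contractions on the same event, and $\sum_{a,b,c}(T_r)_{abc}T_{abc}=\sum_{\text{distinct}}T_{abc}^2$ is still $\Theta(n^3)$, so the cubic lower bound survives intact. Everything else (the cube-norm of $q$, the commutation and contractivity of the padded operators, the concentration estimates) is routine.
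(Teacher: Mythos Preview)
Your proposal is correct and follows essentially the same approach as the paper: both homogenize the cubic construction of Theorem~\ref{thm:rand_T} by adjoining a fresh variable carried by the identity operator, then read off the quartic lower bound from the cubic one. Your version is in fact more careful than the paper's compressed argument --- you explicitly pass to the square-free part $r$ so that $q$ is genuinely multilinear as in~\eqref{eq:quartic}, and you track the symmetrization factors between $T_q$ and $T_r$ precisely --- but the underlying idea is identical.
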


\begin{proof}
	Let $p$ be a bounded multi-linear cubic form such that $\|p\|_{\cb} \geq C\sqrt{n}$, the existence of which is guaranteed by Theorem~\ref{thm:rand_T}.
	Let $T_p\in \R^{n\times n\times n}$ be the random symmetric $3$-tensor associated to~$p$. Consider the symmetric 4-tensor $S\in \R^{(n+1)\times (n+1)\times (n+1)\times (n+1)}$ defined by $S_{0,j,k,\ell}=T_{j,k,\ell}$, $S_{i,0,k,\ell}=T_{i,k,\ell}$, $S_{i,j,0,\ell}=T_{i,j,\ell}$, $S_{i,j,k,0}=T_{i,j,k}$ for every $i,j,k,\ell \in [n]$ and $S_{i,j,k,\ell}=0$ otherwise. Since $S$ is symmetric, there exists a unique multi-linear quartic form~$q$ associated to~$S$. It follows easily that $\|q\|= 4\|p\|$. Moreover, by considering the contractions $A_i$ used in the  proof of Theorem~\ref{thm:rand_T} and defining $A_0=\id_{n+2}$, it follows that~$\|q\|_{\cb}\geq 4\|p\|_{\cb}$.
	The form $q/4$ is thus as desired.
\end{proof}

\section{Short proof of Theorem~\ref{thm:AAIKS}}\label{sec:AAIKS-proof}
In this section, we give a short proof of Theorem~\ref{thm:AAIKS}.

\medskip

\paragraph{Proof sketch of Theorem~\ref{thm:AAIKS}} 
We begin by giving a brief sketch of the original proof. The first step is to show that without loss of generality, we may assume that the polynomial~$p$ is a quadratic form.
This is the content of the decoupling argument mentioned in the introduction, proved for polynomials of arbitrary degree in~\cite{Aaronson:2016}, but stated here only for the quadratic case.

\begin{lemma}\label{lem:decoupling}
	There exists an absolute constant $C \in (0,1]$ such that the following holds.
	For any bounded quadratic polynomial $p$, there exists a matrix $A\in \R^{(n+1)\times (n+1)}$ with $\|A\|_{\ell_\infty\to \ell_1} \leq 1$, such that the quadratic form $q(y) = y^{\mathsf T}Ay$ satisfies $q((x,1)) = Cp(x)$ for all $x\in\pmset{n}$.
\end{lemma}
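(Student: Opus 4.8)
The plan is to reduce a general bounded quadratic polynomial to a bounded quadratic \emph{form} in one extra variable by a homogenization-plus-decoupling argument. First I would write $p(x) = \sum_{i<j} a_{ij}x_ix_j + \sum_i b_i x_i + c$, using $x_i^2 = 1$ on $\pmset{n}$ to absorb all squared terms into the constant. Homogenizing with a fresh variable $x_0$ (think of $x_0 = 1$) produces the quadratic form $\widetilde p(x_0,x) = \sum_{i<j} a_{ij} x_i x_j + \sum_i b_i x_0 x_i + c\, x_0^2$ on $\pmset{n+1}$, which satisfies $\widetilde p((1,x)) = p(x)$ and is still bounded in absolute value by $1$ on $\pmset{n+1}$ (since evaluating at $x_0 = \pm1$ just reproduces $\pm$-flips of the same polynomial, still in $[-1,1]$). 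So without loss of generality $p$ itself may be taken to be a bounded quadratic form, say $p(y) = \sum_{i \ne j} M_{ij} y_i y_j + \sum_i M_{ii}$ — but the diagonal is again constant on the cube, so really $p(y) = y^{\mathsf T} M y + \text{const}$ with $M$ symmetric and zero diagonal, and boundedness says $|y^{\mathsf T} M y| \le 1 + |\text{const}|$ for $y \in \pmset{n+1}$; I'd fold the constant into a further coordinate if needed, so assume $p(y) = y^{\mathsf T} M y$ with $M$ symmetric, hollow, and $\sup_{y \in \pmset{}^{n+1}} |y^{\mathsf T} M y| \le 1$.

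The decoupling step is the heart of the argument: I want to pass from the quadratic form $y \mapsto y^{\mathsf T} M y$ to the bilinear form $(y,z) \mapsto y^{\mathsf T} M z$ while controlling the $\ell_\infty \to \ell_1$ norm, which by the duality remark in the preliminaries equals $\sup_{y,z \in \pmset{}} y^{\mathsf T} M z$. The classical decoupling inequality (e.g.\ the one for Rademacher chaos, or simply averaging over an independent $\pm1$ "tag" vector $\varepsilon$: $y^{\mathsf T} M z = \Exp_\varepsilon\big[(\text{a sign}) \cdot (\varepsilon\text{-masked } y)^{\mathsf T} M (\text{complementary mask of } y)\big]$) shows that for a symmetric hollow $M$,
\beq\label{eq:decoup-bound}
\sup_{y,z \in \pmset{n+1}} y^{\mathsf T} M z \;\le\; C_0 \sup_{y \in \pmset{n+1}} |y^{\mathsf T} M y|
\eeq
for an absolute constant $C_0$ (the hollowness of $M$ is what makes the decoupled form controllable; a factor like $C_0 = 4$ is standard). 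Concretely I'd split coordinates according to a uniformly random subset $S \subseteq [n+1]$, note $y^{\mathsf T} M z$ with $y$ supported on $S$ and $z$ on $S^c$ is, up to sign, an evaluation of the quadratic form at a single $\pm1$ point (because the cross terms between $S$ and $S^c$ in $(y+z)^{\mathsf T} M (y+z)$ isolate exactly $2 y^{\mathsf T} M z$, and the within-$S$, within-$S^c$ blocks vanish in expectation over the choice of $S$), and take expectations; the $\sup$ is bounded by the worst evaluation, giving \eqref{eq:decoup-bound}. Then setting $A = M / C_0$ gives $\|A\|_{\ell_\infty \to \ell_1} = \sup_{y,z} y^{\mathsf T} A z = C_0^{-1}\sup_{y,z} y^{\mathsf T} M z \le \sup_y |y^{\mathsf T} M y| \le 1$, and the quadratic form $q(y) = y^{\mathsf T} A y = C_0^{-1} p(y)$ satisfies $q((x,1)) = C_0^{-1} \widetilde p((1,x)) = C_0^{-1} p_{\text{orig}}(x)$ after unwinding the homogenization, so $C = C_0^{-1}$ works.

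The main obstacle is keeping the decoupling constant genuinely absolute and making sure each bookkeeping reduction (killing squared terms on the cube, homogenizing, absorbing the additive constant into a coordinate, ensuring $M$ ends up hollow and symmetric) does not secretly blow up the bound by an $n$-dependent factor — in particular one must check that homogenization does not increase $\sup |p|$ on the cube and that the constant term, after being pushed into an extra variable, contributes only an $O(1)$ loss. A secondary technical point is that the standard decoupling inequalities are usually stated for expectations / $L^p$ norms of chaos rather than for the sup over the discrete cube, so I would either invoke a sup-version directly or re-derive \eqref{eq:decoup-bound} by the random-subset argument above, which only uses that an average is at most a maximum. None of this is deep, but it is exactly where the constant $C$ is pinned down, so it warrants care rather than a hand-wave.
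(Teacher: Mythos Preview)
The paper does not give its own proof of this lemma; it is stated and attributed to~\cite{Aaronson:2016}, which proves a decoupling lemma for polynomials of arbitrary degree. Your homogenize-then-decouple plan is exactly the standard argument (specialized to degree two) and is essentially correct.

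Two bookkeeping points need tightening. First, ``folding the constant into a further coordinate'' in order to make $M$ hollow would produce an $(n+2)\times(n+2)$ matrix, overshooting the stated size. You do not need $A$ hollow: keep the matrix $M'$ with its single nonzero diagonal entry~$c$ at the homogenizing coordinate, apply decoupling only to the hollow part $M = M' - c\,e_0e_0^{\mathsf T}$, and bound
\[
\|M'\|_{\ell_\infty\to\ell_1}\;\le\;\|M\|_{\ell_\infty\to\ell_1}+|c|\;\le\;C_0\sup_{v}|v^{\mathsf T}Mv|+|c|\;\le\;C_0(1+|c|)+|c|\;\le\;2C_0+1,
\]
using $|c| = |\Exp_{x}\,p(x)|\le 1$. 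Then $A = M'/(2C_0+1)$ works with $C = 1/(2C_0+1)$.

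Second, in your random-subset sketch the within-$S$ and within-$S^c$ blocks do \emph{not} vanish in expectation: with $w_i=y_i$ for $i\in S$ and $w_i=z_i$ for $i\notin S$ one gets, for hollow symmetric~$M$,
\[
\Exp_S\big[w^{\mathsf T}Mw\big]=\tfrac14\big(y^{\mathsf T}My+2\,y^{\mathsf T}Mz+z^{\mathsf T}Mz\big).
\]
This is harmless, since the two unwanted terms are each at most $\sup_v|v^{\mathsf T}Mv|$, giving $C_0=3$; alternatively, pair $w$ with the companion $w'$ defined by $w'_i=-z_i$ on $S^c$ and subtract to obtain $\Exp_S[w^{\mathsf T}Mw-(w')^{\mathsf T}Mw']=y^{\mathsf T}Mz$, which yields the cleaner $C_0=2$. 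Either way the constant is absolute, as you wanted.
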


To prove the theorem, we may thus restrict to a quadratic form~$p(x) = x^{\mathsf T}Ax$ given by some matrix~$A\in \R^{n\times n}$ such that $\|A\|_{\ell_\infty \to \ell_1}\leq 1$.
The next step is to massage the matrix~$A$ into a unitary matrix (that can be applied by a quantum algorithm).
To obtain this unitary, the authors use an argument based on two versions of Grothendieck's inequality and a technique known as \emph{variable splitting}, developed in earlier work of Aaronson and Ambainis~\cite{Aaronson:2015}.
The first version of Grothendieck's inequality is the one most commonly used in applications~\cite{Grothendieck:1953}.

\begin{theorem}[Grothendieck] \label{thm:grothendieck}
	There exists a universal constant $K_G \in (0,\infty)$ such that the following holds. 
	For every positive integer~$n$ and matrix~$A\in \R^{n\times n}$, we~have
	\beqn
	\sup\Big\{
	\sum_{i,j=1}^nA_{ij}\langle u_i,v_j\rangle
	\st
	d\in\N,\:
	u_i,v_j\in B_2^n
	\Big\}
	\leq
	K_G\, \|A\|_{\ell_\infty\to\ell_1}.
	\eeqn
\end{theorem}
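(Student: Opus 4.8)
The plan is to prove Theorem~\ref{thm:grothendieck} with the explicit bound $K_G\le \pi/\big(2\ln(1+\sqrt 2)\big)$ by Krivine's rounding argument. First I would reduce to a clean form: since the expression $\sum_{i,j}A_{ij}\langle u_i,v_j\rangle$ is linear in each $u_i$ and each $v_j$ separately, the supremum over the Euclidean ball is attained with all $u_i,v_j$ \emph{unit} vectors, and since only their pairwise inner products matter we may regard them as unit vectors in a common real Hilbert space $H$ of any convenient dimension. The target then becomes: for all unit vectors $u_1,\dots,u_n,v_1,\dots,v_n\in H$ one has $\sum_{i,j}A_{ij}\langle u_i,v_j\rangle\le K_G\,\|A\|_{\ell_\infty\to\ell_1}$, where by the dual formula recalled in Section~\ref{sec:prelims} we have $\|A\|_{\ell_\infty\to\ell_1}=\sup_{x,y\in\pmset n}x^{\mathsf T}Ay$. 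So the whole game is to trade the continuous inner products for $\pm1$ signs at a bounded multiplicative cost.

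The tool that effects this trade is the \emph{Grothendieck identity}: if $a,b\in H$ are unit vectors and $g$ is a standard Gaussian vector on a finite-dimensional subspace containing $a,b$, then $(\langle g,a\rangle,\langle g,b\rangle)$ is a centered Gaussian pair with unit variances and covariance $\langle a,b\rangle$, so the hyperplane-rounding (Sheppard) formula gives $\Exp_g\big[\sgn\langle g,a\rangle\,\sgn\langle g,b\rangle\big]=\tfrac{2}{\pi}\arcsin\langle a,b\rangle$. If every $\langle u_i,v_j\rangle$ were already of the form $\sin(s\,t_{ij})$ with $s\le \pi/2$, then $\tfrac{\pi}{2s}$ times this identity would recover the linear quantities $t_{ij}$ and averaging over $g$ would finish the proof. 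Inner products of unit vectors are of course not generally sines of anything useful, so the key move is to first \emph{re-embed} the vectors so that they are.

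For the re-embedding, fix $s=\ln(1+\sqrt 2)$, chosen precisely so that $\sinh s=\tfrac12\big((1+\sqrt2)-(1+\sqrt2)^{-1}\big)=1$, and note $s<\pi/2$. Work in the Hilbert space $\bigoplus_{k\ge0}H^{\otimes(2k+1)}$, using $\langle a^{\otimes m},b^{\otimes m}\rangle=\langle a,b\rangle^m$ and the Taylor series $\sin z=\sum_{k\ge0}(-1)^k z^{2k+1}/(2k+1)!$, and set
\[
u_i' = \bigoplus_{k\ge0}\sqrt{\tfrac{s^{2k+1}}{(2k+1)!}}\;u_i^{\otimes(2k+1)},
\qquad
v_j' = \bigoplus_{k\ge0}(-1)^k\sqrt{\tfrac{s^{2k+1}}{(2k+1)!}}\;v_j^{\otimes(2k+1)}.
\]
Then $\|u_i'\|^2=\|v_j'\|^2=\sum_{k\ge0}\tfrac{s^{2k+1}}{(2k+1)!}=\sinh s=1$, so these are unit vectors, while the alternating signs on the $v_j'$ give $\langle u_i',v_j'\rangle=\sin\!\big(s\langle u_i,v_j\rangle\big)$. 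Since $|s\langle u_i,v_j\rangle|\le s<\pi/2$, arcsin inverts sin on this range, so the Grothendieck identity applied to $u_i',v_j'$ reads $\Exp_g\big[\sgn\langle g,u_i'\rangle\,\sgn\langle g,v_j'\rangle\big]=\tfrac{2s}{\pi}\langle u_i,v_j\rangle$. Substituting, interchanging the finite sum with the expectation, and using that for each fixed $g$ the vectors $x=(\sgn\langle g,u_i'\rangle)_i$ and $y=(\sgn\langle g,v_j'\rangle)_j$ lie in $\pmset n$,
\[
\sum_{i,j}A_{ij}\langle u_i,v_j\rangle
=\frac{\pi}{2s}\,\Exp_g\Big[\sum_{i,j}A_{ij}\,\sgn\langle g,u_i'\rangle\,\sgn\langle g,v_j'\rangle\Big]
\le\frac{\pi}{2s}\,\|A\|_{\ell_\infty\to\ell_1},
\]
and taking the supremum over unit vectors gives the theorem with $K_G\le \pi/\big(2\ln(1+\sqrt2)\big)$.

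The only genuinely delicate point is the re-embedding together with the exact choice $s=\ln(1+\sqrt 2)$, which simultaneously forces the new vectors to have unit norm ($\sinh s=1$) and keeps $s$ below $\pi/2$ so that arcsin inverts sin; everything else — convergence of the direct sums, the dual description of $\|A\|_{\ell_\infty\to\ell_1}$, handling the measure-zero event that some $\langle g,u_i'\rangle$ vanishes, and swapping a finite sum with an expectation — is routine. One could instead attempt a more hands-on argument that rounds the optimal vectors directly with a single Gaussian hyperplane and absorbs the gap between $\arcsin$ and the identity map into an error term, but that route yields a worse constant; it is exactly the re-embedding that makes the linearization of the inner products \emph{exact}.
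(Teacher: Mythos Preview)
Your argument is correct: this is Krivine's rounding proof, and your execution of it is clean. The re-embedding is set up correctly (the alternating signs on the $v_j'$ components produce the sine series, $\sinh s=1$ gives unit norm, and $s=\ln(1+\sqrt2)<\pi/2$ makes $\arcsin$ invert $\sin$), and the remaining steps are indeed routine.

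However, the paper does not prove Theorem~\ref{thm:grothendieck} at all. It states the inequality as a classical result and simply refers the reader to~\cite{Alon:2006} for elementary proofs, then uses it as a black box (in particular, as input to the proof of Theorem~\ref{thm:Grothfact}). So there is nothing to compare at the level of technique: you have supplied a full self-contained proof with the explicit bound $K_G\le \pi/(2\ln(1+\sqrt2))$, whereas the paper treats the result as background. Your contribution is strictly more than what the paper does here; if anything, it makes the exposition more self-contained and pins down the constant that the paper only alludes to when citing the Braverman--Makarychev--Makarychev--Naor upper bound.
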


Elementary proofs of this theorem can be found in~\cite{Alon:2006}. The \emph{Grothendieck constant}~$K_G$ is the smallest real number for which Theorem~\ref{thm:grothendieck} holds true.
The problem of determining its exact value, posed in~\cite{Grothendieck:1953}, remains open.
The best lower and upper bounds $1.6769  \cdots\leq K_G < 1.7822 \cdots$ were proved by Davie and Reeds~\cite{Davie:1984, Reeds:1991}, and Braverman et al.~\cite{Braverman:2013}, resp.
The second version of Grothendieck's inequality is as~follows.

\begin{theorem}[Grothendieck]\label{thm:Grothfact}
	For every positive integer~$n$ and matrix~$A\in \R^{n\times n}$, there exist $u,v\in (0,1]^n$ such that $\|u\|_{2} = \|v\|_{2} = 1$ and such that the matrix
	\beq\label{eq:Grothfact}
	B  = \frac{1}{K_G}\, \Diag(u)^{-1}A\Diag(v)^{-1}
	\eeq
	satisfies
	$\|B\| \leq \|A\|_{\ell_\infty\to\ell_1}$,
	where $\Diag(w)$ denotes the square diagonal matrix whose diagonal is~$w$.
\end{theorem}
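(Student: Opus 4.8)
The plan is to derive the factorization statement from the bilinear form of Grothendieck's inequality (Theorem~\ref{thm:grothendieck}) by a duality/separation argument. Write $\|A\|_{\ell_\infty\to\ell_1} = K$. The target is a pair of positive weight vectors $u,v$ normalized in $\ell_2$ such that $\Diag(u)^{-1}A\Diag(v)^{-1}$ has operator norm at most $K_G K$, i.e.\ for all unit vectors $s,t\in B_2^n$,
\[
\sum_{i,j=1}^n A_{ij}\frac{s_i}{u_i}\frac{t_j}{v_j}\le K_G K.
\]
Substituting $a_i = s_i/u_i$, $b_j = t_j/v_j$, this is equivalent to asking that $\sum_{i,j}A_{ij}a_ib_j \le K_G K$ whenever $\sum_i u_i^2 a_i^2\le 1$ and $\sum_j v_j^2 b_j^2\le 1$. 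So the whole theorem reduces to finding probability-like weights $(u_i^2)_i$ and $(v_j^2)_j$ (each summing to $1$) that simultaneously dominate every ``bad'' test pair.

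First I would set up the minimax framework. For each pair of sign vectors — more precisely, for each pair of unit vectors $x,y$ arising from the Grothendieck bound — the inequality $\sum_{i,j}A_{ij}\langle u_i,v_j\rangle \le K_G K$ from Theorem~\ref{thm:grothendieck} (applied with the Hilbert-space vectors that witness the supremum) says, after expanding the inner products coordinate-wise, that a certain average of quadratic expressions in the coordinates is controlled. The standard move (this is exactly the argument behind the classical ``Grothendieck factorization''; see e.g.\ Pisier's book) is: consider the convex set of functions of the form $(i\mapsto \xi_i^2, j\mapsto \eta_j^2)$ coming from feasible test vectors $\xi,\eta$ with $\sum A_{ij}\xi_i\eta_j$ close to its maximum, and apply a Hahn--Banach / minimax separation to produce a single pair of dominating measures $p\in\Delta_n$, $q\in\Delta_n$. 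Concretely, one considers the function
\[
(p,q)\ \longmapsto\ \sup_{\xi,\eta}\Big(\sum_{i,j}A_{ij}\xi_i\eta_j - \tfrac{K_GK}{2}\sum_i p_i\xi_i^2 - \tfrac{K_GK}{2}\sum_j q_j\eta_j^2\Big),
\]
which is convex in $(p,q)$ and concave (after the right normalization — here one uses the AM--GM trick $\xi_i\eta_j \le \frac12(p_i\xi_i^2/\lambda + \lambda\eta_j^2/q_j)$ type bounds) in the auxiliary variables, so Sion's minimax theorem applies; Theorem~\ref{thm:grothendieck} guarantees that the min over $(p,q)$ of the inner sup is $\le 0$. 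Taking the optimal $(p,q)$ and setting $u_i = \sqrt{p_i}$, $v_j=\sqrt{q_j}$ gives the desired bound $\|B\|\le K$ with $B$ as in \eqref{eq:Grothfact}.

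The one genuinely delicate point, and the step I expect to be the main obstacle, is arranging that the weights are \emph{strictly positive} so that $\Diag(u)^{-1}$ and $\Diag(v)^{-1}$ make sense (the statement asks for $u,v\in(0,1]^n$). A raw minimax argument only yields $p,q$ in the closed simplex, possibly with zero coordinates. I would handle this by a standard perturbation: replace $A$ by $A$ applied in the ambient norm but run the argument with the slightly enlarged constraint that forces a uniform lower bound $\varepsilon/n$ on each coordinate (equivalently, mix $p,q$ with the uniform distribution), obtaining weights $u^{(\varepsilon)},v^{(\varepsilon)}$ with $\|B^{(\varepsilon)}\|\le K + O(\varepsilon)$; since we only need $\|B\|\le \|A\|_{\ell_\infty\to\ell_1}$ and not equality, and since Grothendieck's constant $K_G$ already has slack built in (the inequality in Theorem~\ref{thm:grothendieck} is not tight for generic $A$), one can absorb the $O(\varepsilon)$ loss — alternatively, one observes that if some coordinate $u_i=0$ is forced, the corresponding row of $A$ must be zero and that coordinate can be dropped and reinstated with an arbitrary tiny positive weight without affecting $\|B\|$. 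Finally I would normalize so that $\|u\|_2=\|v\|_2=1$ and rescale $u,v$ to lie in $(0,1]^n$ (dividing by $\max_i u_i$, etc., which only helps the operator-norm bound since it scales $B$ down), completing the proof.
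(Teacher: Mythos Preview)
Your plan is correct and follows the same route as the paper: start from the Hilbert-space form of Grothendieck's inequality (Theorem~\ref{thm:grothendieck}), combine it with the AM--GM bound $\langle x_i,y_j\rangle\le\tfrac12(\|x_i\|^2+\|y_j\|^2)$, and then run a separation/duality argument to extract the weight vectors. The paper packages the duality step as a Hahn--Banach separation in $\R^{n\times n}$ of a convex cone (built from the functions $(i,j)\mapsto\|x_i\|^2+\|y_j\|^2-2\sum_{k,\ell}M_{k\ell}\langle x_k,y_\ell\rangle$) from the open negative orthant, obtaining a nonnegative matrix whose row and column marginals give $u_i^2,v_j^2$; your Sion-minimax over the product of simplices is a standard equivalent formulation of the same idea.

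Two technical points in your write-up need fixing. First, the function you propose for Sion, $\sup_{\xi,\eta}\big(\sum_{i,j}A_{ij}\xi_i\eta_j-\tfrac{K_GK}{2}\sum_ip_i\xi_i^2-\tfrac{K_GK}{2}\sum_jq_j\eta_j^2\big)$, is \emph{not} concave in $(\xi,\eta)$ because the bilinear term is indefinite, so Sion does not apply as written; the cure is either to let the inner player randomize over test pairs (making the inner objective affine) or to revert to the cone-separation formulation the paper uses. Second, your final normalization paragraph is confused and unnecessary: once $p$ is a probability vector and $u_i=\sqrt{p_i}$, you already have $\|u\|_2=1$ and $u_i\in[0,1]$; dividing $u$ by $\max_i u_i$ would make $\Diag(u)^{-1}$ \emph{larger}, hence increase $\|B\|$, contrary to what you say. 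Your perturbation idea for strict positivity is fine, but the claim that $u_i=0$ forces row~$i$ of $A$ to vanish is not justified.
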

\medskip

\paragraph{Our contribution} 
The first (standard) version of Grothendieck's inequality (Theorem~\ref{thm:grothendieck}) easily implies that any matrix~$A$ such that $\|A\|_{\ell_\infty\to\ell_1} \leq 1$ has completely bounded norm at most~$K_G$. Combing this fact with our Theorem~\ref{thm:main} and Lemma~\ref{lem:decoupling}, one quickly retrieves Theorem~\ref{thm:AAIKS}.
However, Theorem~\ref{thm:main} is based on the rather deep Theorem~\ref{thm:SC-factor}.
We observe that Theorem~\ref{thm:AAIKS} also follows readily from the much simpler Theorem~\ref{thm:Grothfact} alone (proved below for completeness), after one assumes that~$p$ is a quadratic form as above.

Indeed, Theorem~\ref{thm:Grothfact} gives unit vectors~$u,v$ such that the matrix~$B$ as in~\eqref{eq:Grothfact} has (operator) norm at most~1.
Unitary matrices have norm exactly~1 and of course represent the type of operation a quantum algorithm can implement.
Moreover, since $u,v$ are unit vectors, they represent $(\log n)$-qubit quantum states.
Using the fact that for $w,z\in \R^n$, we have $\Diag(w)z = \Diag(z)w$, we get the following  \emph{factorization} formula (not unlike the one of Corollary~\ref{cor:CS-diag}, which is of course no coincidence):
\beq\label{eq:factor}
\frac{x^{\mathsf T}Ax}{K_G} 
=
x^{\mathsf T}\Diag(u) B\Diag(v) x
=
u^{ T}\Diag(x) B\Diag(x) v.
\eeq
If we assume for the moment that the matrix~$B$ actually is unitary,
then the right-hand side of~\eqref{eq:factor} suggests the simple one-query quantum algorithm described in Figure~\ref{fig:circuit1}.

\begin{figure}[!h]
	\begin{center}
		\includegraphics[width = 10cm]{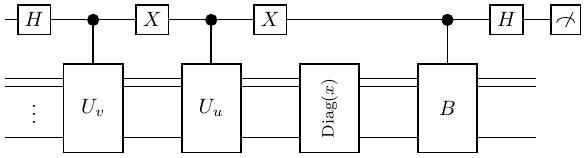}
	\end{center}
	\caption{Let $U_u,U_v$ be unitaries that have $u,v$ as their first columns, respectively.
		The algorithm initializes a $(1 + \log n)$-qubit register in the all-zero state,
		transforms this state into the superposition $\tfrac{1}{\sqrt{2}}(e_1	\otimes u+  e_2\otimes v)$,
		queries the input~$x$ via the unitary~$\Diag(x)$ applied to the $(\log n)$-qubit register, applies a controlled-$B$, and
		finishes by
		measuring the first qubit in the Hadamard basis.
	}\label{fig:circuit1}
\end{figure}

Using~\eqref{eq:factor}, we observe that the algorithm returns zero with probability

\begin{align*}
\frac{1}{2} + \frac{1}{2} \big\langle \Diag(u)x, B\Diag(v)x\big\rangle =
\frac 12+ \frac{x^{\mathsf T}Ax}{2K_G},
\end{align*}

Now, it is clear that the the expected value of the measurement result is precisely $p(x)/K_G$, giving Theorem~\ref{thm:AAIKS} with $C = 1/K_G$.
In case~$B$ is not unitary, one can use the same argument as in the final step of the proof of Theorem~\ref{thm:main}.

\subsection{Factorization version of Grothendieck's inequality}
For completeness and because of its relevance to Theorem~\ref{thm:AAIKS}, we here give a proof of Theorem~\ref{thm:Grothfact}. 
The proof relies on the standard version of Grothendieck's inequality (Theorem~\ref{thm:grothendieck}).
In addition, the proof makes use of the following version of the Hahn--Banach theorem~\cite[Theorem 3.4]{rudin:1991}. 

\begin{theorem}[Hahn--Banach separation theorem]\label{thm:hbseparation}
Let $C,D\subseteq \R^n$ be convex sets and let~$C$ be algebraically open.
Then the following are equivalent:
\begin{itemize}
\item The sets~$C$ and~$D$ are disjoint.
\item There exists a vector $\lambda\in\R^n$ and a constant~$\alpha\in \R$ such that $\langle \lambda, c\rangle < \alpha$ for every $c\in C$ and $\langle \lambda, d\rangle \geq \alpha$ for every $d\in D$.
\end{itemize}
Morever, if~$C$ and~$D$ are convex cones\index{Convex cone},\footnote{A convex cone $\mathcal K$ is a set that satisfies: $(i)$ for every $x\in \mathcal K$ and $\lambda>0$, we have $\lambda x\in \mathcal K$ and $(ii)$ for every $x,y \in \mathcal K$, we have $x+y\in \mathcal K$.} we may take~$\alpha = 0$.
\end{theorem}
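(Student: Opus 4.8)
The statement is an "if and only if" plus a "moreover" clause. The direction that a separating functional forces disjointness is immediate (a common point would satisfy $\langle\lambda,x\rangle<\alpha$ and $\langle\lambda,x\rangle\geq\alpha$ simultaneously), so the plan is to concentrate on the converse together with the cone refinement. First I would dispose of the degenerate cases: if $C=\emptyset$ or $D=\emptyset$ the sets are trivially disjoint and $\lambda=0$, $\alpha=0$ works vacuously. Assuming both nonempty, the key reduction is to pass to the difference set $E=C-D=\{c-d:c\in C,\ d\in D\}\subseteq\R^n$. Then $E$ is convex (a difference of convex sets), and since $C$ is nonempty and algebraically open --- hence topologically open in $\R^n$, using the standard convexity fact that once the algebraic interior of a convex subset of $\R^n$ is nonempty it coincides with the topological interior --- each translate $C-d$ is open and so $E=\bigcup_{d\in D}(C-d)$ is open. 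Finally, $C\cap D=\emptyset$ is equivalent to $0\notin E$. Thus the whole theorem reduces to the core assertion: \emph{a nonempty open convex set $E\subseteq\R^n$ with $0\notin E$ is contained in an open half-space $\{x:\langle\lambda,x\rangle<0\}$ for some $\lambda\neq0$.}

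For this core assertion I would use the Minkowski-gauge argument. Fix $x_0\in E$; then $E-x_0$ is an open convex neighborhood of the origin, and its gauge $p(y)=\inf\{t>0:y/t\in E-x_0\}$ is a finite sublinear functional on $\R^n$ with $E-x_0=\{y:p(y)<1\}$. Since $0\notin E$, i.e. $-x_0\notin E-x_0$, we get $p(-x_0)\geq1$. Define a linear functional on $\mathrm{span}(x_0)$ by prescribing its value at $x_0$ to be $-p(-x_0)$; a one-line check using $p(x_0)+p(-x_0)\geq p(0)=0$ shows it is dominated by $p$ on that line. Now invoke the analytic Hahn--Banach extension theorem to extend it to a linear functional $\ell$ on all of $\R^n$ with $\ell\leq p$ everywhere, and write $\ell(\cdot)=\langle\lambda,\cdot\rangle$ via Riesz. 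Then for every $e\in E$ we have $\ell(e)-\ell(x_0)=\ell(e-x_0)\leq p(e-x_0)<1$, hence $\ell(e)<1+\ell(x_0)=1-p(-x_0)\leq0$; and $\lambda\neq0$ since $\ell(-x_0)=p(-x_0)\geq1$. This is exactly the desired half-space.

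It then remains to translate back. From $\langle\lambda,c-d\rangle<0$ for all $c\in C$, $d\in D$ we obtain $\langle\lambda,c\rangle<\langle\lambda,d\rangle$ for all such pairs; set $\alpha=\inf_{d\in D}\langle\lambda,d\rangle$, so that $\langle\lambda,d\rangle\geq\alpha$ for every $d\in D$ and $\langle\lambda,c\rangle\leq\alpha$ for every $c\in C$. The inequality on $C$ is in fact strict: $C$ is open and $\lambda\neq0$, so no $c\in C$ can realize the value $\alpha$ --- if it did, nearby points of $C$ would give values exceeding $\alpha$. For the "moreover" clause, note that if $C$ and $D$ are convex cones then so is $E=C-D$, and the core assertion still gives $\langle\lambda,c-d\rangle<0$ on $E$; testing with $tc\in C$ and letting $t\to\infty$ forces $\langle\lambda,c\rangle\leq0$, and testing with $td\in D$ and letting $t\to\infty$ forces $\langle\lambda,d\rangle\geq0$, so $\alpha=0$ is a valid choice (strictness on $C$ as before).

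The one genuinely non-trivial ingredient, and the step I expect to be the main obstacle to write out in full, is the analytic Hahn--Banach extension producing $\ell$ from $p$: in $\R^n$ it is proved by induction on the codimension of the subspace, the inductive step amounting to the observation that when enlarging a $p$-dominated functional on $Y$ to $Y\oplus\R z$ one must choose $\ell(z)$ inside the interval $\big[\sup_{y\in Y}(\ell(y)-p(y-z)),\ \inf_{y\in Y}(p(y+z)-\ell(y))\big]$, which is nonempty precisely because $p$ is sublinear and $\ell$ is linear. A secondary technical point is the convexity fact invoked in the reduction --- that an algebraically open convex subset of $\R^n$ is topologically open --- which can be derived from the remark that a core point of a convex set permits movement in all $n$ coordinate directions, forcing the affine hull to be $\R^n$ and hence the set to have nonempty interior. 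Everything else is routine bookkeeping.
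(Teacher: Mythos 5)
The paper does not prove this statement; it is quoted from Rudin (Theorem~3.4 of \emph{Functional Analysis}), so there is no internal proof to compare against. Your argument is the standard one — reduce to separating the open convex set $E=C-D$ from the origin, build the Minkowski gauge of $E-x_0$, dominate a linear functional on $\mathrm{span}(x_0)$ by it, extend via analytic Hahn--Banach, and translate back — and it is essentially the proof Rudin gives, so it is correct in substance: the gauge is finite and sublinear because $E-x_0$ is an absorbing convex neighborhood of $0$, the domination check on the line is right, $p(-x_0)\ge 1$ correctly encodes $0\notin E$, the strictness on $C$ via openness of $C$ and $\lambda\ne 0$ is right, and the scaling argument for the cone refinement is right. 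One genuine (if peripheral) slip: in the degenerate case $D=\emptyset$ with $C$ nonempty, your choice $\lambda=0,\alpha=0$ does not satisfy $\langle\lambda,c\rangle<\alpha$; you need $\alpha>0$ there, and in the cone version (where $\alpha=0$ is forced) the statement can actually fail for, say, $C=\R^n$, $D=\emptyset$ — so the theorem should really be read with both sets nonempty, as it is in the paper's application. Also, your closing justification that a core point forces full affine hull is not by itself enough to conclude the point is interior; the clean argument is that a core point lies in the interior of the convex hull of the $2n$ points $x\pm t_ie_i\in A$, which is the standard $\mathrm{core}=\mathrm{int}$ fact for convex sets in $\R^n$ that you invoke earlier. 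Neither issue affects the use of the theorem in the proof of Theorem~\ref{thm:Grothfact}.
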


\begin{proof}[Proof of Theorem~\ref{thm:Grothfact}]
Let $M=A/(K_G \|A\|_{\ell_\infty\rightarrow \ell_1})$.  By  Theorem~\ref{thm:grothendieck} (the standard Grothendieck inequality), we have that

\beqn
\sum_{i,j=1}^nM_{ij}\langle x_i,y_j\rangle
\leq 1
\eeqn
for all vectors~$x_i,y_j$ with Euclidean norm at most~1.
Then, for arbitrary vectors $x_i,y_j$, we have
\beq\label{eq:normalizingandamgm}
\sum_{i,j=1}^nM_{ij}\langle x_i,y_j\rangle
\leq \max_{i,j\in[n]}\|x_i\|\|y_j\|
\leq
\frac{1}{2}\max_{i,j\in[n]}(\|x_i\|^2 + \|y_j\|^2),
\eeq
where the second inequality is by AM-GM inequality. Define the set $K\subseteq \R^{n\times n}$ by
\beqn
K
=
\left\{
\Big(
\|x_i\|^2 + \|y_j\|^2
-
2\sum_{k,\ell=1}^nM_{k\ell}\langle x_k,y_\ell\rangle
\Big)_{i,j=1}^n
\st
d\in\N,\:\:
x_i,y_j\in \R^d
\right\}.
\eeqn

We claim that~$K$ is a convex cone. Observe that for every $t\in \R_+$ and matrix $Q\in K$ given by vectors $x_i,y_j$, the vectors $x'_i = \sqrt{t}x_i$ and $y_j' = \sqrt{t}y_j$ similarly define~$tQ$, and so~$K$ is a cone.
We now show that $K$ is a convex set. 
Let $Q, Q'\in K$ be specified by $x_i,y_j$ and $x_i', y_j'$ respectively.  Then, for any $\lambda\in [0,1]$, the convex combination $\lambda Q + (1 - \lambda)Q'$ also belongs to~$K$, as it can be specified by the  vectors $(\sqrt{\lambda} x_i, \sqrt{1 - \lambda}x_i'),(\sqrt{\lambda} y_j, \sqrt{1 - \lambda}y_j')$.

	Additionally, it follows from Eq.~\eqref{eq:normalizingandamgm} that~$K$ is disjoint from the open convex cone $\R_{<0}^{n\times n}$ of matrices with strictly negative entries. By Theorem~\ref{thm:hbseparation} (the Hahn--Banach separation theorem), we conclude that there exists a nonzero matrix~$L\in \R^{n\times n}$ such that $\langle L, Q\rangle \geq0$ for every $Q\in K$ and $\langle L, N\rangle < 0$ for every $N\in \R_{<0}^{n\times n}$.
In particular, the second inequality implies that $L\in \R_+^{n\times n}$. Let $P=L/\sum_{ij} L_{ij}$, so that $\{P_{ij}\}_{i,j=1}^n$ defines a probability distribution over $[n]^2$.
Then, for any $Q\in K$,
\begin{align*}
0 &\leq \langle P,Q\rangle\\
&= \sum_{i,j=1}^n P_{ij}(\|x_i\|^2 + \|y_j\|^2) - 2\sum_{k,\ell=1}^nM_{k\ell}\langle x_k,y_\ell\rangle\\
&= \sum_{i=1}^n\sigma_i\|x_i\|^2 + \sum_{j=1}^n\mu_j\|y_j\|^2 - 2\sum_{k,\ell=1}^nM_{k\ell}\langle x_k,y_\ell\rangle,
\end{align*}
where $\sigma_i = P_{i1} + \cdots + P_{in}$ and $\mu_j = P_{1j} + \cdots + P_{nj}$. Observe that $\sigma_i,\mu_j$ are strictly positive because $P_{ij}>0$.
Rearranging the inequality above and using bi-linearity, it follows that for every $\lambda >0$, we~have
\begin{align}
\label{eq:lambdaub}
2\sum_{k,\ell=1}^nM_{k\ell}\langle x_k,y_\ell \rangle
&=
2\sum_{k,\ell=1}^nM_{k\ell}\langle \lambda x_k,\lambda^{-1}y_\ell \rangle \notag\\
&\leq
\lambda^2\sum_{i=1}^n\sigma_i\|x_i\|_2^2 + \lambda^{-2}\sum_{j=1}^n\mu_j\|y_j\|_2^2.
\end{align}
Setting
\beqn
\lambda = 
\left(
\frac{\sum_{j=1}^n\mu_j\|y_j\|_2^2}{\sum_{i=1}^n\sigma_i\|x_i\|_2^2}
\right)^{1/4}
\eeqn
in Eq.~\eqref{eq:lambdaub}, we find that
\beqn
2\sum_{k,\ell=1}^nM_{k\ell}\langle x_k,y_\ell \rangle
\leq
\Big(\sum_{i=1}^n\sigma_i\|x_i\|_2^2\Big)^{1/2}
\Big(\sum_{j=1}^n\mu_j\|y_j\|_2^2 \Big)^{1/2}.
\eeqn
In particular, for the case where $x_k,y_\ell\in \R$, i.e., the scalar case, we have
\beqn
x^{\mathsf T}My \leq \|\diag(\sigma)^{1/2}x\|_2 \|\diag(\mu)^{1/2}y\|_2.
\eeqn
This implies 
$$
x^{\mathsf T} \Big( \Diag(\sigma)^{-1/2}M\Diag(\mu)^{-1/2}\Big) y\leq \|x\|_2\cdot \|y\|_2,
$$
which in particular implies that $\|\Diag(\sigma)^{-1/2}M\Diag(\mu)^{-1/2}\|\leq 1$. Using the definition of $M=A/(K_G \|A\|_{\ell_\infty\rightarrow \ell_1})$, we~have
$$
\|\Diag(\sigma)^{-1/2}A\Diag(\mu)^{-1/2}\|\leq K_G \|A\|_{\ell_\infty\rightarrow \ell_1}.
$$
The theorem follows by letting $u_i=\sqrt{\sigma_i}$, $v_i=\sqrt{\mu_i}$ for every $i\in[n]$.
\end{proof}

\section*{Acknowledgments}
J.B.\ thanks Farrokh Labib for useful discussions. S.A. thanks Tom Bannink, Nathaniel Johnston for useful discussions and Ronald de Wolf for helpful discussions and encouragement. 
We also thank the anonymous referees from QIP, ITCS, and SICOMP for their helpful comments.
We thank Andris Ambainis and Martins Kokainis for pointing out a minor inconsequential error in an earlier version of this paper regarding the normalization of the tensor~\eqref{eq:Tpdef}.

\bibliographystyle{alpha}
\bibliography{qalg_poly}

\end{document}